\documentclass[runningheads,a4paper]{llncs}

\usepackage[utf8]{inputenc}
\usepackage{color}
\usepackage[usenames,dvipsnames]{xcolor}

\usepackage{centernot}
\usepackage{hhline}

\usepackage[mathcal]{euscript}% Better mathcal font

\usepackage{siunitx}
\usepackage{textcomp}
\usepackage{caption}
\usepackage{pst-func}
\usepackage{amssymb}
\usepackage{graphicx}
\usepackage{paralist}
\usepackage{environ}
\usepackage{pgfplots}
\usepackage{tikz}
\usetikzlibrary{shadows}
\usepackage{mathtools}
\usepackage{array,multirow}
\usepackage[hyphens]{url}
\usepackage{cite}
\usepackage{pgfplotstable}
\usepackage{threeparttable}

\usepackage{etoolbox}
\usepackage{arydshln}

\usepackage{xspace} % Smart spaces after commands
 
\usepackage{dirtytalk} % for \say command
\usepackage{stmaryrd} % for llbracket and rrbracket\textbf{\textbf{}}

% For custom alignment in itemize and description
\usepackage[inline,shortlabels]{enumitem} 
\newlist{inlinelist}{enumerate*}{1}
\setlist*[inlinelist,1]{%
  label=(\roman*),
}

\usepackage{xifthen}  % Extended conditional commands
\usepackage{ifthen}

\usepackage{nicefrac}

\usepackage[hidelinks]{hyperref}
\usepackage[hyphenbreaks]{breakurl}
\usepackage{cleveref}
\usepackage{bigints}

\usetikzlibrary{pgfplots.dateplot}
\usetikzlibrary{matrix,chains,positioning,decorations.pathreplacing}
\usetikzlibrary{patterns,calc,fit,arrows}
\usetikzlibrary{shapes.geometric}

\usepackage[final,margin,index]{fixme} % Simplified management of FIXME's
\fxusetheme{color}

\FXRegisterAuthor{bart}{anbart}{\color{magenta} {\underline{bart}}}
\FXRegisterAuthor{zun}{anzun}{\color{red} {\underline{zun}}}
\FXRegisterAuthor{alc}{analc}{\color{blue} {\underline{alc}}}
\FXRegisterAuthor{and}{anand}{\color{ForestGreen} {\underline{abb}}}

\hypersetup{
  breaklinks   = true,
  colorlinks   = true, %Colours links instead of ugly boxes
  urlcolor     = blue, %Colour for external hyperlinks
  linkcolor    = blue, %Colour of internal links
  citecolor    = red   %Colour of citations
}
\hypersetup{final} % Needed to generate hyperlinks even in draft mode (ERROR IN MAC)

\usepackage{listings}
% \usepackage{minted} % Source code highlighting
% \makeatletter	%minted fix for background%
% \patchcmd{\minted@colorbg}{\noindent}{\medskip\noindent}{}{}
% \apptocmd{\endminted@colorbg}{\par\medskip}{}{}
% \makeatother
% \usemintedstyle{autumn}
\definecolor{listingBG}{HTML}{FFFFCB}%
\definecolor{listingFrame}{HTML}{BBBB98}%
\definecolor{listingLineno}{rgb}{0.5,0.5,1.0}%
% \textcolor[rgb]{0.5,0.5,1.0}{\tiny\oldstylenums{\arabic{FancyVerbLine}}}}

  % to use bold and italic within code

\definecolor{LightGrey}{rgb}{0.975,0.975,0.975}
% \definecolor{keyword}{HTML}{7F0055}

% Packages initialization
%
\lstset{
	nolol=true,
	breaklines=true,
	xleftmargin=3pt,
	xrightmargin=3pt,
	framexleftmargin=5pt,
	framextopmargin=2pt,
	% framexrightmargin=10pt,
	framexbottommargin=2pt, 
	frame=ltbr, framerule=0pt,
	showstringspaces=false,
	basicstyle=\fontseries{m}\scriptsize\ttfamily,
	% backgroundcolor=\color{LightGrey},
        numberstyle=\fontsize{4}{4}\color{NavyBlue}\ttfamily,
        stringstyle=\color{RoyalBlue}
}

\lstdefinelanguage{asc}{
	commentstyle=\color{Gray},
	morecomment=[l]{//},
	morecomment=[s]{/*}{*/},
	classoffset=0,
        escapechar=\$,
	morekeywords={type,close,snd,rcv,amt,fee,fv,lv,asnd,arcv,aclose,xaid,aamt},
	keywordstyle=\color{Blue}\bfseries,
	classoffset=1,
	morekeywords={sig,versig,fun,unit,int,string,bool,address,uint},
	keywordstyle=\color{TealBlue},
	classoffset=2,
	morekeywords={BTC,true},
	keywordstyle=\color{Plum}\bfseries,
}

\lstdefinelanguage{btm}{
	commentstyle=\color{Gray},
	morecomment=[l]{//},
	morecomment=[s]{/*}{*/},
	classoffset=0,
        escapechar=\$,
	morekeywords={transaction,input,output,key,network,package},
	keywordstyle=\color{Blue}\bfseries,
	classoffset=1,
	morekeywords={sig,versig,fun,unit,int,string,bool,address,uint},
	keywordstyle=\color{TealBlue},
	classoffset=2,
	morekeywords={BTC,true},
	keywordstyle=\color{Plum}\bfseries,
}

\lstdefinelanguage{solidity}{
	commentstyle=\color{Gray},
	morecomment=[l]{//},
	morecomment=[s]{/*}{*/},
	classoffset=0,
        escapechar=\$,
	morekeywords={struct,mapping,function,this,public,private,static,final,class,extends,switch,case,break,finally,try,catch,return,if,else,new},
	keywordstyle=\color{Blue}\bfseries,
	classoffset=1,
	morekeywords={unit,int,string,bool,address,uint},
	keywordstyle=\color{TealBlue},
	classoffset=2,
	morekeywords={ether,wei,finney,contract,send,throw,msg,sender,value},
	keywordstyle=\color{Plum}\bfseries,
}

\lstdefinelanguage{java}{
	escapechar=\$,
        commentstyle=\color{Gray},
	morecomment=[l]{//},
	morecomment=[s]{/*}{*/},
	morestring=[b]",
        classoffset=0,
	morekeywords={public,private,static,final,class,extends,switch,case,break,finally,try,catch,void,int,boolean,throws,throw,return,if,else,new},
	keywordstyle=\color{keyword}\bfseries
}
%%%%%%%%%%%%%%%%%%%%%%%%%%%%%%%%%%%%%%%%%%%%%%%%%%%%%%%%%%%%%%%%%%%%%%%%%%%%%
%               GENERAL MACROS
%%%%%%%%%%%%%%%%%%%%%%%%%%%%%%%%%%%%%%%%%%%%%%%%%%%%%%%%%%%%%%%%%%%%%%%%%%%%%

%%% If/then check for empty strings (requires xifthen for \ifempty)
\newcommand{\ifempty}[3]{%
  \ifthenelse{\isempty{#1}}{#2}{#3}%
}

\newcommand{\hidden}[1]{}

 % \ignorespacesafterend

% \newcommand{\shepherd}[1]{\textcolor{magenta}{#1}}

% \newcommand{\bvec}[1]{\boldmath\ensuremath{#1}}

% \newcommand{\eqdef}{\mathrel{\overset{\it def}{=}}}
\newcommand{\eqdef}{\mathrel{\triangleq}}

%\newcommand{\change}[2]{{\marginpar{{\color{red}\scriptsize\textbf{#1}}}}{{\color{red}{#2}}}}%
%\newcommand{\changeNoMargin}[1]{{\color{red}{#1}}}%
%\renewcommand{\change}[2]{#2}

%%%%%%%%%%%%%%%%%%%%%%%%%%%%%%%%%%%%%%%%%%%%%%%%%%%%%%%%%%%%%%%%%%%%%%%%%%%%%
%               MACROS FOR SETS AND SEQUENCES
%%%%%%%%%%%%%%%%%%%%%%%%%%%%%%%%%%%%%%%%%%%%%%%%%%%%%%%%%%%%%%%%%%%%%%%%%%%%%

% \renewcommand{\vec}[1]{\ensuremath{\mathit{\mathbf{#1}}}}
% \renewcommand{\vec}[1]{\ensuremath{\textit{\textbf{#1}}}}

\newcommand{\Real}[1]{\mathrm{Real}}

\newcommand{\codefont}{\fontsize{10}{10}\selectfont}
\newcommand{\code}[1]{{\tt\codefont {#1}}}

% for braces in alltt mode

% \newcommand{\mytilde}{\scalebox{0.85}{\url{~}}}

\newcommand{\eg}{e.g.\@\xspace}
\newcommand{\ie}{i.e.\@\xspace}

\newcommand{\keyterm}[1]{\textbf{\emph{#1}}}%

\newcommand{\emptyseq}{\varepsilon}

\newcommand{\fst}{\mathit{fst}}
\newcommand{\snd}{\mathit{snd}}

\newcommand{\den}[2][]{#2_{#1}} % denotations
\newcommand{\denV}[1][]{\den[#1]{\nu}}

%%%%%%%%%%%%%%%%%%%%%%%%%%%%%%%%%%%%%%%%%%%%%%%%%%%%%%%%%%%%%%%%%%%%%%%%%%%%%
%               MACROS FOR THEOREMS AND PROOFS
%%%%%%%%%%%%%%%%%%%%%%%%%%%%%%%%%%%%%%%%%%%%%%%%%%%%%%%%%%%%%%%%%%%%%%%%%%%%%

\newenvironment{proofof}[2][]{%
  \ifempty{#1}
  {\subsection*{Proof of~\Cref{#2}}}
  {\subsection*{Proof of~\Cref{#2} ({#1})}}
  \label{#2-proof}
  % Also re-print the theorem statement
  % FIXME: disabled due to apparent bug in thmools + hyperref (Ubuntu 12.04)
  % FIXME: it works without the final star, but references get messed up
  %\csname #1\endcsname
  % \begin{proof}
  }%
  % {\end{proof}}
  {}

%%%%%%%%%%%%%%%%%%%%%%%%%%%%%%%%%%%%%%%%%%%%%%%%%%%%%%%%%%%%%%%%%%%%%%%%%%%%%
%               MACROS FOR CRYPTO
%%%%%%%%%%%%%%%%%%%%%%%%%%%%%%%%%%%%%%%%%%%%%%%%%%%%%%%%%%%%%%%%%%%%%%%%%%%%%

% \newcommand{\sig}[2]{\mathit{sig}_{#1}({#2})}
% \newcommand{\ver}[3]{\mathit{ver}_{#1}({#2},{#3})}

\newcommand{\sig}[3][]{\mathit{sig}^{#1}_{#2}\ifempty{#3}{}{({#3})}}

\newcommand{\PK}[1]{k^p_{#1}}
\newcommand{\SK}[1]{k^s_{#1}}

\newcommand{\secret}[2]{s_{#1}^{#2}}
\newcommand{\hash}[2]{h_{#1}^{#2}}

%%%%%%%%%%%%%%%%%%%%%%%%%%%%%%%%%%%%%%%%%%%%%%%%%%%%%%%%%%%%%%%%%%%%%%%%%%%%%
%               MACROS FOR BITCOIN
%%%%%%%%%%%%%%%%%%%%%%%%%%%%%%%%%%%%%%%%%%%%%%%%%%%%%%%%%%%%%%%%%%%%%%%%%%%%%

\newcommand{\BTC}{\textup{%
  \leavevmode
  \vtop{\offinterlineskip %\bfseries
    \setbox0=\hbox{B}%
    \setbox2=\hbox to\wd0{\hfil\hskip-.03em
    \vrule height .3ex width .15ex\hskip .08em
    \vrule height .3ex width .15ex\hfil}
    \vbox{\copy2\box0}\box2}}}

\DeclareMathAlphabet{\mathbfsf}{\encodingdefault}{\sfdefault}{bx}{n}

\newcommand{\secteal}{\texttt{secteal}\xspace}% Smart contract language

%%%%%%%%%%%%%%%%%%%%%%%%%%%%%%%%%%%%%%%%%%%%%%%%%%%%%%%%%%%%%%%%%%%%%%%%%%%%%
%               MACROS FOR ALGORAND META-VARIABLES
%%%%%%%%%%%%%%%%%%%%%%%%%%%%%%%%%%%%%%%%%%%%%%%%%%%%%%%%%%%%%%%%%%%%%%%%%%%%%

%%% Identities

\def\pmvColor{\color{ForestGreen}}
\newcommand{\pmvFmt}[1]{{\pmvColor{\sf #1}}}
\newcommand{\pmv}[2][]{\pmvFmt{#2}_{\pmvColor{#1}}\xspace}
\newcommand{\pmvA}[1][]{\pmv[{#1}]{a}}
\newcommand{\pmvB}[1][]{\pmv[{#1}]{b}}
\newcommand{\pmvC}[1][]{\pmv[{#1}]{c}}

\newcommand{\pmvAS}[1][]{\pmv[{#1}]{A}}

\newcommand{\pmvAL}[1][]{\pmv[{#1}]{\mathcal{A}}}

% set of all identities
\newcommand{\PmvU}{{\pmvColor{\mathbb{K}}}}
\newcommand{\Adv}{\pmv{Adv}} % adversary

% runs
\newcommand{\runnameS}{\mathcal{R}}

\newcommand{\runS}[1][]{\runnameS_{#1}}

% strategies
\newcommand{\stratS}[1]{\Sigma_{#1}}

\newcommand{\stratSet}{\mathbf{\Sigma}} % set of strategies

%%% Values

\newcommand{\val}[2][]{#2_{#1}} % values
\newcommand{\valV}[1][]{\val[#1]{v}}

\newcommand{\wit}[2][]{#2_{#1}} % arguments (witnesses)
\newcommand{\witW}[1][]{\wit[#1]{w}}
\newcommand{\witWi}[1][]{\wit[#1]{w'}}
\newcommand{\witWL}[1][]{\wit[#1]{\mathcal{W}}}
\newcommand{\witWLi}[1][]{\wit[#1]{\mathcal{W}'}}
\newcommand{\witWLL}[1][]{\wit[#1]{\mathbf{W}}}
\newcommand{\witWLLi}[1][]{\wit[#1]{\mathbf{W}'}}

\def\timeColor{\color{black}}

\newcommand{\timeT}[1][]{\timeColor{r_{#1}}}
\newcommand{\timeTi}[1][]{\timeColor{r'_{#1}}}
\newcommand{\timeTmax}[1][]{\timeColor{r_{\it max}}}
\newcommand{\timeTinit}[1][]{\timeColor{r_{\it init}}}
\newcommand{\MaxTxnLife}{\timeColor{\Delta}_{max}}

\newcommand{\leaseMap}{f_{\textit{lx}}}
\newcommand{\leaseMapi}{f'_{\textit{lx}}}

\newcommand{\freezeMap}{f_{\textit{frz}}}
\newcommand{\freezeMapi}{f'_{\textit{frz}}}

% bart: removed parameter #3 (r)
\newcommand{\leaseUpdate}[3]{\textit{upd}({#1},{#2})}

\newcommand{\timeOk}[2]{{#1} \models {#2}}

%%% Actions

\def\actColor{\color{magenta}}
\newcommand{\actFmt}[1]{{\actColor{#1}}}
\newcommand{\act}[2][]{\actFmt{#2}_{\actColor{#1}}\xspace}
\newcommand{\actL}[1][]{\act[{#1}]{t}}
\newcommand{\actLi}[1][]{\act[{#1}]{t'}}

\newcommand{\actf}{\actFmt{\sf f}} % action field

\newcommand{\actLS}[1][]{\act[{#1}]{T}}
\newcommand{\actLiS}[1][]{\act[{#1}]{T'}}
\newcommand{\actLSdec}[2][]{{\actFmt{T}_{#1}^{\actColor{\it #2}}}} % with decoration

\newcommand{\actLL}[1][]{\act[{#1}]{\mathcal{T}}}
\newcommand{\actLLi}[1][]{\act[{#1}]{\mathcal{T}'}}

% set of all transactions
\newcommand{\ActU}{{\actFmt{\mathbb{T}}}}

\newcommand{\actPay}{\actFmt{\it pay}}
\newcommand{\actClose}{\actFmt{\it close}}
% payClose(x,y,z,v,tau) = < pay(x,y,v,tau) close(x,z,tau) >
\newcommand{\actGen}{\actFmt{\it gen}}
\newcommand{\actOptin}{\actFmt{\it optin}}
\newcommand{\actBurn}{\actFmt{\it burn}}
\newcommand{\actFreeze}{\actFmt{\it frz}}
\newcommand{\actUnfreeze}{\actFmt{\it unfrz}}
\newcommand{\actRevoke}{\actFmt{\it rvk}}
\newcommand{\actDelegate}{\actFmt{\it delegate}}

\newcommand{\actType}{{\actColor{\sf type}}}
\newcommand{\actSnd}{{\actColor{\sf snd}}}
\newcommand{\actRcv}{{\actColor{\sf rcv}}}
\newcommand{\actTok}{{\actColor{\sf asst}}}
\newcommand{\actAmt}{{\actColor{\sf val}}}
\newcommand{\actTf}{{\actColor{\sf fv}}}
\newcommand{\actTl}{{\actColor{\sf lv}}}
\newcommand{\actLease}{{\actColor{\sf lx}}}

\newcommand{\actDupSet}{{\actColor{T_{\actTl}}}}
\newcommand{\actDupSeti}{{\actColor{T'_{\actTl}}}}

%%% Labels

\newcommand{\labS}[1][]{\act[{#1}]{\ell}}

\newcommand{\labTime}{\timeColor{\checkmark}} % tick rule

\newcommand{\auth}[2]{{#1}{:}{#2}}
\newcommand{\authver}[2]{{#1} \models {#2}}

% set of all labels

%%% Assets

\def\tokColor{\color{MidnightBlue}}
\newcommand{\tokFmt}[1]{{\tokColor{#1}}}
\newcommand{\tok}[2][]{\tokFmt{#2}_{\tokColor{#1}}\xspace}
\newcommand{\tokT}[1][]{\tok[{#1}]{\tau}}
\newcommand{\tokTi}[1][]{\tok[{#1}]{\tau'}}

\newcommand{\tokMngr}[2][]{{\tokColor{\it auth}}\ifempty{#1}{}{({#2},{#1})}}
\newcommand{\mngrMap}{f_{\textit{asst}}}

\newcommand{\ALGO}{\mbox{\tokColor{\textup{\textsf{Algo}}}}}

\newcommand{\TokU}{{\tokColor{\mathbb{A}}}}

\newcommand{\tokval}[3][]{\mathit{val}^{#1}_{{#2}}\ifempty{#3}{}{({#3})}}

%%% Addresses

\def\addrColor{\color{ForestGreen}}
\newcommand{\addrFmt}[1]{{\addrColor{\mathbf{#1}}}}
\newcommand{\addr}[2][]{\addrFmt{#2}_{\addrColor{#1}}\xspace}
\newcommand{\addrX}[1][]{\addr[{#1}]{x}}

\newcommand{\addrY}[1][]{\addr[{#1}]{y}}

\newcommand{\addrInit}{\pmvA[0]}

% Basic addresses (identities or scripts)

% set of all addresses
\newcommand{\AddrU}{{\addrColor{\mathbb{X}}}}

%%% Signatures

%%% Accounts

\newcommand{\tokMapS}[1][]{{\tokColor{\sigma_{#1}}}}
\newcommand{\tokMapSi}[1][]{{\tokColor{\sigma'_{#1}}}}

\newcommand{\acct}[2]{{#1}{\tokColor{[{#2}]}}}

\newcommand{\acctOk}[1]{\models {#1}}

% set of all accounts

%%% Configurations

\def\confColor{\color{ForestGreen}}
\newcommand{\conf}[2][]{{\confColor{{#2}_{#1}}}}
\newcommand{\confG}[1][]{\conf[{#1}]{\Gamma}}
\newcommand{\confGi}[1][]{\conf[{#1}]{\Gamma'}}
\newcommand{\confGii}[1][]{\conf[{#1}]{\Gamma''}}
\newcommand{\confK}[1][]{\conf[{#1}]{K}}
\newcommand{\confKi}[1][]{\conf[{#1}]{K'}}
\newcommand{\confKii}[1][]{\conf[{#1}]{K''}}

\newcommand{\confNil}{\conf{0}}

\newcommand{\step}[1]{\xrightarrow{#1}_1}
\newcommand{\stepL}[1]{\xrightarrow{#1}}

% top-level step
\newcommand{\stepTL}[1]{\xRightarrow{#1}}

%%% Scripts

\def\scriptColor{\color{Plum}}
\newcommand{\script}[2][]{{\scriptColor{{\it #2}_{#1}}}}
\newcommand{\scriptE}[1][]{\script[{#1}]{e}}
\newcommand{\scriptEi}[1][]{\script[{#1}]{e'}}

% set of all scripts
\newcommand{\ScriptU}{{\scriptColor{\mathbb{S}}}}

\newcommand{\actE}[2][]{\script{\sf tx}\ifempty{#1}{}{({#1})}.{#2}}
\newcommand{\argE}[1]{\script{\sf arg}\ifempty{#1}{}{({#1})}}
\newcommand{\actLen}{\script{\sf txlen}}
\newcommand{\actPos}{\script{\sf txpos}}
\newcommand{\actId}[1][]{\script{\sf txid}\ifempty{#1}{}{({#1})}}
\newcommand{\versigname}{\script{\sf versig}}
\newcommand{\versig}[3]{{\versigname}({#1},{#2},{#3})}

\newcommand{\hashE}[1]{\script{\sf H}(#1)}

\newcommand{\op}[3]{\ensuremath{#2\mathbin\textsf{#1}#3}}
\newcommand{\notE}{{\sf not}~}

\newcommand{\ifE}[3]{\script{\mathsf{if}}~{#1}~\script{\mathsf{then}}~{#2}~\script{\mathsf{else}}~{#3}}

\newcommand{\andE}{~{\sf and}~}
\newcommand{\orE}{~{\sf or}~}

\newcommand{\true}{\mathit{true}}
\newcommand{\false}{\mathit{false}}

%%% Semantics

\newcommand{\irule}[2]{\dfrac{#1}{#2}}

\newcommand{\sem}[3]{\mbox{\ensuremath{\llbracket{#3}\rrbracket_{#1}^{#2}}}}

\newcommand{\hashSem}[1]{\ifempty{#1}{H}{H(#1)}}

\newcommand{\versigSem}[3]{{\mathit{ver}}_{#1}({#2},{#3})}

\newcommand{\mapstopart}{\rightharpoonup}

\newcommand{\bnfdef}{\Coloneqq}
\newcommand{\bnfmid}{\;|\;}

\newcommand{\nrule}[1]{{\scriptsize \textsc{#1}}}

\newcommand{\dom}[1]{\operatorname{dom} {#1}}

\newcommand{\Nat}{\mathbb{N}}
\newcommand{\Int}{\mathbb{Z}}
\newcommand{\UIntX}{\mathbb{U}_{64}}

\newcommand{\powset}[1]{2^{#1}}
\newcommand{\bind}[2]{{#1} \!\mapsto\! {#2}}
\newcommand{\setenum}[1]{\{#1\}}
\newcommand{\setcomp}[2]{\left\{{#1} \,\middle|\, {#2}\right\}}

\newcommand{\seq}[1]{\langle {#1} \rangle}
\newcommand{\seqat}[2]{{#1}.{#2}}

\newcommand{\setofseq}[1]{\mathit{set}({#1})}

% \newcounter{fact}
\spnewtheorem{notation}{Notation}{\bfseries}{\rmfamily}
% \spnewtheorem{counterexample}{Counterexample}{\itshape}{\rmfamily}

%\def\endex{{\unskip\nobreak\hfil
%\penalty50\hskip1em\null\nobreak\hfil$\blacklozenge$
%\parfillskip=0pt\finalhyphendemerits=0\endgraf}}

%\crefname{appendix}{appendix}{appendices}
%\Crefname{appendix}{Appendix}{Appendices}
%\crefname{notation}{notation}{notations}
%\Crefname{notation}{Notation}{Notations}

% colors
\definecolor{LightGrey}{rgb}{0.95,0.95,0.95}
\definecolor{keyword}{HTML}{7F0055}

%%%%%%%%%%%%%%%%%%%%%%%%%%%%%%
%% Tabular dashed line
\newlength\replength
\newcommand\repfrac{.1}

\setlength\replength{2.5pt}
\newcommand\rulewidth{.6pt}
\newcommand\tdashfill[1][\repfrac]{\cleaders\hbox to \replength{%
  \smash{\rule[\arraystretch\ht\strutbox]{\repfrac\replength}{\rulewidth}}}\hfill}

\newcommand\tdotfill[1][\repfrac]{\cleaders\hbox to \replength{%
  \smash{\raisebox{\arraystretch\dimexpr\ht\strutbox-.1ex\relax}{.}}}\hfill}

%% End tabular dashed line
%%%%%%%%%%%%%%%%%%%%%%%%%%%%%%

 % variables

 % transaction variable
 %change me

 % constants

\newtoggle{anonymous}
\togglefalse{anonymous}
\newtoggle{arxiv}
\toggletrue{arxiv}

\makeatletter
\renewcommand\paragraph{\@startsection{paragraph}{4}{\z@}%
  {2.25ex \@plus 1ex \@minus .2ex}%
  {-0.75em}%
  {\normalfont\normalsize\bfseries}}
\makeatother

\begin{document}

\title{A formal model of Algorand smart contracts}

\iftoggle{anonymous}{
  \author{\vspace{-15mm}}
  \institute{}
  \authorrunning{Anonymous Author(s)}
}{
\author{Massimo Bartoletti\inst{1}, Andrea Bracciali\inst{2}, Cristian Lepore\inst{2}, \\ Alceste Scalas\inst{3}, Roberto Zunino\inst{4}}
\authorrunning{Bartoletti et al.}

\institute{
Universit\`a degli Studi di Cagliari, Cagliari, Italy
\and
Stirling University, Stirling, UK
\and
Technical University of Denmark, Lyngby, Denmark
\and
Universit\`a degli Studi di Trento, Trento, Italy}
}

\Crefname{figure}{Fig.\@}{Figures}%
\Crefname{section}{\S\!\!\@}{Sections}%
\Crefname{appendix}{\S\!\!\@}{Appendices}%

\maketitle

\begin{abstract}
  We develop a formal model of Algorand stateless smart contracts
  (stateless ASC1).
  We exploit our model to prove fundamental properties of the Algorand blockchain, and to establish the security of some archetypal smart contracts. 
  While doing this, we highlight various design patterns supported by Algorand.
  We perform experiments to validate the coherence of our formal model w.r.t.~the actual implementation.
\end{abstract}

\section{Introduction}
\label{sec:intro}

Smart contracts are agreements between two or more parties
that are automatically enforced without trusted intermediaries. 
Blockchain technologies reinvented the idea of smart contracts, 
providing trustless environments where they are incarnated as computer programs.
However, writing secure smart contracts is difficult,
as witnessed by the multitude of attacks on smart contracts platforms 
(notably, Ethereum) --- and since smart contracts 
control assets, their bugs may directly lead to financial losses.

Algorand~\cite{Chen19tcs} is a late-generation blockchain that features a
set of interesting features, including high-scalability and a no-forking consensus 
protocol based on Proof-of-Stake~\cite{Alturki19fm}. Its smart contract layer (ASC1) aims to 
mitigate smart contract risks, and adopts a non-Turing-complete programming
model, natively supporting atomic sets of transactions and user-defined assets.
These features make it an intriguing smart contract platform to study.

The official specification and documentation of ASC1 consists of English prose 
and a set of templates to assist programmers in designing their contracts
\cite{AlgorandDeveloperDocs,AlgorandTEALDocs}.
This conforms to standard industry practices,
but there are two drawbacks:
\begin{enumerate}
\item%
  Algorand lacks a mathematical model of contracts and transactions
  suitable for formal reasoning on their behaviour, 
  and for the verification of their properties. 
  Such a model is needed to develop techniques and tools 
  to ensure that contracts are correct and secure;
\item%
  furthermore, %
  even preliminary informal reasoning on non-trivial smart contracts %
  can be challenging,
  as it may require, in some corner cases,
  to resort to experiments, or direct inspection of the platform source code.
\end{enumerate}

\noindent%
Given these drawbacks, we aim at developing a formal model that:
\begin{enumerate}[label={o\arabic*.},ref={o\arabic*}]
\item\label{item:intro-goal1}%
  is high-level enough to simplify the design of Algorand smart contracts
  and  enable formal reasoning about their security properties;
\item\label{item:intro-goal2}%
  expresses Algorand contracts in a simple declarative language,
  similar to PyTeal
  (the official Python binding for Algorand smart contracts)
  \cite{PyTeal};
\item\label{item:intro-goal3}%
  provides a basis for the automatic verification of Algorand smart contracts.% written in ASC1;
\end{enumerate}

\paragraph{Contributions.} This paper presents:
\begin{itemize}
\item a {\em formal model of stateless ASC1} providing a solid 
theoretical foundation to Algorand smart contracts (\S\ref{sec:model}). Such a model formalises both
  Algorand accounts and transactions
  (\Cref{sec:model:basics}\,--\,\Cref{sec:model:stepL}, \Cref{sec:model:auth}), and 
  smart contracts
  (\Cref{sec:model:contracts});
\item a validation of our model through experiments~\iftoggle{anonymous}{\cite{ASC1AnonymousRepo}}{\cite{ASC1PublicRepo}}
  on the Algorand platform;
\item the formalisation and proof of some {\em fundamental properties of the Algorand state machine}:
  no double spending, determinism, value preservation
  (\Cref{sec:asc1-properties});
\item an {\em analysis of Algorand contract design patterns}
  (\S\ref{sec:smart-contracts}),
  based on several non-trivial contracts
  (covering both standard use cases, and novel ones).
  Quite surprisingly, we show that stateless contracts are expressive enough 
  to encode arbitrary finite state machines;
\item the proof of relevant {\em security properties of smart contracts} in our model;
\item a {\em prototype tool} that compiles smart contracts 
  (written in our formal declarative language)
  into executable TEAL code (\S\ref{sec:tool}).
\end{itemize}

Our formal model is faithful to the actual ASC1 implementation;
by objectives \ref{item:intro-goal1}--\ref{item:intro-goal3},
it strives at being high-level and simple to understand,
while covering the most commonly used primitives and mechanisms of Algorand,
and supporting the specification and verification of non-trivial smart contracts (\S\ref{sec:smart-contracts}, \S\ref{sec:tool}).
To achieve these objectives,
we introduce minor high-level abstractions over low-level details: \eg, since 
TEAL code has the purpose of accepting or rejecting transactions,
we model it using expressions that evaluate to $\true$ or $\false$
(similarly to PyTEAL);
we also formalise different transaction types by focusing on their function, rather than their implementation.
Our objectives imply that we do \emph{not} aim at covering all the possible TEAL contracts with
bytecode-level accuracy, and our Algorand model is \emph{not}
designed as a full low-level formalisation of the behavior of the Algorand
blockchain.
We discuss the differences between our model and the actual Algorand platform in \S\ref{sec:diff}.
\iftoggle{arxiv}
{}
{Due to space constraints, we provide the proofs of our statements in a separate technical report~\cite{abs-2009-12140}.}

\section{The Algorand state machine}
\label{sec:model}

\begin{table}[t]
  \resizebox{\textwidth}{!}{
  \(
  \begin{array}{ll}
    \begin{array}{ll}
      \pmvA, \pmvB, \ldots \mbox{\hspace{35pt}} & \text{Users (key pairs) }
      \\
      \addrX, \addrY, \ldots \in \AddrU & \text{Addresses} 
      \\
      \tokT, \tokTi, \ldots \in \TokU & \text{Assets}
      \\
      \valV, \witW, \ldots \in 0..2^{64}-1 & \text{Values}
      \\
      \tokMapS, \tokMapSi \in \TokU \mapstopart \Nat & \text{Balances}
      \\
      \acct{\addrX}{\tokMapS} & \text{Accounts}
      \\
      \actL, \actLi, \ldots \in \ActU & \text{Transactions}
      \\
      \scriptE, \scriptEi, \ldots & \text{Scripts}
      \\
      \timeT, \timeTi \ldots \in \Nat & \text{Rounds}
    \end{array} & \hspace{0pt}
                  \begin{array}{ll}
                    \actDupSet \subseteq \ActU & \text{Transactions in last $\MaxTxnLife$ rounds}
                    \\
                    \mngrMap \in \TokU \rightarrow \AddrU \mbox{\hspace{0pt}} & \text{Asset manager}
                    \\
                    \leaseMap \in (\AddrU \times \Nat) \mapstopart \Nat \mbox{\hspace{5pt}} & \text{Lease map}
                    \\
                    \freezeMap \in \AddrU \mapstopart \powset{\TokU} \mbox{\hspace{0pt}} & \text{Freeze map}
                    \\
                    \confG, \confGi, \ldots & \text{Blockchain states}
                    \\
                    \acctOk{\tokMapS} & \text{Valid balance}
                    \\
                    \timeOk{\leaseMap,\timeT}{\actL} & \text{Valid time constraint}
                    \\
                    \authver{\witWL}{\actLL,i} & \text{Authorized transaction in group}
                    \\
                    \sem{\actLL,i}{\witWL}{\scriptE} & \text{Script evaluation}
                  \end{array}
  \end{array}
  \)}
  \vspace{5pt}
  \caption{Summary of notation.} 
  \label{fig:model:notation}
\end{table}

We present our formal model of the Algorand blockchain,
including its smart contracts (stateless ASC1), incrementally.
We first define the basic transactions that generate and transfer assets
(\Cref{sec:model:basics}--\Cref{sec:model:step}),
and then add atomic groups of transactions (\Cref{sec:model:stepL}),
smart contracts (\Cref{sec:model:contracts}),
and authorizations (\Cref{sec:model:auth}).
We discuss the main differences between our model and Algorand
in~\Cref{sec:conclusions}.
\iftoggle{anonymous}{\emph{Due to space limits, %, here we give a simplified presentation;
the full formalization is in \Cref{sec:app-model}.}}{}

\subsection{Accounts and transactions}
\label{sec:model:basics}

We use $\pmvA, \pmvB, \ldots$ to denote public/private key pairs
$(\PK{\pmvA},\SK{\pmvA})$.
Users interact with Algorand through pseudonymous identities, 
obtained as a function of their public keys. % $\PK{\pmvA}$.
Hereafter, we freely use $\pmvA$ to refer to the public or the private key of $\pmvA$,
or to the user associated with them, relying on the context to resolve the ambiguity.
The purpose of Algorand is to allow users to exchange \keyterm{assets} $\tokT, \tokTi, \ldots$
Besides the Algorand native cryptocurrency $\ALGO$, users can create custom assets.
% We use $n, \valV, \valVi, \timeT, \timeTi, \ldots$ to range over non-negative integer values. 

We adopt the following notational convention:
\begin{itemize}
\item%
  lowercase letters for single entities (\eg, a user $\pmvA$);
\item%
  uppercase letters for \emph{sets} of entities
  (\eg, a set of users $\pmvAS$);
\item%
  calligraphic uppercase letters for \emph{sequences} of entities
  (\eg, list of users $\pmvAL$).
\end{itemize}
Given a sequence $\mathcal{L}$,
we write $|\mathcal{L}|$ for its length,
$\setofseq{\mathcal{L}}$ for the set of its elements,
and $\seqat{\mathcal{L}}{i}$ for its  $i$\textsuperscript{th} element ($i \in 1..|\mathcal{L}|$);
$\emptyseq$ denotes the empty sequence.
We write:
\begin{itemize}
\item%
  $\setenum{\bind{x}{\valV}}$
  for the function mapping $x$ to $\valV$, %
  and having domain equal to $\setenum{x}$;
\item%
  $f \setenum{\bind{x}{\valV}}$
  for the function mapping $x$ to $\valV$, and $y$ to $f(y)$ if $y \neq x$;
\item%
  $f \setenum{\bind{x}{\bot}}$
  for the function undefined at $x$, and mapping $y$ to $f(y)$ if $y \neq x$.
\end{itemize}

\paragraph{Accounts.}
An \keyterm{account} is a deposit of one or more crypto-assets.
We model accounts as terms $\acct{\addrX}{\tokMapS}$, 
where $\addrX$ is an \keyterm{address} uniquely identifying the account, % and controlling actions on it,
and $\tokMapS$ is a \keyterm{balance}, \ie, a finite map from assets to non-negative 64-bit integers.
In the concrete Algorand, %
an address is a 58-characters word; %
for mathematical elegance, in our model
we represent an address as either:
\begin{itemize}
\item a \keyterm{single user} $\pmvA$. Performing transactions on $\acct{\pmvA}{\tokMapS}$ requires $\pmvA$'s authorization;
\item a pair $(\pmvAL,n)$, where $\pmvAL$ is a sequence of users, 
  and $1 \leq n \leq |\pmvAL|$,
  are \keyterm{multisig (multi-signature)} addresses.
  Performing transactions on $\acct{(\pmvAL,n)}{\tokMapS}$ requires that at least $n$ users out of those 
  in $\pmvAL$ grant their authorization;% by signing the transaction;
\footnote{%
  \label{footnote:single-multi-addr}%
    W.l.o.g., we consider a single-user address $\pmvA$
    equivalent to $(\pmvAL,n)$
    with $\pmvAL \!=\! \seq{\pmvA}$, $n\!=\!1$.%
  }%
\item a \keyterm{script}%
\footnote{We formalize scripts (\ie, smart contracts) later on, in~\Cref{sec:model:contracts}.}
$\scriptE$. Performing transactions on $\acct{\scriptE}{\tokMapS}$ requires $\scriptE$ to evaluate to $\true$.
\end{itemize}
% we will use $\addrX$ to refer to pairs $(\pmvAL,n)$ and scripts $\scriptE$.

Each balance is required to own $\ALGO$s, 
have at least 100000 micro-$\ALGO$s for each owned asset, and cannot control more than 1000 assets.
Formally, we say that $\tokMapS$ is a \keyterm{valid balance} (in symbols, $\acctOk{\tokMapS}$) when:%
\footnote{Since the codomain of $\tokMapS$ is $\Nat$, the balance entry $\tokMapS(\ALGO)$ represents micro-$\ALGO$s.}

\smallskip\centerline{\(%\[
\ALGO \in \dom(\tokMapS)
\; \land \;
\tokMapS(\ALGO) \geq 100000 \cdot |\dom(\tokMapS)|
\; \land \;
|\dom(\tokMapS)| \leq 1001
\)}\smallskip%\]

\paragraph{Transactions.}
Accounts can append various kinds of \keyterm{transactions} to the blockchain,
in order to, \eg, alter their balance or set their usage policies.
We model transactions as records with the structure in~\Cref{fig:model:act}.
Each transaction has a $\actType$, which determines which of the other fields are relevant.%
\footnote{%
  In Algorand, the actual behaviour of a transaction may depend on both its type and other conditions,
  \eg, which optional fields are set.
  For instance, $\actPay$  transactions may also close accounts
  if the \textsf{CloseRemainderTo} field is set.
  For the sake of clarity, in our model we prefer to use a richer set of types;
  see \Cref{sec:conclusions} for other differences.
  % To improve readability, our model uses different types for transactions for logically different actions.
} %
The field $\actSnd$ usually refers to the subject of the transaction
(\eg, the \emph{sender} in an assets transfer),
while $\actRcv$ refers to the \emph{receiver} in an assets transfer.
The fields $\actTok$ and $\actAmt$ refer, respectively, to the affected asset, and to its amount.
The fields $\actTf$ (``first valid''), $\actTl$ (``last valid'') and $\actLease$ (``lease'')
are used to impose time constraints.

\begin{figure}[t!]
  \centering
  %\resizebox{\textwidth}{!}{
    \begin{tabular}{l|l|l}
      % \multicolumn{1}{c}{\textbf{Type}} & \multicolumn{1}{c}{\textbf{Fields${}^*$}} & \multicolumn{1}{c}{\textbf{Behaviour}}
      % \\
      \hline
%       $\actOpen$ & $\actRcv,\actAmt$ & $\actSnd$ creates account $\actRcv$, transferring $\actAmt \, \ALGO$
%       \\
      $\actPay$ & $\actSnd, \actRcv,\actAmt,\actTok$ & $\actSnd$ transfers $\actAmt$ units of $\actTok$ to $\actRcv$ (possibly creating $\actRcv$)
      \\
      $\actClose$ & $\actSnd, \actRcv,\actTok$ & $\actSnd$ gives $\actTok$ to $\actRcv$ and removes it (if $\ALGO$, closes $\actSnd$)
      \\
      $\actGen$ & $\actSnd, \actRcv,\actAmt$ & $\actSnd$ mints $\actAmt$ units of a new asset, managed by $\actRcv$
      \\
      $\actOptin$ & $\actSnd, \actTok$ & $\actSnd$ opts in to receive units of asset $\actTok$
      \\
      $\actBurn$ & $\actTok$ & $\actTok$ is removed from the creator (if sole owner)
      \\
      $\actRevoke$ & $\actSnd, \actRcv,\actAmt,\actTok$ & $\actTok$'s manager transfers $\actAmt$ units of $\actTok$ from $\actSnd$ to $\actRcv$
      \\
      $\actFreeze$ & $\actSnd, \actTok$ & $\actTok$'s manager freezes $\actSnd$'s use of asset $\actTok$
      \\
      $\actUnfreeze$ & $\actSnd, \actTok$ & $\actTok$'s manager unfreezes $\actSnd$'s use of asset $\actTok$
      \\
      $\actDelegate$ & $\actSnd, \actTok$, $\actRcv$ & $\actTok$'s manager delegates $\actTok$ to new manager $\actRcv$
      \\
      \hline
    \end{tabular}
    % } % resizebox
  \vspace{-7pt}
  \caption{Transaction types.
    Fields $\actType$, $\actTf$, $\actTl$, $\actLease$ are common to all types. %
  }
  \label{fig:model:act}
  \vspace{-3mm}%
\end{figure}

Algorand groups transactions into \keyterm{rounds} $\timeT = 1,2,\ldots$
To establish \emph{when} a transaction $\actL$ is valid, we must consider both the current round $\timeT$,
and a \keyterm{lease map} $\leaseMap$ binding pairs (address, lease identifier) to rounds:
this is used to enforce mutual exclusion between two or more transactions
(see \eg the \emph{periodic payment} contract in~\Cref{sec:examples}).
Formally, we define the \keyterm{temporal validity of a transaction} $\actL$ 
by the predicate $\timeOk{\leaseMap,\timeT}{\actL}$, which holds whenever:

\smallskip\centerline{\(%
\begin{array}{l}
  \actL.\actTf \leq \timeT \leq \actL.\actTl
    \quad\text{and}\quad
    \actL.\actTl - \actL.\actTf \leq \MaxTxnLife
    \quad\text{and}
  \\[0mm]
  \big( \actL.\actLease = 0 \quad\text{or}\quad (\actL.\actSnd,\actL.\actLease) \not\in \dom(\leaseMap) \quad\text{or}\quad \timeT > \leaseMap(\actL.\actSnd,\actL.\actLease) \big)
\end{array}
\)}\smallskip%

\noindent%
First, the current round must lie between $\actL.\actTf$ and $\actL.\actTl$,
whose distance cannot exceed $\MaxTxnLife$ rounds%
\footnote{$\MaxTxnLife$ is a global consensus parameter, set to 1000 at time of writing.}.
% The fields $\actTf$ and $\actTl$ represent the first and last round when the action can be performed,
Second, $\actL$ must have a null lease identifier,
or the identifier has not been seen before (\ie, $\leaseMap(\actL.\actSnd,\actL.\actLease)$ is undefined), 
or the lease has expired (\ie, $\timeT > \leaseMap(\actL.\actSnd,\actL.\actLease)$).
When performed, a transaction with non-null lease identifier 
acquires the lease on $(\actL.\actSnd,\actL.\actLease)$, which is set to $\actL.\actTl$.

\subsection{Blockchain states}
\label{sec:model:blockchain-state}

We model the evolution of the Algorand blockchain as a labelled transition system.
A \keyterm{blockchain state} $\confG$ has the form: % is a term of the form:
\begin{equation}
  \label{eq:model:confG}
  \acct{\addrX[1]}{\tokMapS_1} \;\mid\; \cdots \;\mid\; \acct{\addrX[n]}{\tokMapS_n}
  \;\mid\; \timeT \;\mid\; \actDupSet \;\mid\; \mngrMap \;\mid\; \leaseMap \;\mid\; \freezeMap 
\end{equation}
where all addresses $\addrX[i]$ are distinct,
$\mid$ is commutative and associative,
and:
% \item $\auth{\pmvAS}{\actL}$, representing a set of signatures of identities $\pmvAS$ on the action~$\actL$;
\begin{itemize}
\item $\timeT$ is the current round;
\item $\actDupSet$ is the set of performed transactions whose ``last valid'' time $\actTl$ has not expired. This set is used to avoid double spending (see~\Cref{th:model:no-double-spending});
\item $\mngrMap$ maps each asset to the addresses of its \keyterm{manager} and \keyterm{creator};
\item $\leaseMap$ is the \keyterm{lease map} (from pairs (address, integer) to integers), used to ensure mutual exclusion between transactions;
\item $\freezeMap$ is a map from addresses to sets of assets, used to \keyterm{freeze assets}.
\end{itemize}
We define the \keyterm{initial state} $\confG[0]$ as
$\acct{\addrInit}{\setenum{\bind{\ALGO}{\valV[0]}}} \mid 0 \mid \emptyset \mid \mngrMap \mid \leaseMap \mid \freezeMap$, 
where $\dom(\mngrMap) = \dom(\leaseMap) = \dom(\freezeMap) = \emptyset$,
$\addrInit$ is the initial user address, and $\valV[0] = 10^{16}$
(which is the total supply of $10$ billions $\ALGO$s).%
%\footnote{\url{https://algorand.foundation/algo-dynamics}}

We now formalize the ASC1 state machine, 
by defining how it evolves by single transactions (\Cref{sec:model:step}), 
and then including atomic groups of transactions (\Cref{sec:model:stepL}),
smart contracts (\Cref{sec:model:contracts}),
and the authorization of transactions (\Cref{sec:model:auth}).

\subsection{Executing single transactions}
\label{sec:model:step}

We write $\confG \step{\actL} \confGi$ to mean: \emph{if} the transaction $\actL$
is performed in blockchain state $\confG$, then the blockchain evolves to state $\confGi$.%
\footnote{\label{footnote:single-trans-auth}%
  Note that $\confG \step{\actL} \confGi$ does not imply that transaction $\actL$ \emph{can} be performed in $\confG$: in fact, $\actL$ might require an authorization. We specify the required conditions in~\Cref{sec:model:auth}.%
}
We specify the transition relation $\step{}$ through a set of inference rules
\iftoggle{arxiv}{%
  (see~\Cref{fig:app-model} in the Appendix for the full definition):
}{%
  (see~\cite[Fig.~5 in Appendix]{abs-2009-12140} for the full definition):
}
each rule describes the effect of a transaction $\actL$ in the state
$\confG$ of~\cref{eq:model:confG}.
% \[
% \confG \; = \;
% \acct{\addrX[1]}{\tokMapS_1} \mid \cdots \mid \acct{\addrX[n]}{\tokMapS_n}
% \mid \timeT \mid \actDupSet \mid \mngrMap \mid \leaseMap \mid \freezeMap 
% \]
We now illustrate all cases,
depending on the transaction type ($\actL.\actType$).

When $\tokT \!\in\! \dom(\tokMapS)$, 
we use the shorthand\, $\tokMapS + \valV{:}\tokT$
to update balance $\tokMapS$
by adding $\valV$ units to token $\tokT$; 
similarly, we write\, $\tokMapS - \valV{:}\tokT$
\;to decrease $\tokT$ by $\valV$ units:

\smallskip\centerline{\(%\[
\tokMapS + \valV{:}\tokT \,\equiv\,
\tokMapS \setenum{\bind{\tokT}{\tokMapS(\tokT) + \valV}}
\qquad\qquad
\tokMapS - \valV{:}\tokT \,\equiv\, 
\tokMapS \setenum{\bind{\tokT}{\tokMapS(\tokT) - \valV}}
\)}%\]

\paragraph{Pay to a new account.}

Let $\actL.\actSnd = \addrX[i]$ for some $i \in 1..n$, 
let $\actL.\actRcv = \addrY \not\in \setenum{\addrX[1],\ldots,\addrX[n]}$
(\ie, the sender account $\addrX$ is already in the state, while the receiver $\addrY$ is not), and let $\actL.\actAmt = \valV$.
% Let:
% \[
% \confG \; = \;
% \acct{\addrX}{\tokMapS} \mid \auth{\pmvAS}{\actL} \mid \confGi \mid \mngrMap \mid \leaseMap \mid \freezeMap \mid \timeT \mid \actDupSet
% \]
The rule has the following preconditions:
\begin{enumerate}[label={{c\arabic*}.},ref={c\arabic*}]
%\item \label{item:open:authver} 
%  the action $\actL$ is authorized ($\authver[\mngrMap]{\auth{\pmvAS}{\actL}}$);
\item \label{item:open:isDup}
  $\actL$ does not cause double-spending ($\actL \not\in \actDupSet$);
\item \label{item:open:timeOk} 
  the time interval of the transaction, and its lease, are respected ($\timeOk{\leaseMap,\timeT}{\actL}$);
% \item \label{item:open:YnotinG} 
%  no account $\acct{\addrY}{\cdots}$ occurs in $\confGi$;
\item \label{item:open:acctOkX} 
  the updated balance of $\addrX[i]$ is valid ($\acctOk{\tokMapS[i] - \valV:\ALGO}$);
\item \label{item:open:acctOkY} 
  the balance of the new account at address $\addrY$ is valid ($\acctOk{\setenum{\bind{\ALGO}{\valV}}}$).
\end{enumerate}

\noindent
If these conditions are satisfied, the new state $\confGi$ is the following:

\smallskip\centerline{\(%\[
\acct{\addrX[i]}{\tokMapS[i] - \valV{:}\ALGO} \mid \acct{\addrY}{\setenum{\bind{\ALGO}{\valV}}} \mid \cdots \mid \timeT  \mid \actDupSet \cup \setenum{\actL} \mid \mngrMap \mid \leaseUpdate{\leaseMap}{\actL}{\timeT} \mid \freezeMap
\)}\smallskip%\]

In the new state, the $\ALGO$ balance of $\addrX[i]$ is decreased by $\valV$ units, 
and a new account at $\addrY$ is created, containing exactly the $\valV$ units taken from $\addrX[i]$.
The balances of the other accounts are unchanged.
The updated lease mapping is:

\smallskip\centerline{\(%\[
\leaseUpdate{\leaseMap}{\actL}{\timeT} = \begin{cases}
  \leaseMap \setenum{\bind{(\actL.\actSnd,\actL.\actLease)}{\actL.\actTl}} 
  & \text{if $\actL.\actLease \neq 0$}
  \\
  \leaseMap & \text{otherwise}
\end{cases}
\)}\smallskip%\]

Note that all transaction types check
conditions~\ref{item:open:isDup} and~\ref{item:open:timeOk} above;
further, all transactions check that updated account balances are valid %
(as in \ref{item:open:acctOkX} and \ref{item:open:acctOkY}).%
% Other conditions depend on the specific transaction type. 

\paragraph{Pay to an existing account.}

Let $\actL.\actSnd = \addrX[i]$, $\actL.\actRcv = \addrX[j]$, $\actL.\actAmt = \valV$, and $\actL.\actTok = \tokT$.
Besides the common checks,
performing $\actL$ requires that $\addrX[j]$ has ``opted in'' $\tokT$
(formally, $\tokT \in \dom(\tokMapS[j])$),
and $\tokT$ must not be frozen in accounts $\addrX[i]$ and $\addrX[j]$ 
(formally, $\tokT \not\in \freezeMap(\addrX[i]) \cup \freezeMap(\addrX[j])$).
If $\addrX[i] \neq \addrX[j]$, then in the new state the balance of $\tokT$ in $\addrX[i]$ is decreased by $\valV$ units,
and that of $\tokT$ in $\addrX[j]$ is increased by $\valV$ units:

\smallskip\centerline{\(%\[
\acct{\addrX[i]}{\tokMapS[i] - \valV{:}\tokT} 
\mid 
\acct{\addrX[j]}{\tokMapS[j] + \valV{:}\tokT} 
\mid 
\cdots 
\mid 
\timeT 
\mid 
\actDupSet \cup \setenum{\actL} 
\mid 
\mngrMap 
\mid 
\leaseUpdate{\leaseMap}{\actL}{\timeT} 
\mid 
\freezeMap
\)}\smallskip%\]

\noindent%
where all accounts but $\addrX[i]$ and $\addrX[j]$ are unchanged.
Otherwise, if $\addrX[i] = \addrX[j]$, then the balance of $\addrX[i]$ is unchanged, and
the other parts of the state are as above.

\paragraph{Close.}

Let $\actL.\actSnd = \addrX[i]$, $\actL.\actRcv = \addrX[j] \neq \addrX[i]$, and $\actL.\actTok = \tokT$.
Performing $\actL$ has two possible outcomes, 
depending on whether $\tokT$ is $\ALGO$ or a user-defined asset.
If $\tokT = \ALGO$,
% besides the common checks 
we must check that $\tokMapS[i]$ contains \emph{only} $\ALGO$s.
If so, the new state is:

\smallskip\centerline{\(%\[
\acct{\addrX[j]}{\tokMapS[j] + \tokMapS[i](\ALGO){:}\ALGO} 
\mid 
\cdots 
\mid 
\timeT 
\mid 
\actDupSet \cup \setenum{\actL} 
\mid 
\mngrMap 
\mid 
\leaseUpdate{\leaseMap}{\actL}{\timeT} 
\mid 
\freezeMap
\)}\smallskip%\]

\noindent%
where the new state no longer contains the account $\addrX[i]$,
and all the $\ALGO$s in $\addrX[i]$ are transferred to $\addrX[j]$. % CloseRemainderTo}
Instead, if $\tokT \neq \ALGO$, performing $\actL$ requires to check only that
$\addrX[i]$ actually contains $\tokT$, and that $\addrX[j]$ has ``opted in'' $\tokT$. 
Further, $\tokT$ must not be frozen for addresses $\addrX[i]$ and $\addrX[j]$, 
\ie $\tokT \not\in \freezeMap(\addrX[i]) \cup \freezeMap(\addrX[j])$.
The new state is:

\smallskip\centerline{\(%\[
\acct{\addrX[i]}{\tokMapS[i]\setenum{\bind{\tokT}{\bot}}}%{\tokT \domSub\, \tokMapS[i]} 
\mid 
\acct{\addrX[j]}{\tokMapS[j] + \tokMapS[i](\tokT){:}\tokT} 
\mid 
\cdots 
\mid 
\timeT
\mid
\actDupSet \cup \setenum{\actL}
\mid
\mngrMap 
\mid 
\leaseUpdate{\leaseMap}{\actL}{\timeT} 
\mid 
\freezeMap
\)}\smallskip%\]

\noindent%
where $\tokT$ is removed from $\addrX[i]$,
and all the units of $\tokT$ in $\addrX[i]$ are transferred to $\addrX[j]$.

\paragraph{Gen.}

Let $\actL.\actSnd = \addrX[i]$, $\actL.\actRcv = \addrX[j]$, and $\actL.\actAmt = \valV$.
Performing $\actL$ requires that $\addrX[i]$ has enough $\ALGO$s to 
own another asset, \ie $\acctOk{\tokMapS[i] \setenum{\bind{\tokT}{\valV}}}$,
where $\tokT$ is the (fresh) identifier of the new asset.
In the new state, the balance of $\addrX[i]$ is extended with $\setenum{\bind{\tokT}{\valV}}$,
and $\mngrMap$ is updated, making $\addrX[j]$ the manager of  $\tokT$.
The new state is:

\smallskip\centerline{\(%\[
\acct{\addrX[i]}{\tokMapS[i] \setenum{\bind{\tokT}{\valV}}} 
\mid 
\cdots 
\mid 
\timeT
\mid
\actDupSet \cup \setenum{\actL}
\mid 
\mngrMap\setenum{\bind{\tokT}{(\addrX[j],\addrX[i])}} 
\mid 
\leaseUpdate{\leaseMap}{\actL}{\timeT} 
\mid 
\freezeMap
\)}%\]

\paragraph{Opt in.}

Let $\actL.\actSnd = \addrX[i]$ and $\actL.\actTok = \tokT$.
Performing $\actL$ requires that $\tokT$ already occurs in $\confG$,
and that $\addrX[i]$ has enough $\ALGO$s to store it. 
If the balance $\tokMapS[i]$ does not have an entry for $\tokT$,
in the new state $\tokMapS[i]$ is extended with a new entry for $\tokT$:

\smallskip\centerline{\(%\[
\acct{\addrX[i]}{\tokMapS[i] \setenum{\bind{\tokT}{0}}} 
\mid 
\cdots 
\mid 
\timeT
\mid
\actDupSet \cup \setenum{\actL}
\mid 
\mngrMap 
\mid 
\leaseUpdate{\leaseMap}{\actL}{\timeT} 
\mid 
\freezeMap
\)}%\]

\smallskip\noindent
Otherwise, if $\addrX[i]$'s balance has already an entry for $\tokT$, then $\tokMapS[i]$ is unchanged.

\paragraph{Burn.}
Let $\actL.\actSnd = \addrX[i]$ and $\actL.\actTok = \tokT$.
Performing $\actL$ requires that $\addrX[i]$ is the creator of $\tokT$, 
and that $\addrX[i]$ stores \emph{all} the units of $\tokT$
(\ie, there are no units of $\tokT$ in other accounts).
In the resulting state, the token $\tokT$ no longer exists:
% This transaction must be issued by the asset owner, who must hold all outstanding asset tokens.

\smallskip\centerline{\(%\[
\acct{\addrX[i]}{\tokMapS[i] \setenum{\bind{\tokT}{\bot}}} 
\mid 
\cdots 
\mid 
\timeT
\mid
\actDupSet \cup \setenum{\actL}
\mid 
\mngrMap \setenum{\bind{\tokT}{\bot}}
\mid 
\leaseUpdate{\leaseMap}{\actL}{\timeT} 
\mid 
\freezeMap
\)}%\]

\smallskip\noindent
Note that this transaction requires an authorization by the asset manager of $\tokT$, which is recorded in $\mngrMap$.
(We address this topic in \Cref{sec:model:auth}.)

\paragraph{Revoke.}

Let $\actL.\actSnd = \addrX[i]$ and $\actL.\actRcv = \addrX[j]$. 
% Besides the common checks, 
Performing $\actL$ requires that both $\addrX[i]$ and $\addrX[j]$ are already storing 
the asset $\tokT$, and that $\tokT$ is not frozen for $\addrX[i]$ and $\addrX[j]$.
In the new state, the balance of $\addrX[i]$ is decreased by $\valV = \actL.\actAmt$ units of 
the asset $\tokT = \actL.\actTok$,
% (provided that the new balance is valid),
and the balance of $\addrX[j]$ is increased by the same amount:

\smallskip\centerline{\(%\[
\acct{\addrX[i]}{\tokMapS[i] - \valV{:}\tokT} 
\mid 
\acct{\addrX[j]}{\tokMapS[j] + \valV{:}\tokT} 
\mid 
\cdots 
\mid 
\timeT
\mid
\actDupSet \cup \setenum{\actL}
\mid 
\mngrMap 
\mid 
\leaseUpdate{\leaseMap}{\actL}{\timeT} 
\mid 
\freezeMap
\)}\smallskip%\]

\noindent
The effect of a $\actRevoke$ transaction is essentially the same as $\actPay$.
The difference is that $\actRevoke$ must be authorized by the manager of
the asset $\tokT$, while $\actPay$ must be authorized by the sender 
$\addrX[i]$ (see~\Cref{sec:model:auth}).

\paragraph{Freeze and unfreeze.}

A $\actFreeze$ transaction $\actL$ with $\actL.\actSnd = \addrX[i]$ and $\actL.\actTok = \tokT$ updates
% \(
% \confG \; = \;
% \acct{\addrX}{\tokMapS} \mid \auth{\pmvAS}{\actL} \mid \confGi \mid \mngrMap \mid \leaseMap \mid 
% \freezeMap \mid \confGi
% % \mid \timeT
% \)
the mapping $\freezeMap$ into $\freezeMapi$, such that 
$\freezeMapi(\addrX[i]) = \freezeMap(\addrX[i]) \cup \setenum{\tokT}$,
whenever the asset $\tokT$ is owned by $\addrX[i]$.
This effectively prevents any transfers of the asset $\tokT$ to/from the account $\addrX[i]$. 
The dual transaction $\actUnfreeze$ updates the mapping $\freezeMap$ into $\freezeMapi$ such that 
$\freezeMapi(\addrX[i]) = \freezeMap(\addrX[i]) \setminus \setenum{\tokT}$.

\paragraph{Delegate.}
A $\actDelegate$ transaction $\actL$ with $\actL.\actSnd = \addrX[i]$, $\actL.\actRcv = \addrX[j]$ and $\actL.\actTok = \tokT$ updates the manager of $\tokT$,
provided that $\mngrMap(\tokT) = (\addrX[i],\addrX[i])$,
for some $\addrX[k]$.
In the updated mapping $\mngrMap\setenum{\bind{\tokT}{(\addrX[j],\addrX[k])}}$, 
the manager of $\tokT$ is $\addrX[j]$.

\paragraph{Initiating a new round.}
We model the advancement to the next round of the blockchain as a state transition $\confG \step{\labTime} \confGi$.
In the new state $\confGi$, the round is increased, and the 
set $\actDupSet$ is updated as
\(
\actDupSeti = \setcomp{\actL \in \actDupSet}{\actL.\actTl > \timeT}
\).
The other components of the state are unchanged.

% \[
% \acct{\addrX[1]}{\tokMapS_1} \mid \cdots \mid \acct{\addrX[n]}{\tokMapS_n}
% \mid \timeT \mid \actDupSet \mid \mngrMap \mid \leaseMap \mid \freezeMap 
% \]

% \[
% \irule{\actDupSeti = \setcomp{\actL \in \actDupSet}{\actL.\actTl > \timeT}}
% {
% \acct{\addrX[1]}{\tokMapS_1} \mid \cdots \mid \acct{\addrX[n]}{\tokMapS_n}
% \mid \timeT \mid \actDupSet \mid \mngrMap \mid \leaseMap \mid \freezeMap 
% \step{\labTime}
% }
% \]

\subsection{Executing atomic groups of transactions}
\label{sec:model:stepL}

\keyterm{Atomic transfers}
allow state transitions to atomically perform \emph{sequences} of transactions.
To atomically perform a sequence $\actLL = \actL[1] \cdots \actL[n]$ from a state $\confG$, 
we must check that all the transactions $\actL[i]$ can be performed \emph{in sequence},
\ie the following precondition must hold (for some $\confG[1], \ldots, \confG[n]$):

\smallskip\centerline{\(%\[
% \auth{\pmvAS[1]}{\actLL,1} \mid 
\confG 
\; \step{\actL[1]} \; 
\confG[1]
\quad \cdots \quad
% \auth{\pmvAS[n]}{\actLL,n} \mid 
\confG[n-1] \; \step{\actL[n]} \; \confG[n]
\)}\smallskip%\]

\noindent%
If so, the state $\confG$ can take a \emph{single}-step transition labelled $\actLL$. 
Denoting the new transition relation with $\stepL{}$, 
we write the atomic execution of $\actLL$ in $\confG$ as follows:

\smallskip\centerline{\(%\[
% \auth{\pmvAS[1]}{\actLL,1} \mid \cdots \mid \auth{\pmvAS[n]}{\actLL,n} \mid 
\confG
\;\; \stepL{\actLL} \;\;
\confG[n]
\)}%\]

\subsection{Executing smart contracts}
\label{sec:model:contracts}

\begin{figure}[t]
  {\footnotesize
  \begin{align*}
    \scriptE 
    \; \bnfdef \;
    & \valV  
    && \text{constant}
    \\[-1pt]
    \bnfmid 
    & \scriptE \circ \scriptE 
    && \text{arithmetic $(\circ \in \{+, -, <, \leq, =, \geq, >, *, /, \%,\!\andE\!\!,\orE\! \})$}
    \\[-1pt]
    \bnfmid 
    & \notE{\scriptE}
    && \text{negation}
    \\[-1pt]
    % \bnfmid 
    % & \ifE{\scriptE}{\scriptE}{\scriptE}  
    % && \text{conditional}
    % \\
    \bnfmid
    & \actLen
    && \text{number of transactions in the atomic group}
    \\[-1pt]
    \bnfmid
    & \actPos
    && \text{index of current transaction in the atomic group}
    \\[-1pt]
    \bnfmid
    & \actId[n]
    && \text{identifier of $n$-th transaction in the atomic group}
    \\[-1pt]
    \bnfmid
    & \actE[n]{\actf}
    && \text{value of field $\actf$ of $n$-th transaction in the atomic group}
    \\[-1pt]
    % \bnfmid
    % & \argLen
    % && \text{number of arguments of the current transaction}
    % \\[-1pt]
    \bnfmid
    & \argE{n}
    && \text{$n$-th argument of the current transaction}
    \\[-1pt]
    % \bnfmid
    % & \sizeE{\scriptE} 
    % && \text{size of value}
    % \\
    \bnfmid
    & \hashE{\scriptE} 
    && \text{hash}
    \\[-1pt]
    \bnfmid
    & \versig{\scriptE}{\scriptE}{\scriptE}
    && \text{signature verification}
    % \\
    % \bnfmid 
    % & \seqat{\vec{\scriptE}}{\scriptE}
    % && \text{element of a sequence}
  \end{align*}

  \vspace{-2mm}%
  \begin{tabular}{l@{\quad}l}
    Syntactic sugar:&
    $\false \!\bnfdef\! \op{=}{1}{0}$\quad
    $\true \!\bnfdef\! \op{=}{1}{1}$\quad
    % $\scriptE \andE \scriptEi \!\bnfdef\! \ifE{\scriptE}{\scriptEi}{\false}$\quad
    % $\scriptE \orE \scriptEi \!\bnfdef\! \ifE{\scriptE}{\true}{\scriptEi}$\quad
    $\actE{\actf} \!\bnfdef\! \actE[\actPos]{\actf}$\quad
    $\actId \!\bnfdef\! \actId[\actPos]$
    \\[2pt]
    & $\ifE{\scriptE[0]}{\scriptE[1]}{\scriptE[2]} \bnfdef (\scriptE[0] \andE \scriptE[1]) \orE ((\notE \scriptE[0]) \andE \scriptE[2])$
  \end{tabular}
  }% end \small
  %\vspace{-15pt}
  \caption{%
    Smart contract scripts %
    (inspired by PyTeal~\cite{PyTeal}).%
  }
  \label{fig:model:scripts}
\end{figure}
\vspace{-1mm}%

In Algorand, custom authorization policies can be defined
with a smart contract language called TEAL \cite{TealSpec}.
TEAL is a bytecode-based stack language,
with an official programming interface for Python (called PyTeal):
in our formal model, we take inspiration from the latter
to abstract TEAL bytecode scripts as terms,
with the syntax in~\Cref{fig:model:scripts}.
Besides standard arithmetic-logical operators,
TEAL includes operators
to count and index all transactions in the current atomic group,
and to access their id and fields.
When firing transaction involving scripts, users can specify a sequence of \keyterm{arguments};
%that can be used within the script: 
accordingly, the script language includes operators to know the number of arguments, and access them.
Further, scripts include cryptographic operators to compute hashes and verify signatures.

The \keyterm{script evaluation function} $\sem{\actLL,i}{\witWL}{\scriptE}$ 
(\Cref{fig:model:scripts-sem}) evaluates $\scriptE$ using 3 parameters:
a sequence of arguments $\witWL$, a sequence of transactions $\actLL$ forming an atomic group, 
and the index $i \!<\! |\actLL|$ of the transaction containing $\scriptE$.
% The intuition is that $\sem{\actLL,i}{\witWL}{\scriptE}$ is evaluated with these parameters
% when atomically performing a sequence of actions $\actLL$, 
% where the $i$-th action involves a call to the script $\scriptE$.
The script $\actE[n]{\actf}$ evaluates to the field $\actf$ of the $n$\textsuperscript{th} transaction in group $\actLL$.
The size of $\actLL$ is given by $\actLen$, while
$\actPos$ returns the index $i$ of the transaction containing the script being evaluated.
The script $\argE{n}$ returns the $n$\textsuperscript{th} argument in $\witWL$.
% whose length is given by $\argLen$.
The script $\hashE{\scriptE}$ applies a public hash function $\hashSem{}$ to the evaluation of $\scriptE$.
The script $\versig{\scriptE[1]}{\scriptE[2]}{\scriptE[3]}$ verifies a signature $\scriptE[2]$ 
on the message obtained by concatenating the enclosing script and $\scriptE[1]$, 
using public key $\scriptE[3]$.
All operators in~\Cref{fig:model:scripts-sem} are \emph{strict}:
they fail if the evaluation of any operand fails.% $\bot$ 

\begin{figure*}[t]
  \small
  \resizebox{\textwidth}{!}{
  \(
  \begin{array}{c}
    \sem{\actLL,i}{\witWL}{\valV} 
    = \valV 
    \qquad
    \sem{\actLL,i}{\witWL}{\scriptE \circ \scriptEi} 
    = \sem{\actLL,i}{\witWL}{\scriptE} \circ_\bot \sem{\actLL,i}{\witWL}{\scriptEi}
    \qquad
    \sem{\actLL,i}{\witWL}{\notE \scriptE} 
    = \neg_{\bot} \sem{\actLL,i}{\witWL}{\scriptE}
    % \sem{\actLL,i}{\witWL}{\ifE{\scriptE[0]}{\scriptE[1]}{\scriptE[2]}} 
    % = \ifSem{\sem{\actLL,i}{\witWL}{\scriptE[0]}}{\sem{\actLL,i}{\witWL}{\scriptE[1]}}{\sem{\actLL,i}{\witWL}{\scriptE[2]}}
    \\[5pt]
    % \sem{\actLL,i}{\witWL}{\sizeE{\scriptE}} 
    % = \sizeSem{\sem{\actLL,i}{\witWL}{\scriptE}}
    \sem{\actLL,i}{\witWL}{\actE[n]{\actf}}
    = (\seqat{\actLL}{n}).\actf
    \;\; (0 \leq n < |\actLL|)
    \qquad
    \sem{\actLL,i}{\witWL}{\actId[n]} = \seqat{\actLL}{n} % in realta' sarebbe \hashSem{...}
    \;\; (0 \leq n < |\actLL|)
    \\[5pt]
    \sem{\actLL,i}{\witWL}{\actLen} = |\actLL|
    \qquad
    \sem{\actLL,i}{\witWL}{\actPos} = i
    \qquad
    \sem{\actLL,i}{\witWL}{\argE{n}}
    = \seqat{\witWL}{n}
    \;\; (0 \leq n < |\witWL|)
    % \quad
    % \sem{\actLL,i}{\witWL}{\argLen} = |\witWL|
    \\[5pt]
    \sem{\actLL,i}{\witWL}{\hashE{\scriptE}} 
    = \hashSem{\sem{\actLL,i}{\witWL}{\scriptE}}
    \,\quad
    \sem{\actLL,i}{\witWL}{\versig{\scriptE[1]}{\scriptE[2]}{\scriptE[3]}}
    = \versigSem{k}{m}{s}
    \;\;
    \bigg(\begin{array}{l}
      m = \big(\seqat{\actLL}{i}.\actSnd, \sem{\actLL,i}{\witWL}{\scriptE[1]}\big)
      % in realta' sarebbe \hashSem{\seqat{\actLL}{i}.\actSnd}
      \\[2pt]
      s = \sem{\actLL,i}{\witWL}{\scriptE[2]}
      \;\;
      k = \sem{\actLL,i}{\witWL}{\scriptE[3]} 
    \end{array}
    \bigg)
  \end{array}
  \)
  }
  \vspace{-5pt}
  \caption{Evaluation of scripts in \Cref{fig:model:scripts}.}
  \label{fig:model:scripts-sem}
\end{figure*}

\subsection{Authorizing transactions, and user-blockchain interaction}
\label{sec:model:auth}

As noted before, % in \cref{footnote:single-trans-auth}, 
the mere existence of a step $\confG \step{\actL} \confGi$ 
does not imply that $\actL$ can actually be issued.
For this to be possible, 
users must provide a sequence $\witWL$ of \keyterm{witnesses}, % \bartnote{credentials??},
satisfying the \keyterm{authorization predicate} associated with $\actL$;
such a predicate is uniquely determined by the \keyterm{authorizer address} of $\actL$,
written $\tokMngr[\mngrMap]{\actL}$.
For transaction types $\actClose$, $\actPay$, $\actGen$, $\actOptin$ 
the authorizer address is $\actL.\actSnd$;
for $\actBurn, \actRevoke$, $\actFreeze$ and $\actUnfreeze$ on an asset $\tokT$
it is the asset manager $\mngrMap(\tokT)$.
Intuitively, if $\tokMngr[\mngrMap]{\actL} = \addrX$,
then $\witWL$ authorizes $\actL$~iff:
\begin{enumerate}
\item if $\addrX$ is a multisig address $(\pmvAL,n)$, 
  then $\witWL$ contains at least $n$ signatures of $\actL$,
  made by users in $\pmvAL$;
  (if $\addrX$ is a single-user address $\pmvA$:
  see \cref{footnote:single-multi-addr})
\item if $\addrX$ is a script $\scriptE$, then 
  $\scriptE$ evaluates to $\true$ under the arguments $\witWL$.
\end{enumerate}

We now formalize the intuition above. %
Since the evaluation of scripts depends
on a whole group of transactions $\actLL$,
and on the index $i$ of the current transaction within $\actLL$,
we define the authorization predicate as\; $\authver{\witWL}{\actLL,i}$
\;(read: ``$\witWL$ authorizes the $i$\textsuperscript{th}
transaction in $\actLL$'').
Let $\sig{\pmvAS}{m}$ stand for the set of signatures containing
$\sig{\pmvA}{m}$ for all $\pmvA \in \pmvAS$;
then,\; $\authver{\witWL}{\actLL,i}$ \;holds whenever:
\label{def:tokMngr}
\begin{enumerate}

\item if\; $\tokMngr[\mngrMap]{\seqat{\actLL}{i}} = (\pmvAL,n)$, \;then\; $|\setofseq{\witWL} \cap \sig{\setofseq{\pmvAL}}{\actLL,i} | \geq n$

\item if\; $\tokMngr[\mngrMap]{\seqat{\actLL}{i}} = \scriptE$, \;then\; $\sem{\actLL,i}{\witWL}{\scriptE} = \true$

\end{enumerate}

Note that, in general, the sequence of witnesses $\witWL$ is not unique, 
\ie, it may happen that 
$\authver{\witWL}{\actLL,i}$ and $\authver{\witWLi}{\actLL,i}$
for $\witWL \neq \witWLi$. 
For instance, the $\script{Oracle}$ contract in~\Cref{sec:examples} 
accepts transactions with witnesses of the form $0 \, s$ or $1 \, s'$, 
where the first element of the sequence represents the oracle's choice, 
and the second element is the oracle's signature.

\noindent
Given a \emph{sequence of sequences} of witnesses $\witWLL = \witWL[0] \cdots \witWL[n-1]$ with
$n = |\actLL|$, the \keyterm{group authorization predicate}\,
$\authver{\witWLL}{\actLL}$ \,holds iff\,
$\authver{\witWL[i]}{\actLL,i}$ \,for all $i \!\in\! 0..n-1$.

\paragraph{User-blockchain interaction.}
We model the interaction of users with the blockchain
as a transition system.
Its states are pairs $(\confG,\confK)$, where $\confG$ is a blockchain state, 
while $\confK$ is the set of authorization bitstrings
currently known by users.
The transition relation $\stepTL{\labS}$
(with $\labS \!\in\! \setenum{\witW, \labTime, \auth{\witWLL}{\actLL}}$)
is given by the rules:

\smallskip\centerline{\(%\[
\irule{}
{(\confG,\confK) \stepTL{\witW} (\confG,\confK \cup \setenum{\witW})}
\quad
\irule
{\confG \stepL{\labTime} \confGi}
{(\confG,\confK) \stepTL{\labTime} (\confGi,\confK)}
\quad
\irule
{\confG \stepL{\actLL} \confGi \quad \setofseq{\witWLL} \subseteq \confK \quad \authver{\witWLL}{\actLL}}
{(\confG,\confK) \stepTL{\auth{\witWLL}{\actLL}} (\confGi,\confK)}
\)}\smallskip%\]

With the first two rules, users can broadcast a witness $\witW$,
or advance to the next round.
The last rule gathers from $\confK$ a sequence of witnesses $\witWLL$,
and lets the blockchain perform an atomic group of transactions $\actLL$
if authorized by $\witWLL$.

% mempool mechanism: before taking the move, one must reveal the witnessess}

\subsection{Fundamental properties of ASC1}
\label{sec:asc1-properties}

We now exploit our formal model to establish some fundamental properties of ASC1.
\Cref{th:model:no-double-spending} states that the same transaction $\actL$ cannot be 
issued more than once, \ie, there is no \emph{double-spending}.
In the statement, we use $\stepL{}^* \stepL{\actLL} \,\stepL{}^*$ to denote
an arbitrarily long series of steps including a group of transactions~$\actLL$.

\begin{theorem}[No double-spending]
  \label{th:model:no-double-spending}
  Let $\confG[0] \stepL{}^* \stepL{\actLL} \,\stepL{}^* \confG \stepL{\actLLi} \confGi$.
  Then, no transaction occurs more than once in $\actLL\actLLi$.
\end{theorem}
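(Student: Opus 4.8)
The plan is to track the set $\actDupSet$ of recently-performed transactions along the execution, and to argue that (i) every transaction in a performed group gets added to this set, and (ii) the set only loses a transaction once its $\actTl$ has passed, at which point the transaction can never be performed again because of the temporal-validity check $\timeOk{\leaseMap,\timeT}{\actL}$. Combining these two facts, whenever a group $\actLL$ is performed, none of its transactions were performed before, and moreover no transaction occurs twice \emph{within} $\actLL$ either.

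First I would make precise the single-step invariant. Inspecting the inference rules for $\step{}$ (all transaction-type cases in \Cref{sec:model:step}): whenever $\confG \step{\actL} \confGi$, the precondition~\ref{item:open:isDup} requires $\actL \notin \actDupSet$ (where $\actDupSet$ is the component of $\confG$), and the resulting state $\confGi$ has as its fourth component $\actDupSet \cup \setenum{\actL}$. The only other rule, the round-advance step $\confG \step{\labTime} \confGi$, replaces $\actDupSet$ by $\setcomp{\actL \in \actDupSet}{\actL.\actTl > \timeT}$ and increments $\timeT$; in particular it only \emph{removes} transactions, and removes exactly those with $\actL.\actTl \leq \timeT$. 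Lifting this to groups via the definition of $\stepL{\actLL}$ as a chain of single steps, and then to $\stepTL{}$ (which does not touch $\confG$ beyond what $\stepL{}$ does), I get: along any run from $\confG[0]$, the round $\timeT$ is non-decreasing, and if a transaction $\actL$ has ever been performed and currently $\timeT \leq \actL.\actTl$, then $\actL \in \actDupSet$ of the current state. I would prove this last statement by induction on the length of the run: the base case is trivial since $\actDupSet = \emptyset$ initially; for the inductive step, a $\step{\actL}$ step adds $\actL$ and keeps all others (their $\actTl$ bound is unaffected), and a $\labTime$ step only drops transactions whose $\actTl$ has already been exceeded, so the invariant is preserved; and within a group step the intermediate states also satisfy it by the same single-step reasoning.

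Given the invariant, the argument for the theorem is short. Suppose $\confG[0] \stepL{}^* \stepL{\actLL} \,\stepL{}^* \confG \stepL{\actLLi} \confGi$ and some transaction $\actL$ occurs twice in $\actLL\actLLi$. Consider the step $\confG \stepL{\actLLi} \confGi$, which by definition decomposes as $\confG \step{\actLi[1]} \cdots \step{\actLi[m]} \confG[m] = \confGi$ where $\actLLi = \actLi[1] \cdots \actLi[m]$. There are two cases. If $\actL$ occurs at two distinct positions within $\actLLi$, say $\actLi[j] = \actLi[k] = \actL$ with $j < k$: after the step $\confG[j-1] \step{\actLi[j]} \confG[j]$, the transaction $\actL$ is in the $\actDupSet$-component of $\confG[j]$; since no $\labTime$ step occurs inside a group, $\actL$ stays in the $\actDupSet$-component through $\confG[k-1]$, contradicting precondition~\ref{item:open:isDup} for the step $\confG[k-1] \step{\actLi[k]} \confG[k]$. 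Otherwise $\actL$ occurs once in $\actLLi$ — say as $\actLi[k]$ — and also somewhere in $\actLL$, which was performed earlier in the run. By temporal validity, the step $\confG[k-1] \step{\actLi[k]} \confG[k]$ requires $\timeOk{\leaseMap, \timeT}{\actL}$ where $\timeT$ is the current round, hence $\timeT \leq \actL.\actTl$. But $\actL$ was performed at some earlier point when the round was $\timeT' \leq \timeT \leq \actL.\actTl$; by the invariant $\actL$ was in $\actDupSet$ at that earlier point and — since the round never decreased past $\actL.\actTl$ — remained in $\actDupSet$ up to and including $\confG[k-1]$, again contradicting~\ref{item:open:isDup}. (The subcase where both occurrences are in $\actLL$ is handled exactly as the first case, applied to the group $\actLL$.)

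The main obstacle is purely bookkeeping: making the single-step invariant watertight requires checking that \emph{every} transaction-type rule in \Cref{sec:model:step} indeed imposes condition~\ref{item:open:isDup} and updates $\actDupSet$ to $\actDupSet \cup \setenum{\actL}$ — the paper states this explicitly only for the ``pay to a new account'' rule and then remarks "all transaction types check conditions~\ref{item:open:isDup} and~\ref{item:open:timeOk}", so I would rely on that uniformity. The one genuinely delicate point is the interaction with round advancement: I must be careful that a transaction is never silently dropped from $\actDupSet$ and then re-performed in the \emph{same} round — this is exactly why the drop condition is $\actL.\actTl > \timeT$ (strict) rather than $\geq$, and why the temporal check uses $\timeT \leq \actL.\actTl$; these two bounds are precisely complementary, so no gap exists. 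Once that alignment is observed, the induction and the case analysis go through routinely.
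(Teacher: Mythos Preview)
Your proposal is correct and follows essentially the same approach as the paper: both arguments hinge on conditions~\ref{item:open:isDup} and~\ref{item:open:timeOk}, tracking $\actDupSet$ through an induction on the run. Your version is more detailed and in fact slightly tighter --- you observe directly that the drop condition $\actL.\actTl > \timeT$ and the validity bound $\timeT \leq \actL.\actTl$ are complementary, whereas the paper phrases the same conclusion via the $\MaxTxnLife$ bound on the validity window (which, as your argument shows, is not actually needed here).
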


Define the \emph{value} of an asset $\tokT$ in a state
$\confG = \acct{\addrX[1]}{\tokMapS_1} \mid \cdots \mid \acct{\addrX[n]}{\tokMapS_n} \mid \timeT \mid \cdots$
as the sum of the balances of $\tokT$ in all accounts in $\confG$:

\smallskip\centerline{\(%\[
\tokval{\tokT}{\confG} = \sum_{i = 1}^n \tokval{\tokT}{\tokMapS[i]}
\qquad
\text{where}\;\;
\tokval{\tokT}{\tokMapS} = \begin{cases}
  \tokMapS(\tokT) & \text{if $\tokT \in \dom(\tokMapS)$} \\
  0 & \text{otherwise}
\end{cases}
\)}\smallskip%\]

\Cref{th:model:value-preservation} %
states that, once an asset is minted, 
its value remains constant, until the asset is eventually burnt.
In particular, since $\ALGO$s cannot be burnt (nor minted, unlike in Bitcoin and Ethereum),
their amount remains constant.
% \footnote{\url{https://www.purestake.com/blog/algorand-rewards-distribution-explained/}}%

\begin{theorem}[Value preservation]
  \label{th:model:value-preservation}
  Let $\confG[0] \stepL{}^* \confG \stepL{}^* \confGi$.
  Then:
  
  \smallskip\centerline{\(%
  \tokval{\tokT}{\confGi} = \begin{cases}
    \tokval{\tokT}{\confG} & \text{if $\tokT$ occurs in $\confG$ and it is not burnt in $\confG \stepL{}^* \confGi$} \\
    0 & \text{otherwise}
  \end{cases}
  \)}%
\end{theorem}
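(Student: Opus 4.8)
The natural strategy is to reduce value preservation to a single-transaction statement and then lift it to whole runs by induction. \emph{First}, I would prove: if $\confG \step{\actL} \confGi$ for a single transaction $\actL$, then for every asset $\tokT$ one has $\tokval{\tokT}{\confGi} = \tokval{\tokT}{\confG}$, \emph{unless} $\actL$ is a $\actBurn$ on $\tokT$ (in which case $\tokval{\tokT}{\confGi} = 0$) or a $\actGen$ minting $\tokT$ (in which case, by the freshness side-condition of the $\actGen$ rule, $\tokT$ does not occur in $\confG$, so $\tokval{\tokT}{\confG} = 0$). This is a rule-by-rule check against \Cref{sec:model:step}: $\actPay$, $\actClose$ and $\actRevoke$ move units of a single asset between two accounts (possibly deleting the emptied sender in the $\ALGO$-close case, and, for ``pay to a new account'', creating an account holding exactly the transferred $\ALGO$s), so every column sum $\tokval{\tokT}{\cdot}$ is unchanged; $\actOptin$ only inserts a zero-valued entry; $\actFreeze$, $\actUnfreeze$, $\actDelegate$ and the round-advance step $\step{\labTime}$ modify only $\freezeMap$, $\mngrMap$, $\leaseMap$, $\timeT$ or $\actDupSet$ and leave all balances untouched; $\actGen$ adds an entry $\bind{\tokT}{\valV}$ for a fresh $\tokT$; and $\actBurn$ deletes $\tokT$ from its creator, who by the rule's precondition owns every unit of $\tokT$, so afterwards $\tokval{\tokT}{\confGi} = 0$.

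\textbf{Lifting to runs.} A $\stepL{}$ step is either a round advance (covered above) or an atomic group $\actLL = \actL[1]\cdots\actL[n]$, which by \Cref{sec:model:stepL} unfolds into $\confG \step{\actL[1]} \cdots \step{\actL[n]} \confGi$; composing the single-step fact along this chain yields $\tokval{\tokT}{\confGi} = \tokval{\tokT}{\confG}$ unless some $\actL[k]$ is a $\actBurn$ on $\tokT$ (value $0$ from then on within the group) or a $\actGen$ minting $\tokT$ (so $\tokT$ was absent just before $\actL[k]$). I would then prove the theorem by induction on the number of $\stepL{}$ steps in $\confG \stepL{}^* \confGi$: the base case $\confGi = \confG$ is trivial, and for the inductive step I split $\confG \stepL{}^* \confGii \stepL{\actLL} \confGi$, apply the induction hypothesis to $\confG \stepL{}^* \confGii$ and the group statement to the last step, and case on whether that step burns $\tokT$, mints $\tokT$, or neither. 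The branches recombine into the two cases of the claim: while $\tokT$ occurs in $\confG$ and is never burnt, its value stays $\tokval{\tokT}{\confG}$; once it is burnt, or if it never occurs in $\confG$, its value is $0$.

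\textbf{Main obstacle.} The case analysis of the single-transaction step is lengthy but purely mechanical. The one genuinely delicate point is ruling out the ``resurrection'' of an asset that has been burnt, or that is not present in $\confG$: this relies on the discipline that $\actGen$ introduces only fresh identifiers, which, together with reachability from $\confG[0]$ and \Cref{th:model:no-double-spending} (each creation transaction fires at most once), guarantees that such a $\tokT$ keeps value $0$ for the remainder of the run. Making this bookkeeping precise is where the hypothesis $\confG[0] \stepL{}^* \confG$ is actually used.
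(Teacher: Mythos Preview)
Your proposal is correct and follows essentially the same approach as the paper: induction on the length of the run, with a rule-by-rule case analysis showing that every step preserves $\tokval{\tokT}{\cdot}$ except \nrule{[Gen]} (fresh asset) and \nrule{[Burn]} (value drops to $0$), plus the observation that freshness prevents a burnt asset from being re-created. The paper's proof is considerably terser than yours, and it does not invoke \Cref{th:model:no-double-spending}; the freshness side-condition on \nrule{[Gen]} alone is what rules out resurrection, so your appeal to no-double-spending is harmless but not actually needed.
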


\Cref{th:model:deterministic} establishes that
the transition systems $\stepL{}$ and $\stepTL{}$ are deterministic:
crucially, this allows reconstructing the blockchain state
from the transition log.
Notably, by \cref{th:model:deterministic:stepTL}
of \Cref{th:model:deterministic},
witnesses only determine whether a state transition happens or not,
but they do not affect the new state.
This is unlike Ethereum, where arguments of function calls in transactions
may affect the state.

\begin{theorem}[Determinism]
  \label{th:model:deterministic}
  For all $\lambda \!\in\! \setenum{\labTime, \actLL}$ and $\labS \!\in\! \setenum{\labTime, \witW}$:
  \begin{enumerate}
  \item if $\confG \stepL{\lambda} \confGi$ \,and\, $\confG \stepL{\lambda} \confGii$, 
    \,then\, $\confGi = \confGii$;
  \item if $(\confG,\confK) \stepTL{\labS} (\confGi,\confKi)$ \,and\, 
    $(\confG,\confK) \stepTL{\labS} (\confGii,\confKii)$, \,then\, $(\confGi,\confKi) = (\confGii,\confKii)$;
  \item \label{th:model:deterministic:stepTL}
    if $(\confG,\confK) \stepTL{\!\!\auth{\witWLL}{\actLL}\!} (\confGi,\confKi)$ and 
    $(\confG,\confK) \stepTL{\!\!\auth{\witWLLi}{\actLL}\!} (\confGii,\confKii)$, then 
    $\confGi \!=\! \confGii$ \!and~\mbox{$\confKi \!=\! \confKii \!=\! \confK$.}
  \end{enumerate}
\end{theorem}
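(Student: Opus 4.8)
The plan is to reduce everything to the determinism of the single-transaction relation $\step{}$, and then lift this property first to atomic groups ($\stepL{}$) and finally to the user--blockchain relation ($\stepTL{}$).

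\emph{Single transactions.} First I would show that $\step{}$ is a partial function of its label and source state: whenever $\confG \step{\actL} \confGi$ and $\confG \step{\actL} \confGii$ then $\confGi = \confGii$, and likewise for the round-advancement step $\step{\labTime}$. This is an inspection of the rules defining $\step{}$ (the full set is in the appendix). Two things need checking. First, the rules are \emph{mutually exclusive}: the applicable rule is fixed by $\actL.\actType$ together with side conditions that partition the remaining cases --- for instance, for $\actPay$ whether $\actL.\actRcv$ already occurs in $\confG$, and for $\actClose$ whether $\actL.\actTok$ is $\ALGO$. Second, within each rule the target state $\confGi$ is written as an \emph{explicit function} of $\confG$ and $\actL$, with no existentially quantified choices. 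The one place where the second point needs care is $\actGen$: the minted asset receives a ``fresh'' identifier $\tokT$, so I must verify that the appendix rule fixes this identifier deterministically as a function of $\confG$ (\eg\ the least element of $\TokU$ not yet in use) rather than choosing it nondeterministically; granted that, the $\actGen$ rule is a function too.

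\emph{Atomic groups and the user--blockchain system.} For $\lambda = \actLL = \actL[1]\cdots\actL[n]$, recall that $\confG \stepL{\actLL} \confG[n]$ holds exactly when there are intermediate states with $\confG \step{\actL[1]} \confG[1] \cdots \step{\actL[n]} \confG[n]$; by induction on $n$, applying the previous step at each stage, all of $\confG[1],\dots,\confG[n]$ are uniquely determined, so $\stepL{}$ is a partial function (the case $\lambda = \labTime$ being immediate). This settles part~(1). For part~(2), I would inspect the three rules for $\stepTL{}$: when $\labS = \witW$ the target is $(\confG, \confK \cup \setenum{\witW})$, depending only on the source; when $\labS = \labTime$ the target is $(\confGi,\confK)$ with $\confGi$ unique by part~(1). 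For part~(3), both transitions must come from the third rule, giving $\confG \stepL{\actLL} \confGi$ and $\confG \stepL{\actLL} \confGii$, hence $\confGi = \confGii$ by part~(1); that rule leaves the knowledge set untouched, so $\confKi = \confK = \confKii$. Note the two premises differ only in the witnesses $\witWLL$ versus $\witWLLi$, and these enter \emph{solely} through the side conditions $\setofseq{\witWLL} \subseteq \confK$ and $\authver{\witWLL}{\actLL}$, never through the conclusion --- which is exactly why the statement holds for arbitrary, possibly distinct, authorizing witnesses.

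\emph{Main obstacle.} The only genuine work is the exhaustive case analysis for $\step{}$, and within it the $\actGen$ subtlety: determinism really does hinge on the minted asset's identifier being picked by a deterministic rule rather than an arbitrary fresh choice. Everything else --- mutual exclusivity of the rules, and the observation that every conclusion is manifestly functional --- is routine bookkeeping, and parts~(2) and~(3) are short corollaries of part~(1) plus a glance at the three $\stepTL{}$ rules.
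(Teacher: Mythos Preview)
Your proposal is correct and follows essentially the same approach as the paper: establish that the single-step rules are mutually exclusive and that each rule's conclusion is a function of source and label, then lift to $\stepL{}$ and $\stepTL{}$, observing for part~(3) that witnesses appear only in side conditions. Your treatment is in fact more explicit than the paper's sketch --- in particular, you rightly flag the $\actGen$ freshness issue, which the paper's rule handles by stipulating ``$\tokT$ \emph{next} fresh asset identifier'' rather than an arbitrary fresh one.
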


\section{Designing secure smart contracts in Algorand}
\label{sec:examples}

\newcommand{\Escrow}[1][]{\script{Escrow}\ifempty{#1}{}{(#1)}}
\newcommand{\Lottery}[1][]{\script{\it Lottery}\ifempty{#1}{}{(#1)}}
\newcommand{\HTLC}[1][]{\script{\it HTLC}{\ifempty{#1}{}{(#1)}}}

We now exploit our formal model to design some archetypal smart contracts,
and establish their security (\Cref{sec:smart-contracts}).
First, we introduce an attacker model.%
% for studying the security of protocols developed over ASC1.

\subsection{Attacker model}
\label{sec:attacker-model}

We assume that cryptographic primitives are secure,
\ie, hashes are collision resistant and signatures cannot be forged
(except with negligible probability).
\label{def:runS}%
A \keyterm{run} $\runS$ is a (possibly infinite) sequence of labels
$\labS[1] \labS[2] \cdots$ %
such that $(\confG[0],\confK[0]) \stepTL{\labS[1]} (\confG[1],\confK[1]) \stepTL{\labS[2]} \cdots$,
where $\confG[0]$ is the initial state, and $\confK[0] = \emptyset$ is the initial (empty) knowledge;
hence, as illustrated in \Cref{sec:model:auth},
each label $\labS[i]$ in a run $\runS$ can be either %
$\witW$ (broadcast of a witness bitstring $\witW$),
$\auth{\witWLL}{\actLL}$ (atomic group of transactions $\actLL$
authorized by $\witWLL$), or
$\labTime$ (advance to next round).
We consider a setting where:
\begin{itemize}
\item%
  each user $\pmvA$ has a \keyterm{strategy} $\stratS{}$, \ie a PPTIME algorithm
  to select which label to perform
  among those permitted by the ASC1 transition system.
  A strategy takes as input a finite run $\runS$ (the past history)
  and outputs a single enabled label $\labS$.
  % with the constraint that $\stratS{\pmvA}$ cannot output authorizations for $\pmvB \neq \pmvA$.
  Strategies are \emph{stateful}:
  users can read and write a private unbounded tape
  to maintain their own state throughout the run.
  The initial state of $\pmvA$'s tape contains $\pmvA$'s private key, 
  and the public keys of all users;%
  \footnote{Notice that new public/private key pairs
    can be generated during the run, 
    and their public parts can be communicated as labels $\witW$.}
\item an \keyterm{adversary} $\Adv$ who controls the scheduling
  with her stateful \keyterm{adversarial strategy} $\stratS{\Adv}$:
  a PPTIME algorithm taking as input the current run $\runS$
  and the labels output by the strategies of users
  (\ie, the steps that users are trying to make).
  The output of $\stratS{\Adv}$ 
  is a single label $\labS$, that is appended to the current run.
  We assume the adversarial strategy $\stratS{\Adv}$
  can delay users' transactions by at most $\delta_{\Adv}$ rounds,
  where $\delta_{\Adv}$ is a given natural number.\footnote{%
    Without this assumption,
    $\Adv$ could arbitrarily disrupt deadlines: 
    \eg, $\stratS{\Adv}$ could make $\pmvA$ \emph{always} lose
    lottery games (like the ones below) by delaying $\pmvA$'s transactions.
  }%
\end{itemize}
A set $\stratSet$ of strategies of users and $\Adv$
induces a distribution of runs;
we say that run $\runS$ is \keyterm{conformant} to $\stratSet$
if $\runS$ is sampled from such a distribution.
We assume that infinite runs contain infinitely many~$\labTime$:
this \emph{non-Zeno condition} ensures that neither users nor $\Adv$ can perform infinitely many transactions in a round.

\subsection{Smart contracts}
\label{sec:smart-contracts}

We now exploit our model to specify some archetypal ASC1 contracts,
and reason about their security.
%\noindent%
To simplify the presentation, we assume $\delta_{\Adv} \!=\! 0$,
\ie, the adversary $\Adv$ can start a new round
(performing $\labTime$) only if all users agree.\footnote{%
  All results can be easily adjusted for $\delta_{\Adv} \!>\! 0$,
  but this would require more verbose statements
  to account for possible delays introduced by $\Adv$.%
}
The table below summarises
our selection of smart contracts, highlighting the design patterns
they implement.

\smallskip{%
  \resizebox{0.9\textwidth}{!}{
    \centering
  \begin{tabular}{|c|c|c|c|c|c|c|}
    \hline
    \multirow{2}{*}{\textbf{Use case / Pattern}} & Signed & \multirow{2}{*}{Timeouts} & Commit/  & State & Atomic & Time
    \\
    & witness &  & reveal & machine & transfer & windows
    \\
    \hline
    Oracle & $\checkmark$ & $\checkmark$ & & & &
    \\
    HTLC & & $\checkmark$ & $\checkmark$ & & &
    \\
    Mutual HTLC
    \iftoggle{arxiv}{%
      (\Cref{sec:smart-contracts:mutual-htlc})%
    }{%
      \cite[\S{}B.1]{abs-2009-12140}%
    }%
    & & $\checkmark$ & $\checkmark$ & & $\checkmark$ &
    \\
    $O(n^2)$-collateral lottery & & $\checkmark$ & $\checkmark$ & & $\checkmark$ &
    \\
    0-collateral lottery
    \iftoggle{arxiv}{%
      (\Cref{sec:smart-contracts:zero-collateral-lottery})%
    }{%
      \cite[\S{}B.2]{abs-2009-12140}%
    }%
    & & $\checkmark$ & $\checkmark$ & $\checkmark$ & $\checkmark$ &
    \\
    Periodic payment & & & & & & $\checkmark$
    \\
    Escrow
    \iftoggle{arxiv}{%
      (\Cref{sec:smart-contracts:escrow})%
    }{%
      \cite[\S{}B.3]{abs-2009-12140}%
    }%
    & $\checkmark$ & & & $\checkmark$ & &
    \\
    Two-phase authorization & & $\checkmark$ & & $\checkmark$ & & $\checkmark$
    \\
    Limit order
    \iftoggle{arxiv}{%
      (\Cref{sec:smart-contracts:limit-order})%
    }{%
      \cite[\S{}B.4]{abs-2009-12140}%
    }%
    & & $\checkmark$ & & & $\checkmark$ &
    \\
    Split
    \iftoggle{arxiv}{%
      (\Cref{sec:smart-contracts:split})%
    }{%
      \cite[\S{}B.5]{abs-2009-12140}%
    }%
    & & $\checkmark$ & & & $\checkmark$ &
    \\
    \hline
  \end{tabular}
  } % resizebox
  %\caption{ASC1 design patterns.}
  %\label{sec:design-patterns}
%\end{table}
}\smallskip% end \centering

\paragraph{Oracle.}
\label{ex:oracle}

We start by designing a contract
which allows either $\pmvA$ or $\pmvB$ to withdraw all the $\ALGO$s in the contract,
depending on the outcome of a certain boolean event. 
Let $\pmv{o}$ be an oracle who certifies such an outcome, 
by signing the value $1$ or $0$.
We model the contract as the following script:

\smallskip\centerline{\(%
\begin{array}{r@{\;\;}c@{\;\;}l}
  \script{Oracle} &\eqdef&
  \actE{\actType} = \actClose \, \andE \, \actE{\actTok} = \ALGO \, \andE \, \big(
    (\actE{\actTf}>\timeTmax \andE \actE{\actRcv} = \pmvA)
  \\[0mm]
  && \orE\; (\argE{0}=0 \andE \versig{\argE{0}}{\argE{1}}{\pmv{o}} \andE \actE{\actRcv} = \pmvA)
  \\[0mm]
  && \orE\; (\argE{0}=1 \andE \versig{\argE{0}}{\argE{1}}{\pmv{o}} \andE \actE{\actRcv} = \pmvB)
    \big)
\end{array}
\)}\smallskip%

Once created, the contract accepts only $\actClose$ transactions, using two arguments as witnesses.
The argument $\argE{0}$ contains the outcome,
while $\argE{1}$ is $\pmv{o}$'s signature on $(\script{Oracle},\argE{0})$,
\ie, the concatenation between the script and the first argument.
The user $\pmvB$ can collect the funds in $\script{Oracle}$ if $\pmv{o}$ certifies the outcome $1$,
while $\pmvA$ can collect the funds if the outcome is $0$, or after round $\timeTmax$.

\Cref{th:oracle} below proves that $\script{Oracle}$ works as intended.
To state it, we define $\actLS[\pmv{p}]$ as the set of transactions
allowing a user $\pmv{p}$ % $\in \setenum{\pmvA,\pmvB}$
to withdraw the contract funds:

\smallskip\centerline{\(%\[
\actLS[\pmv{p}] = \setcomp{\actL\;}{\;\actL.\actType = \actClose,\; \actL.\actSnd = \script{Oracle},\; \actL.\actRcv = \pmv{p},\; \actL.\actTok = \ALGO}
\)}\smallskip%\]

\noindent%
The \namecref{th:oracle} considers the following strategies for $\pmvA$, $\pmvB$, and $\pmv{o}$:
\begin{itemize}
\item $\stratS{\pmvA}$: wait for $s = \sig{\pmv{o}}{\script{Oracle},0}$; 
  if $s$ arrives at round $\timeT \leq \timeTmax$, 
  then immediately send a transaction $\actL \in \actLS[\pmvA]$ with $\actL.\actTf = \timeT$
  and witness $0 \, s$;
  otherwise, at round $\timeTmax+1$,
  send a transaction $\actL \in \actLS[\pmvA]$ with $\actL.\actTf = \timeTmax+1$;
\item $\stratS{\pmvB}$: wait for $s' = \sig{\pmv{o}}{\script{Oracle},1}$; 
  if $s'$ arrives at round $\timeT$, immediately 
  send a transaction $\actL \in \actLS[\pmvB]$
  with $\actL.\actTf = \timeT$ and witness $1 \, s'$;
\item $\stratS{\pmv{o}}$: do one of the following:
  \begin{enumerate*}[label=\emph{(\alph*)}]
  \item send $\pmv{o}$'s signature on $(\script{Oracle},0)$ at any time, or 
  \item send $\pmv{o}$'s signature on $(\script{Oracle},1)$ at any time, or 
  \item do nothing.
  \end{enumerate*}
\end{itemize}

\begin{theorem}
  \label{th:oracle}
  Let $\runS$ be a run conforming to some set of strategies $\stratSet$, such that:
  \begin{inlinelist}
  \item $\stratS{\pmv{o}} \in \stratSet$;
  \item $\runS$ reaches, at some round before $\timeTmax$, a state $\acct{\script{Oracle}}{\tokMapS} \mid \cdots$;
  \item $\runS$ reaches the round $\timeTmax + 2$.
  \end{inlinelist}
  Then, with overwhelming probability:
  \begin{enumerate}[(1)]
  \item \label{th:oracle:a}
    if $\stratS{\pmvA} \in \stratSet$
    and $\pmv{o}$ has not sent a signature on $(\script{Oracle},1)$, 
    then $\runS$ contains a transaction in $\actLS[\pmvA]$,
    transferring at least $\tokMapS(\ALGO)$ to $\pmvA$;
  \item \label{th:oracle:b}
    if $\stratS{\pmvB} \in \stratSet$ 
    and $\pmv{o}$ has sent a signature on $(\script{Oracle},1)$ at round $\timeT \leq \timeTmax$,
    then $\runS$ contains a transaction in $\actLS[\pmvB]$,
    transferring at least $\tokMapS(\ALGO)$ to $\pmvB$.
  \end{enumerate}
\end{theorem}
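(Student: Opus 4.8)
The plan is to combine the operational semantics of the $\script{Oracle}$ script with three ingredients: (a) the value--preservation and no--double--spending lemmas (\Cref{th:model:value-preservation}, \Cref{th:model:no-double-spending}); (b) the unforgeability of $\pmv{o}$'s signatures, which is the only source of the ``overwhelming probability'' qualifier; and (c) the constraints on the scheduler, namely $\delta_{\Adv} = 0$ and the non--Zeno condition, which give the liveness needed to conclude that the withdrawing transaction is actually performed.

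\textbf{Step 1: characterising the spends from $\script{Oracle}$.}
First I would show that any transaction $\actL$ with $\actL.\actSnd = \script{Oracle}$ that is ever performed in $\runS$ satisfies $\actL.\actType = \actClose$, $\actL.\actTok = \ALGO$, and one of: (i) $\actL.\actTf > \timeTmax$ and $\actL.\actRcv = \pmvA$; (ii) its witness is $0\,s$ with $s$ a valid $\pmv{o}$--signature on $(\script{Oracle},0)$ and $\actL.\actRcv = \pmvA$; (iii) its witness is $1\,s'$ with $s'$ a valid $\pmv{o}$--signature on $(\script{Oracle},1)$ and $\actL.\actRcv = \pmvB$. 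This follows because $\tokMngr[\mngrMap]{\actL} = \script{Oracle}$, so by the authorization predicate the script must evaluate to $\true$ on the witness, and inspecting the three disjuncts of $\script{Oracle}$ gives exactly these cases. In particular no $\actOptin$ or $\actGen$ with sender $\script{Oracle}$ can succeed, so the account balance stays of the form $\setenum{\bind{\ALGO}{\cdot}}$; combined with the fact that no rule decreases $\tokMapS(\ALGO)$ of an account other than through a transaction with that account as sender, the $\ALGO$--balance of $\script{Oracle}$ is non--decreasing from the funding state until the \emph{first} transaction of type (i)--(iii), and $\ALGO$ is never burnt (\Cref{th:model:value-preservation}). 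That first such transaction is a $\actClose$ transferring the entire balance, hence at least $\tokMapS(\ALGO)$, to its receiver.

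\textbf{Step 2: liveness and the case analysis.}
Next I would isolate the liveness principle: since $\delta_{\Adv} = 0$, the adversary can fire $\labTime$ only when every user proposes $\labTime$; hence if a user keeps proposing a fixed enabled transaction, the round cannot advance until that transaction is performed, and since $\runS$ reaches round $\timeTmax+2$ the round does advance, so the transaction is performed. For \ref{th:oracle:a}: condition on the (overwhelmingly likely) event that no forged $\pmv{o}$--signature ever appears, so type (iii) transactions are impossible (as $\pmv{o}$ never signed $(\script{Oracle},1)$ by hypothesis), and by Step~1 \emph{every} spend from $\script{Oracle}$ is then in $\actLS[\pmvA]$. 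If $\stratS{\pmv{o}}$ broadcasts $s = \sig{\pmv{o}}{\script{Oracle},0}$ at a round $\timeT \le \timeTmax$, then $\stratS{\pmvA}$ persistently proposes $\actL \in \actLS[\pmvA]$ with witness $0\,s$ from round $\timeT$, so it is performed (type (ii)); otherwise type (i) and (ii) spends are impossible before round $\timeTmax+1$ (type (i) transactions are not temporally valid at rounds $\le \timeTmax$, type (ii) needs $s$ which has not yet been broadcast), so $\script{Oracle}$ is still funded at round $\timeTmax+1$, where $\stratS{\pmvA}$ fires $\actL \in \actLS[\pmvA]$ with $\actL.\actTf = \timeTmax+1$; the liveness principle and Step~1 then give the claim. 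For \ref{th:oracle:b}: by $\stratS{\pmv{o}}$, $\pmv{o}$ having signed $(\script{Oracle},1)$ means it did \emph{not} sign $(\script{Oracle},0)$, so (conditioning again on no forgeries) type (ii) is impossible and type (i) is not temporally valid at round $\timeT \le \timeTmax$; thus until $\pmvB$'s transaction with witness $1\,s'$ and $\actL.\actTf = \timeT$ is performed, no other spend from $\script{Oracle}$ is possible and the round cannot pass $\timeT$, so by the liveness principle it is performed, yielding a transaction in $\actLS[\pmvB]$ transferring at least $\tokMapS(\ALGO)$. Determinism (\Cref{th:model:deterministic}) and \Cref{th:model:no-double-spending} are used in the background to ensure the contract cannot be drained twice and that the reached states are well defined.

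\textbf{Main obstacle.}
I expect the delicate part to be making the liveness principle rigorous from the formal definitions of strategies, adversarial strategy, $\delta_{\Adv}$, and the non--Zeno condition --- in particular, arguing that a user who ``persistently proposes'' an enabled transaction is not perpetually preempted by the adversary, and that no competing (wrong--recipient) close can win the race before it. The cryptographic half is routine but must be phrased as a reduction to the signature--unforgeability game to justify replacing the whole argument by its conditional version ``assuming no forged $\pmv{o}$--signature ever appears'', which holds with overwhelming probability over runs conformant to $\stratSet$.
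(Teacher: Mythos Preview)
Your proposal is correct and follows essentially the same route as the paper. The paper packages your Step~1 into a reusable lemma (\Cref{lem:close}) stating that for any script of the shape $\actE{\actType}=\actClose \andE \cdots$ the balance has domain $\setenum{\ALGO}$, is monotone until the first close, and any script-satisfying $\actClose$ is enabled; it then runs exactly your case analysis on the three disjuncts of $\script{Oracle}$. Two minor points of comparison: (i) the paper argues liveness by contradiction (``assume $\script{Oracle}$ has not been closed by round $\timeT$; then $\pmvA$'s transaction is enabled and closes it --- contradiction'') rather than via an explicit liveness principle, which sidesteps the subtlety you flag about a persistently-proposed transaction possibly being \emph{disabled} by a competing close before it fires; your observation that any competing close is already in $\actLS[\pmvA]$ (resp.\ $\actLS[\pmvB]$) is what makes both phrasings go through. (ii) The paper does not invoke \Cref{th:model:no-double-spending}, \Cref{th:model:value-preservation}, or \Cref{th:model:deterministic} here; everything needed is contained in \Cref{lem:close}, so those citations in your argument are harmless but unnecessary. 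Your identification of the liveness formalisation as the delicate step is apt; the paper's own proof leaves it at the same level of informality.
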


Notice that in item~\ref{th:oracle:a} we are only assuming that $\pmvA$ and $\pmv{o}$ use
the strategies $\stratS{\pmvA}$ and $\stratS{\pmv{o}}$, while $\pmvB$ and $\Adv$ can use \emph{any} strategy (and possibly collude).
Similarly, in item~\ref{th:oracle:b} we are only assuming $\pmvB$'s and $\pmv{o}$'s strategies.
% In this case, the~\namecref{th:oracle} guarantees that $\pmvA$ withdraws the funds from the contract
% if the outcome is $0$, or the deadline has passed.

\paragraph{Hash Time Lock Contract (HTLC).}
\label{ex:htlc}

A user $\pmvA$ % chooses a secret $\secret{\pmvA}{}$,
promises that she will either reveal a secret $\secret{\pmvA}{}$ by round $\timeTmax$,
or pay a penalty to $\pmvB$.
More sophisticated contracts, \eg gambling games,
use this mechanism to let players generate random numbers in a fair way.
We define the HTLC as the following contract,
parameterised on the two users $\pmvA,\pmvB$ and the hash $\hash{\pmvA}{} = \hashSem{\secret{\pmvA}{}}$ of the secret:

\smallskip\centerline{\(%
\begin{array}{r@{\;\;}c@{\;\;}l@{}l}
  \HTLC[\pmvA,\pmvB,\hash{\pmvA}{}] &\!\eqdef\!&
  \actE{\actType} = \actClose \, \andE \,  \actE{\actTok} = \ALGO \, \andE
  \\[0pt]
  && \hspace{-8pt}\big(
     (\actE{\actRcv} = \pmvA \andE \hashE{\argE{0}} = \hash{\pmvA}{}) \!\orE\!
     (\actE{\actRcv} = \pmvB \andE \actE{\actTf \geq \timeTmax})
     \big)
\end{array}
\)}\smallskip%

\noindent%
The contract accepts only $\actClose$ transactions with receiver $\pmvA$ or $\pmvB$.
User $\pmvA$ can collect the funds in the contract only by providing the secret $\secret{\pmvA}{}$ in $\argE{0}$,
effectively making $\secret{\pmvA}{}$ public.\footnote{%
  If $\secret{\pmvA}{}$ is a sufficiently long bitstring generated uniformly at random, 
  collision resistance of the hash function ensures that only $\pmvA$ (who knows $\secret{\pmvA}{}$) can
  provide such an $\argE{0}$.%
} %
Instead, if $\pmvA$ does not reveal $\secret{\pmvA}{}$, then $\pmvB$ can collect the funds
after round $\timeTmax$.
We state the correctness of $\HTLC$ in \Cref{th:htlc};
first, let $\actLS[\pmv{p}]$ be the set of transactions
allowing user $\pmv{p}$ % \in \setenum{\pmvA,\pmvB}$
to withdraw the contract funds:

\smallskip\centerline{\(%
\actLS[\pmv{p}] \;=\; \setcomp{\actL\;}{\;\actL.\actType = \actClose,\; \actL.\actSnd = \HTLC[\pmvA,\pmvB,\hash{\pmvA}{}],\; \actL.\actRcv = \pmv{p},\; \actL.\actTok = \ALGO}
\)}\smallskip%

\noindent%
We consider the following strategies for $\pmvA$ and $\pmvB$:
\begin{itemize}
\item $\stratS{\pmvA}$: at a round $\timeT < \timeTmax$, 
  send a $\actL \in \actLS[\pmvA]$ with $\actL.\actTf = \timeT$ 
  and witness~$\secret{\pmvA}{}$;
\item $\stratS{\pmvB}$: at round $\timeTmax$, check whether any transaction in $\actLS[\pmv{a}]$ occurs in $\runS$. 
  If not, then immediately send a transaction $\actL \in \actLS[\pmvB]$ with $\actL.\actTf = \timeTmax$.
\end{itemize}

\begin{theorem}
  \label{th:htlc}
  Let $\runS$ be a run conforming to some set of strategies $\stratSet$, such that:
  \begin{inlinelist}
  \item \label{th:htlc:a}
    $\runS$ reaches, at some round before $\timeTmax$, a state $\acct{\script{\HTLC[\pmvA,\pmvB,\hash{\pmvA}{}]}}{\tokMapS} \mid \cdots$;
  \item \label{th:htlc:b}
    $\runS$ reaches the round $\timeTmax + 1$.
  \end{inlinelist}
  Then, with overwhelming probability: 
  % bart: "with overwhelming probability" unneeded, because breaking crypto primitives (H) does not affect
  % the firing of T_a,b
  % zun: reverted, since item b claims s_a is revealed, and could be another preimage
  \begin{enumerate}[(1)]
  \item \label{th:htlc:reveal}
    if $\stratS{\pmvA} \in \stratSet$,
    then $\runS$ contains a transaction in $\actLS[\pmvA]$,
    transferring at least $\tokMapS(\ALGO)$ to $\pmvA$;
  \item \label{th:htlc:timeout}
    if $\stratS{\pmvB} \in \stratSet$ 
    and $\runS$ does not contain the secret $\secret{\pmvA}{}$ before round $\timeTmax+1$,
    then $\runS$ contains a transaction in $\actLS[\pmvB]$,
    transferring at least $\tokMapS(\ALGO)$ to $\pmvB$.
  \end{enumerate}
\end{theorem}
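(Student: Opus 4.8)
The plan is to first establish a structural invariant on the contract account, and then, for each item, argue by contradiction against an arbitrary adversarial schedule.

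\emph{Invariant.}
I would first show that, from the first state of the form $\acct{\HTLC[\pmvA,\pmvB,\hash{\pmvA}{}]}{\tokMapS} \mid \cdots$ onwards, the account at address $\HTLC[\pmvA,\pmvB,\hash{\pmvA}{}]$ always holds \emph{only} $\ALGO$ and its $\ALGO$ balance never decreases, until the first transaction having that address as $\actSnd$ is performed; and that this (necessarily unique) transaction transfers the whole $\ALGO$ balance --- hence at least $\tokMapS(\ALGO)$ --- to its receiver and then deletes the account. The reason is that the script has $\actE{\actType} = \actClose$ as a top conjunct, so by the authorization predicate of \Cref{sec:model:auth} the contract can only authorise $\actClose$ transactions: in particular it never authorises an $\actOptin$, so no further asset can ever be added (a freshly created script account holds only $\ALGO$); incoming $\actPay$ and $\actClose$ transactions need no authorisation from the contract and can only increase its $\ALGO$ balance; and $\actRevoke$ on $\ALGO$ is impossible, since $\ALGO \notin \dom(\mngrMap)$ in every reachable state, while $\ALGO$ is likewise never frozen. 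The step rule for \textsf{close} with token $\ALGO$ then gives the stated effect. Consequently it suffices, for each item, to prove that \emph{some} transaction in the relevant set $\actLS[\pmvA]$, resp. $\actLS[\pmvB]$, is performed in $\runS$.

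\emph{Item (1).}
Assume $\stratS{\pmvA} \in \stratSet$ and, for contradiction, that no transaction in $\actLS[\pmvA]$ occurs in $\runS$. No transaction in $\actLS[\pmvB]$ can be performed before round $\timeTmax$, since such a transaction needs $\actE{\actTf} \geq \timeTmax$ and temporal validity (\Cref{sec:model:basics}) then forces the current round to be $\geq \timeTmax$; so by the invariant the contract cannot be drained before $\timeTmax$, and keeps holding at least $\tokMapS(\ALGO)$. By hypothesis it exists at some round $\timeT < \timeTmax$, and is still there when $\stratS{\pmvA}$ is consulted, so $\stratS{\pmvA}$ broadcasts $\secret{\pmvA}{}$ and issues a transaction $\actL \in \actLS[\pmvA]$ with $\actL.\actTf = \timeT$ and witness $\secret{\pmvA}{}$. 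This $\actL$ is enabled and authorised: with $\argE{0} = \secret{\pmvA}{}$ the first disjunct of the script holds, since $\actL.\actRcv = \pmvA$ and $\hashE{\argE{0}}$ evaluates to $\hashSem{\secret{\pmvA}{}} = \hash{\pmvA}{}$, so the script returns $\true$; the step $\confG \step{\actL} \confGi$ is enabled because the account holds only $\ALGO$; the time constraints hold; and $\actL \notin \actDupSet$, as by assumption no transaction with $\actSnd = \HTLC[\pmvA,\pmvB,\hash{\pmvA}{}]$ has been performed. Since $\delta_{\Adv} = 0$, the adversary cannot perform $\labTime$ past round $\timeT$ while $\actL$ is pending without first either performing $\actL$ or making it infeasible, and the latter requires draining the contract, which can only be done by another transaction in $\actLS[\pmvA]$. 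As $\runS$ reaches round $\timeTmax + 1 > \timeT$, the run genuinely advances past $\timeT$, so one of these occurs: a transaction in $\actLS[\pmvA]$ is performed --- a contradiction. Hence such a transaction occurs, and by the invariant it transfers at least $\tokMapS(\ALGO)$ to $\pmvA$.

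\emph{Item (2).}
Assume $\stratS{\pmvB} \in \stratSet$, that $\secret{\pmvA}{}$ does not occur in $\runS$ before round $\timeTmax + 1$, and, for contradiction, that no transaction in $\actLS[\pmvB]$ occurs in $\runS$. The key step is: with overwhelming probability, no transaction in $\actLS[\pmvA]$ is performed before round $\timeTmax + 1$. Indeed, such a transaction would carry a witness $\argE{0}$ with $\hashE{\argE{0}} = \hash{\pmvA}{} = \hashSem{\secret{\pmvA}{}}$, and that witness bitstring would appear in the corresponding label of $\runS$ before round $\timeTmax + 1$; since $\secret{\pmvA}{}$ does not, $\argE{0} \neq \secret{\pmvA}{}$ would be a collision of $\hashSem{}$, contradicting collision resistance. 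Therefore, when $\stratS{\pmvB}$ is consulted at round $\timeTmax$, it sees no transaction in $\actLS[\pmvA]$ and issues $\actLi \in \actLS[\pmvB]$ with $\actLi.\actTf = \timeTmax$; as in item (1), $\actLi$ is enabled and authorised, now via the second disjunct of the script, since $\actLi.\actRcv = \pmvB$ and $\actLi.\actTf = \timeTmax \geq \timeTmax$. Since $\delta_{\Adv} = 0$ and $\runS$ reaches round $\timeTmax + 1$, the adversary must perform $\actLi$ at round $\timeTmax$ unless the contract is drained first --- which, before round $\timeTmax + 1$, can only happen through a transaction in $\actLS[\pmvA]$ (ruled out above) or $\actLS[\pmvB]$. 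Either way a transaction in $\actLS[\pmvB]$ is performed --- a contradiction. Hence such a transaction occurs, and by the invariant it transfers at least $\tokMapS(\ALGO)$ to $\pmvB$.

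\emph{Main obstacle.}
The technical heart is the interaction with the adversarial schedule: from $\delta_{\Adv} = 0$, the non-Zeno condition, and the hypothesis that $\runS$ reaches round $\timeTmax + 1$, one must argue precisely that the adversary can neither indefinitely postpone the honest withdrawal nor squeeze in a different draining transaction first. The only probabilistic reasoning is the collision-resistance step in item (2) that upgrades ``$\secret{\pmvA}{}$ never appears'' to ``no transaction in $\actLS[\pmvA]$ is performed before round $\timeTmax + 1$''; everything else is deterministic and follows from the invariant and the step rules.
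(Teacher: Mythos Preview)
Your proposal is correct and follows essentially the same approach as the paper: an invariant that the contract holds only $\ALGO$ with non-decreasing balance until closed (the paper isolates this as a standalone lemma for scripts of the form $\actE{\actType} = \actClose \andE \cdots$), followed by a contradiction argument for each item that inspects the two disjuncts of the script. The only structural difference is that you inline the invariant and spell out the scheduling argument ($\delta_{\Adv}=0$, non-Zeno, the run reaching $\timeTmax+1$) more explicitly, whereas the paper absorbs that into its lemma and leaves the scheduling implicit; the collision-resistance step in item~(2) is handled identically.
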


\paragraph{Lotteries.}

Consider a gambling game where $n$ players bet $1\ALGO$ each,
and the winner, chosen uniformly at random among them, can redeem $n\,\ALGO$s.
A simple implementation, 
inspired by~\cite{Andrychowicz14sp,Andrychowicz16cacm,bitcoinsok} for Bitcoin, 
requires each player to deposit $n(n-1)\,\ALGO$s as collateral in an HTLC contract.%
\footnote{A zero-collateral lottery is presented in~%
\iftoggle{arxiv}{%
  (\Cref{sec:smart-contracts:zero-collateral-lottery})%
}{%
  \cite[\S{}B.2]{abs-2009-12140}%
}%
.}
For $n=2$ players $\pmvA$ and $\pmvB$, such deposits are transferred by the following transactions:

\smallskip\centerline{\(%
\begin{array}{r@{\;\;}c@{\;\;}l}
  \actL[H\pmvA] 
  &=& \{ \actType:\actPay,\, \actSnd:\pmvA,\, \actRcv:\HTLC[\pmvA,\,\pmvB,\,\hash{\pmvA}{}],\, \actAmt:2,\, \actTok:\ALGO,\, \ldots \}
  \\[0mm]
  \actL[H\pmvB] 
  &=& \{ \actType:\actPay,\, \actSnd:\pmvB,\, \actRcv:\HTLC[\pmvB,\,\pmvA,\,\hash{\pmvB}{}],\, \actAmt:2,\, \actTok:\ALGO,\, \ldots \}
\end{array}
\)}\smallskip%

\noindent
The bets are stored in the following contract,
which determines the winner as a function of the secrets, and allows her to withdraw the whole pot:

\smallskip\centerline{\(%
\begin{array}{r@{\;\;}c@{\;\;}l}
  \Lottery &\!\eqdef\!
  & \actE{\actType} = \actClose \!\andE\! \actE{\actTok} = \ALGO 
    \!\andE\! \hashE{\argE{0}} = \hash{\pmvA}{} 
    \!\andE\! \hashE{\argE{1}} = \hash{\pmvB}{} 
  \\[0mm]
  && \!\andE \ifE{(\argE{0}+\argE{1}) \% 2 = 0}{\actE{\actRcv} = \pmvA}{\actE{\actRcv} = \pmvB}
\end{array}
\)}\smallskip%

\noindent
with $\hash{\pmvA}{} \!\neq\! \hash{\pmvB}{}$.%
\footnote{This check prevents a replay attack: if $\pmvA$ chooses $\hash{\pmvA}{} \!=\! \hash{\pmvB}{}$, then $\pmvB$ cannot win.}
Players $\pmvA$ and $\pmvB$ start the game with the atomic transactions:

\smallskip\centerline{\(%
\begin{array}{r@{\;\;}c@{\;\;}l}
  \multicolumn{3}{c}{
    \actLL[\pmvA,\pmvB] \;=\;
    \actL[H\pmvA] \, \actL[H\pmvB] \, \actL[L\pmvA] \, \actL[L\pmvB]
    \quad\text{where:}
  }\\[0mm]
  \actL[L\pmvA]
  &=& \{ \actType:\actPay,\, \actSnd:\pmvA,\, \actRcv:\Lottery,\, \actAmt:1,\, \actTok:\ALGO,\, \ldots \}
  \\[0mm]
  \actL[L\pmvB]
  &=& \{ \actType:\actPay,\, \actSnd:\pmvA,\, \actRcv:\Lottery,\, \actAmt:1,\, \actTok:\ALGO,\, \ldots \}
\end{array}
\)}\smallskip%

\noindent%
The transaction $\actL[L\pmvA]$ creates the contract with $\pmvA$'s bet, 
and $\actL[L\pmvB]$ completes it with $\pmvB$'s bet.
At this point, there are two possible outcomes:
\begin{enumerate}[label=\emph{(\alph*)}]
\item both players reveal their secret. 
  Then, the winner can withdraw the pot,
  by performing a $\actClose$ action on the $\Lottery$ contract, 
  providing as arguments the two secrets, and setting her identity in the $\actRcv$ field;
\item one of the players does not reveal the secret. % within the deadline.
  Then, the other player can withdraw the collateral in the other player's HTLC (and redeem her own).
\end{enumerate}

To formalise the correctness of the lottery, consider the sets of transactions:

\smallskip\centerline{\(%
\begin{array}{r@{\;\,}c@{\;\,}l}
  \actLSdec[\pmv{p},\pmv{q}]{secr}
  &=& \setcomp{\actL\;}{\;\actL.\actType = \actClose,\, \actL.\actSnd = \HTLC[\pmv{p},\pmv{q},\hash{\pmv{p}}{}],\, \actL.\actRcv = \pmv{p},\, \actL.\actTok = \ALGO}
  \\[2pt]
  \actLSdec[\pmv{p},\pmv{q}]{tout}
  &=& \setcomp{\actL\;}{\;\actL.\actType = \actClose,\, \actL.\actSnd = \HTLC[\pmv{p},\pmv{q},\hash{\pmv{p}}{}],\, \actL.\actRcv = \pmv{q},\, \actL.\actTok = \ALGO}
  \\[2pt]
  \actLSdec[\pmv{p}]{lott}
  &=& \setcomp{\actL\;}{\;\actL.\actType = \actClose,\, \actL.\actSnd = \script{Lottery},\, \actL.\actRcv = \pmv{p},\, \actL.\actTok = \ALGO}
\end{array}
\)}\smallskip%

\noindent
and consider the following strategy $\stratS{\pmvA}$ for $\pmvA$ (the one for $\pmvB$ is analogous):
\begin{enumerate}

\item at some $\timeT < \timeTmax$, 
  send a transaction $\actL \in \actLSdec[\pmvA,\pmvB]{secr}$ with $\actL.\actTf = \timeT$ 
  and witness~$\secret{\pmvA}{}$;

\item if some transaction in $\actLSdec[\pmvB,\pmvA]{secr}$ occurs in $\runS$ 
  at round $\timeTi < \timeTmax$,
  then extract its witness $\secret{\pmvB}{}$ and compute the winner; 
  if $\pmvA$ is the winner, immediately send 
  a transaction $\actL \!\in\! \actLSdec[\pmvA]{lott}$ with $\actL.\actTf = \timeTi$ and
  witness $\secret{\pmvA}{}\secret{\pmvB}{}$;

\item if at round $\timeTmax$ no transaction in $\actLS[\pmvB,\pmvA]$ occurs in $\runS$, 
  immediately send a transaction $\actL \!\in\! \actLSdec[\pmvB,\pmvA]{tout}$ with $\actL.\actTf = \timeTmax$.

\end{enumerate}

\Cref{th:lottery} below establishes that the lottery is fair, 
implying that the expected payoff of player $\pmvA$ following strategy $\stratS{\pmvA}$ 
is at least negligible;
instead, if $\pmvA$ does not follow $\stratS{\pmvA}$ (\eg, by not revealing her secret), the expected payoff may be negative;
analogous results hold for player $\pmvB$.
This result can be generalised for $n \!>\! 2$ players,
with a collateral of $n(n-1)\, \ALGO$s. %
As in the HTLC, we assume that $\secret{\pmvA}{}$ and $\secret{\pmvB}{}$
are sufficiently long bitstrings generated uniformly at random.

\begin{theorem}
  \label{th:lottery}
  Let $\runS$ be a run conforming to a set of strategies $\stratSet$, such that:
  \begin{inlinelist}
  \item $\runS$ contains, before $\timeTmax$, the label $\actLL[\pmvA,\pmvB]$;
  \item $\runS$ reaches round $\timeTmax \!+\! 1$.
  \end{inlinelist}
  For $\pmv{p} \!\neq\! \pmv{q} \!\in\! \setenum{\pmvA,\pmvB}$,
  if $\stratS{\pmv{p}} \!\in\! \stratSet$, then:
  \begin{inlinelist}[(1)]
  \item \label{th:lottery:reveal}
    $\runS$ contains a transaction in $\actLSdec[\pmv{p},\pmv{q}]{secr}$,
    transferring at least $2 \, \ALGO$ to $\pmv{p}$;
  \item \label{th:lottery:timeout}
    the probability that $\runS$ contains
    $\actLSdec[\pmv{q},\pmv{p}]{tout}$ or $\actLSdec[\pmv{p}]{lott}$,
    which transfer at least $1 \, \ALGO$ to $\pmv{p}$,
    is $\geq\!\frac{1}{2}$ (up-to a negligible quantity).%
  \end{inlinelist}
  %The same holds for player $\pmvB$ whenever $\stratS{\pmvB} \in \stratSet$.
\end{theorem}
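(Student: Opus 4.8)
The plan is to split the statement along the three accounts created by $\actLL[\pmvA,\pmvB]$ --- the two HTLCs $\HTLC[\pmvA,\pmvB,\hash{\pmvA}{}]$ and $\HTLC[\pmvB,\pmvA,\hash{\pmvB}{}]$, and the $\Lottery$ script --- and to reduce most of the work to the HTLC analysis behind \Cref{th:htlc} together with the structural properties of \Cref{sec:asc1-properties}. First I would record that, since $\actLL[\pmvA,\pmvB]$ is performed before $\timeTmax$ by hypothesis, inspecting the $\step{}$ rule for $\actPay$ shows that immediately afterwards $\HTLC[\pmv{p},\pmv{q},\hash{\pmv{p}}{}]$, $\HTLC[\pmv{q},\pmv{p},\hash{\pmv{q}}{}]$ and $\Lottery$ each hold at least $2\,\ALGO$; by value preservation (\Cref{th:model:value-preservation}) these balances leave their accounts only through the matching $\actClose$ transactions, and by \Cref{th:model:no-double-spending} and \Cref{th:model:deterministic} such a transaction, once performed, is never replayed nor undone. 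A recurring ingredient is $\delta_{\Adv}=0$: $\Adv$ may fire $\labTime$ only with every user's consent, so an honest $\pmv{p}$ can pin the round until her intended move is scheduled; since $\runS$ reaches $\timeTmax+1$, every move $\pmv{p}$ attempts at a round $<\timeTmax$ is actually performed at some round $<\timeTmax$.

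For item~\ref{th:lottery:reveal} I would argue as in \Cref{th:htlc}. Step~1 of $\stratS{\pmv{p}}$ fires, at some round $<\timeTmax$, a transaction $\actL\in\actLSdec[\pmv{p},\pmv{q}]{secr}$ with witness $\secret{\pmv{p}}{}$; evaluating $\HTLC[\pmv{p},\pmv{q},\hash{\pmv{p}}{}]$ on it takes the branch $\actE{\actRcv}=\pmv{p}\andE\hashE{\argE{0}}=\hash{\pmv{p}}{}$, which holds because $\hash{\pmv{p}}{}=\hashSem{\secret{\pmv{p}}{}}$, so $\actL$ is authorized. No other party can close this HTLC before $\timeTmax$: its $\pmv{q}$-branch needs $\actE{\actTf}\geq\timeTmax$, and its $\pmv{p}$-branch needs a preimage of $\hash{\pmv{p}}{}$, which --- $\secret{\pmv{p}}{}$ being a long uniformly random string and $\hashSem{}$ collision resistant --- no PPTIME party but $\pmv{p}$ exhibits except with negligible probability, and $\stratS{\pmv{p}}$ reveals it only here. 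Being a $\actClose$ on $\ALGO$, $\actL$ transfers the whole HTLC balance ($\geq 2\,\ALGO$) to $\pmv{p}$.

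For item~\ref{th:lottery:timeout}, let $\mathsf{win}\in\setenum{\pmvA,\pmvB}$ be the winner picked by $(\secret{\pmvA}{}+\secret{\pmvB}{}) \% 2$ as in $\Lottery$. I would first establish $\Pr[\mathsf{win}=\pmv{p}]\geq\frac{1}{2}$ up to a negligible term: both commitments $\hash{\pmvA}{},\hash{\pmvB}{}$ are fixed when $\actLL[\pmvA,\pmvB]$ is performed, before any secret is revealed, so by collision resistance each player can later open at most one value with non-negligible probability; conditioning on $\pmv{q}$ being bound to a unique $\secret{\pmv{q}}{}$ (otherwise $\pmv{q}$ can never open any $\pmv{q}$-owned account and the conclusion follows outright from the timeout case below), that value is independent of the uniformly random low bit of $\secret{\pmv{p}}{}$, hence $\Pr[\mathsf{win}=\pmv{p}]=\frac{1}{2}$. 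Then I would show that, conditioned on $\mathsf{win}=\pmv{p}$, $\stratS{\pmv{p}}$ always secures a qualifying transaction: if $\pmv{q}$ reveals $\secret{\pmv{q}}{}$ before $\timeTmax$ through a transaction in $\actLSdec[\pmv{q},\pmv{p}]{secr}$, step~2 fires some $\actL\in\actLSdec[\pmv{p}]{lott}$ with witness $\secret{\pmvA}{}\,\secret{\pmvB}{}$, which $\Lottery$ authorizes (its two hash checks pass and the parity selects $\actE{\actRcv}=\pmv{p}$) and which cannot be preempted, since the only competing close has $\actE{\actRcv}=\pmv{q}$ and fails the parity test; otherwise step~3 fires, at round $\timeTmax$, a transaction in $\actLSdec[\pmv{q},\pmv{p}]{tout}$ on $\HTLC[\pmv{q},\pmv{p},\hash{\pmv{q}}{}]$, authorized because $\actE{\actTf}\geq\timeTmax$ and not preemptable before $\timeTmax+1$. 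Either way a transaction in $\actLSdec[\pmv{q},\pmv{p}]{tout}\cup\actLSdec[\pmv{p}]{lott}$ occurs and moves $\geq 2\,\ALGO\geq 1\,\ALGO$ to $\pmv{p}$; combined with $\Pr[\mathsf{win}=\pmv{p}]\geq\frac{1}{2}$ this gives the stated bound.

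I expect the hard part to be item~\ref{th:lottery:timeout}: arguing that an adversarial $\pmv{q}$, who after seeing $\pmv{p}$'s revealed $\secret{\pmv{p}}{}$ may adaptively decide whether to open her own commitment, still cannot bias $\mathsf{win}$ --- this is exactly where collision resistance and the simultaneity of the two commitments inside $\actLL[\pmvA,\pmvB]$ are needed --- and then dovetailing this with a case analysis of $\stratS{\pmv{p}}$ over all combinations of ``$\pmv{q}$ reveals / does not reveal'' and ``$\pmv{p}$ wins / loses'', resolving the within-round scheduling races near $\timeTmax$ via $\delta_{\Adv}=0$. The remaining obligations are routine checks of the $\step{}$ rules and of the script evaluation of \Cref{fig:model:scripts-sem} against each step of the strategies involved.
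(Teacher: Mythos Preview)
Your proposal is correct and follows essentially the same approach as the paper's own proof: part~\ref{th:lottery:reveal} is reduced to the HTLC security argument of \Cref{th:htlc}, and part~\ref{th:lottery:timeout} is handled by a case split on whether $\pmv{q}$ reveals, combined with the observation that the committed $\secret{\pmv{q}}{}$ is fixed independently of the uniformly random $\secret{\pmv{p}}{}$, so the parity bit is unbiased. Your write-up is substantially more explicit than the paper's (which largely defers to~\cite{Andrychowicz14sp,Andrychowicz16cacm}), in particular in flagging the within-round scheduling races near $\timeTmax$ and the role of $\delta_{\Adv}=0$; the only minor quibble is that the monotonicity of the contract balances is not really a consequence of \Cref{th:model:value-preservation} but of the fact that each script authorises only $\actClose$ transactions (cf.\ \Cref{lem:close}).
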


\paragraph{Periodic payment.}
We want to ensure that $\pmvA$ can withdraw a fixed amount of $\valV\,\ALGO$s
at fixed time windows of $p$ rounds.
We can implement this behaviour through the following contract,
which can be refilled when needed:

\smallskip\centerline{\(%
\begin{array}{r@{\;\;}c@{\;\;}l}
  \script{PP}(p,d,n)
  &\eqdef
  & \actE{\actType} = \actPay \,\andE\,
    \actE{\actAmt} = \valV \,\andE\, 
    \actE{\actTok} = \ALGO \,\andE\,
  \\[0mm]
  && \actE{\actRcv} = \pmvA \,\andE\,
    \actE{\actTf} \,\%\, p = 0 \,\andE\,
    \actE{\actTl} = \actE{\actTf} + d \,\andE\,
    \actE{\actLease} = n
\end{array}
\)}\smallskip%

\noindent
The contract accepts only $\actPay$ transactions of $\valV\,\ALGO$s to receiver $\pmvA$.
The conditions\; $\actE{\actTf} \,\%\, p = 0$ \;and\; $\actE{\actTl} = \actE{\actTf} + d$ 
\;ensure that the contract only accepts transactions with validity interval
$[k \, p, k \, p + d]$, for $k \!\in\! \Nat$.
The condition\; $\actE{\actLease} = n$ \;ensures that \emph{at most} one such transactions
is accepted for each time window.

\paragraph{Finite-state machines.}

Consider a set of users $\pmvAS$ who want to stipulate a contract whose behaviour is given
by a finite-state machine with states $q_0,\ldots,q_n$.
We can implement such a contract by representing each state $q_i$ as a script $\scriptE[i]$; the current state/script holds the assets,
and each state transition $q_i \rightarrow q_j$
is a clause in $\scriptE[i]$ which enables 
a $\actClose$ transaction to transfer the assets to $\scriptE[j]$.
This clause requires $\actE{\actRcv} = \scriptE[j]$
--- except in case of loops,
which cannot be encoded directly:\footnote{%
  This is because Algorand contracts
  cannot have circular references:
  contract accounts are referenced by script hashes,
  and no script can depend on its own hash.%
}
in this case, we identify the next state as $\actE{\actRcv} = \argE{0}$,
also requiring all users in $\pmvAS$
to sign $\argE{0}$ to confirm its correctness.
To ensure that any user in $\pmvAS$ can trigger a state transition
(by firing the corresponding transaction), their signatures must be exchanged
before the contract starts.
We show an instance of this pattern as the 
two-phase authorization contract below.

An alternative technique is based on quines.
As before, a state transition $q_i \rightarrow q_j$  is rendered as a transaction
which closes $\scriptE[i]$ and transfers the balance to $\scriptE[j]$. 
Here, all such scripts $\scriptE[k]$ have the same code, except for a single state constant $k$ which occurs at a specific offset, and which represents the current state. 
To verify that $\actE{\actRcv}$ represents a legit next state, 
$\scriptE[i]$ requires a witness $\witW$ such that:
\begin{inlinelist}
\item $\actE{\actRcv}$ is equal to the hash of $\witW$, and the state constant $j$ within $\witW$ is indeed a next state for $i$;
\item $\actE{\actSnd}$ is equal to the hash of $\witWi$, where $\witWi$ is obtained from $\witW$ by replacing the state constant $j$ with the current state $i$. 
\end{inlinelist}
Performing these checks could be possible by using concatenation and substring operators.%
\footnote{In Algorand, these operators are available only for $\code{LogicSigVersion} \geq 2$.}

\paragraph{Two-phase authorization.}

We want a contract to allow user $\pmv{c}$ to withdraw some funds,
but only if authorized by $\pmvA$ and $\pmvB$.
We want $\pmvA$ to give her authorization first;
if $\pmvB$'s authorization is not given within $p \geq \MaxTxnLife$ rounds,
then anyone can fire a transaction to reset the contract to its initial state.
We model this contract with two scripts:
$\script{P1}$ represents the state where no authorization has been given yet,
while $\script{P2}$ represents the state where 
$\pmvA$'s authorization has been given.
Conceptually, the contract implements a finite-state machine,
looping between two states until the contract funds are withdrawn by $\pmv{c}$.

\smallskip\centerline{\(%
\begin{array}{r@{\;\;}c@{\;\;}l}
  \script{P1}
  &\eqdef
  & \actE{\actType} = \actClose \,\andE\,
    \actE{\actTok} = \ALGO \,\andE\,
    \versig{\actId}{\argE{0}}{\pmvA} \andE
  \\[0mm]
  && \actE{\actRcv} = \script{P2} \,\andE\,
    \actE{\actTf} \,\%\, (4 * p) = 0 \,\andE\,
    \actE{\actTl} = \actE{\actTf} + \MaxTxnLife
  \\[2pt]
  \script{P2}
  &\eqdef
  & \actE{\actType} = \actClose \,\andE\,
    \actE{\actTok} = \ALGO \,\andE\, 
  \\[0mm]
  && \big( (\versig{\actId}{\argE{0}}{\pmvB} \andE \actE{\actRcv} = \pmv{c}) \orE
  \\[0mm]
  && \;\, (\versig{\argE{0}}{\argE{1}}{\pmvA[1]} \andE \versig{\argE{0}}{\argE{2}}{\pmvB[1]} \andE
  \\[0mm]
  && \hspace{9pt} \actE{\actRcv} = \argE{0} \andE
    \actE{\actTf} \,\%\, (4 * p) = 2 * p \andE 
    \actE{\actTl} = \actE{\actTf} + \MaxTxnLife) \big)
\end{array}
\)}\smallskip%

\noindent%
The scripts $\script{P1}$ and $\script{P2}$ use a time window with $4$ frames,
each lasting $p$ rounds.
Script $\script{P1}$ only accepts $\actClose$ transactions which transfer the balance to $\script{P2}$;
the time constraint ensures that such transactions are sent 
in the first time frame.
The script $\script{P2}$ accepts two kinds of transactions:
\begin{enumerate*}[label=\emph{(\alph*)}]
\item transfer the balance to $\pmv{c}$, using an authorization by $\pmvB$;
\item transfer the balance to $\script{P1}$, in the 4\textsuperscript{th} time frame.
\end{enumerate*}
Note that in $\script{P2}$ we cannot use the (intuitively correct) condition $\actE{\actRcv} = \script{P1}$, as it would introduce a circularity.
Instead, we apply the state machines technique described above:
we require $\actE{\actRcv} = \argE{0}$,
% To ensure that $\argE{0}$ is indeed $\script{P1}$, we require
with $\argE{0}$ signed by both $\pmvA$ and $\pmvB$,%
\footnote{We use other key pairs $\pmvA[1]$ and $\pmvB[1]$ to avoid confusion with the signatures on $\actId$.}
and assume that these signatures are exchanged before the contract starts.

\section{From the formal model to concrete Algorand}
\label{sec:tool}

We now discuss how to translate transactions and scripts in our model to concrete Algorand.
We first sketch how to compile our scripts into TEAL. 
The compilation of most constructs is straightforward.
For instance, a script $\scriptE + \scriptEi$ is compiled by using the opcode \code{+},
and similarly for the other arithmetic and comparison operators, and for the cryptographic primitives.
The logic operators $\!\!\!\andE\!\!$, $\!\!\!\orE\!\!$ 
are compiled via the opcode \code{bnz}, 
to obtain the short-circuit semantics.
The $\notE$ operator is compiled via the opcode \code{!}.
The operator $\actId[n]$ is compiled as \code{gtxn n TxID}, 
$\actLen$ is compiled as \code{global GroupSize}, 
$\actPos$ is compiled as \code{txn GroupIndex}, and
$\argE{n}$ as \code{arg n}.

Finally, compiling the script $\actE[n]{\actf}$ depends on the field $\actf$.
If $\actf$ is $\actTf$, $\actTl$, or $\actLease$, 
then the compilation is \code{gtxn n i}, 
where \code{i} is, respectively, \code{FirstValid}, \code{LastValid}, or \code{Lease}.
For the other cases of $\actf$, the compilation of $\actE[n]{\actf}$
generates a TEAL script which computes $\actf$
by decoding the concrete Algorand transaction fields, 
and making them available in the scratch space.
This decoding is detailed in
\iftoggle{arxiv}{\Cref{fig:app-tool} in~\Cref{sec:app-tool}}{Table 2 in~\cite{abs-2009-12140}}.
From the same table we can also infer how to translate transactions in the model to concrete Algorand transactions.
For instance, translating a transaction of the form:

\smallskip\centerline{\(
\setenum{ \actType : \actClose,\; \actSnd : \addrX,\; \actRcv : \addrY,\; \actTok : \tokT }
\)}\smallskip

\noindent%
results in the concrete transaction in~\Cref{fig:translate-close}
(where we omit the irrelevant fields).
% depending on whether $\tokT$ is $\ALGO$ or a user-defined asset.

\begin{figure}[t]
\centering
\begin{minipage}{0.4\textwidth}
\begin{lstlisting}[language=asc,numbers=none,numbersep=5pt,xleftmargin=10pt,classoffset=1,morekeywords={},classoffset=2,morekeywords={},framexbottommargin=0pt]
{
    "type": "pay",
    "snd": x,
    "rcv": 0,
    "close": y,
    "amt": 0
}
\end{lstlisting}
\end{minipage}
\begin{minipage}{0.4\textwidth}
\begin{lstlisting}[language=asc,numbers=none,numbersep=5pt,xleftmargin=10pt,classoffset=1,morekeywords={},classoffset=2,morekeywords={},framexbottommargin=0pt]
{
    "type": "axfer",
    "snd": x,
    "asnd": x,
    "arcv": 0,
    "aclose": y,
    "xaid": tau,
    "aamt": 0
}
\end{lstlisting}
\end{minipage}
\vspace{-15pt}
\caption{Translation of a $\actClose$ transaction (left: $\tokT = \ALGO$, right: $\tokT \neq \ALGO$).}
\label{fig:translate-close}
\end{figure}

Our modelling approach is supported by a prototype tool, called 
\secteal (\emph{sec}ure \emph{TEAL}), and accessible via a web 
interface at: %(in anonymized form)

\smallskip\centerline{%
\url{http://secteal.cs.stir.ac.uk/}
}\smallskip

\noindent
The core of the tool is a compiler that translates smart contracts 
written as expressions, based on the script language (\S\ref{sec:model:contracts}),
into executable TEAL bytecode.
In its current form, \secteal supports experimentation with our model, 
and is provided with a series of examples from \S\ref{sec:smart-contracts}.
Users can also compile their own \secteal contracts, 
paving the way to a declarative approach to contract design and development.
\secteal is a first building block toward a comprehensive IDE for the design, verification,
and deployment of contracts on Algorand.

\section{Conclusions}
\label{sec:conclusions}

This work is part of a wider research line on formal modelling of blockchain-based contracts,
ranging from Bitcoin~\cite{bitcointxm,Klomp18esorics,Rupic20fmbc}
to Ethereum~\cite{Bhargavan16solidether,Luu16ccs,Grishchenko18post,Hirai17wtsc,Hildenbrandt18csf,Grishchenko18cav},
Cardano~\cite{Chakravarty20wtsc},
Tezos~\cite{Bernardo19fm},
and Zilliqa~\cite{Sergey19pacmpl}.
These formal models are a necessary prerequisite to rigorously reason on the security of smart contracts,
and they are the basis for automatic verification.
Besides modelling the behaviour of transactions, in~\Cref{sec:attacker-model} we have proposed
a model of attackers: this enables us to prove properties of smart contracts in the presence
of adversaries, in the spirit of long-standing research in the cryptography area
\cite{Andrychowicz14bw,Andrychowicz14sp,Banasik16esorics,Bentov14crypto,delgado2017fair,Kumaresan14ccs,Kumaresan15ccs}.

\paragraph{Differences between our model and Algorand}
\label{sec:diff}

Besides not modelling the consensus protocol,
to keep the formalization simple, we chose to abstract from some aspects of ASC1,
which do not appear to be relevant to the development of (the majority of) smart contracts.
First, we are not modelling some transaction fields: 
among them, we have omitted the \textsf{fee} field, used to specify an amount of $\ALGO$s to be paid to nodes,
and the \textsf{note} field, used to embed arbitrary bitstrings into transactions.
We associate a single manager to assets, while Algorand uses different managers for different operations
(\eg, the freeze manager for $\actFreeze$/$\actUnfreeze$ and the clawback manager for $\actRevoke$).
We use two different transactions types, $\actPay$ and $\actClose$, 
to perform asset transfers and account closures: in Algorand, a single $\actPay$ transaction 
can perform both.
Note that we can achieve the same effect by performing the $\actPay$ and $\actClose$ transactions
within the same atomic group.
Although Algorand relies on 7 transaction types, the behaviour of some transactions needs to be further qualified by the combination of other fields 
(\eg, freeze and unfreeze are obtained by transactions with the same type \code{afrz}, but with a different value of the \code{AssetFrozen} field).
While this is useful as an implementation detail, our model simplifies reasoning about different behaviours by explicitly exposing them in the transaction type.
In the same spirit, while Algorand uses different transaction types to represent actions with similar functionality (\eg, transferring $\ALGO$s and user-defined assets are rendered with different transaction types, \code{pay} and \code{axfer}), we use the same transaction type (\eg, $\actPay$) for such actions.
Our model does not encompass some advanced features of Algorand, \eg:
rekeying accounts, 
key registration transactions (\code{keyreg}),
some kinds of asset configuration transaction
(\eg, decimals, default frozen, different managers),
and application call transactions.% 
\footnote{Application call transactions are used to implement \emph{stateful} contracts, and therefore are outside the scope of this paper.}
Our script language substantially covers TEAL opcodes
with \code{LogicSigVersion=1}, but for a few exceptions,
\eg bitwise operations, different hash functions, jumps.

% Nota: anche se non modelliamo tutti gli opcode di TEAL, gli statement di sicurezza dei contratti scritti nel nostro modello rimangono comunque veri, assumendo che i partecipanti usino una classe di protocolli in cui usano solo un set predefinito di contratti (che a loro volta, usano solo contratti in quel set).

\paragraph{Related work}

Besides featuring an original consensus protocol 
based on proof-of-stake~\cite{Chen19tcs}, 
Algorand has also introduced a novel paradigm of (stateless) smart contracts,
which differs from the paradigms of other mainstream blockchains.
On the one hand, Algorand follows the \emph{account-based} model,
similarly to Ethereum 
(and differently from Bitcoin and Cardano, which follow the UTXO model).
On the other hand, Algorand's paradigm of stateless contracts 
diverges from Ethereum's stateful contracts:
rather, it resembles Bitcoin’s, where contracts are based upon custom transaction redeem conditions.
Besides these differences, Algorand natively features user-defined assets, 
while other platforms render them as smart contracts 
(\eg, by implementing ERC20 and ERC721 interfaces in Ethereum).
Overall, these differences demand for a formal model 
that is substantially different from 
the models devised for the other blockchain platforms.

Our formalization of the Algorand's script language is close, 
with respect to the level of abstraction, to the model of Bitcoin script 
developed in~\cite{bitcointxm}.
Indeed, both works formalise scripts in an expression language, 
abstracting from the bytecode.
A main difference between Algorand and Ethereum is that Ethereum contracts are stateful: 
their state can be updated by specific bytecode instructions;
instead, (stateless) TEAL scripts merely authorize transactions.
Consequently, a difference between our model and formal models of Ethereum contracts is that the semantics of our scripts has no side effects.
In this way, our work departs from most literature on the formalization of Ethereum
contracts, where the target of the formalization is either the bytecode language EVM~\cite{Luu16ccs,Grishchenko18post,Hildenbrandt18csf},
or the high-level language Solidity~\cite{Crafa19wtsc,BGM19cbt,Jiao20sp}.

\paragraph{Future work}

Our formal model of Algorand smart contracts can be
expanded depending on the evolution of the Algorand framework.
In mid August 2020, Algorand has introduced \emph{stateful} ASC1 contracts
\cite{AlgorandStatefulContracts},
enriching contract accounts with a persistent key-value store,
accessible and modifiable through a new kind of transaction
(which can use an extended set of TEAL opcodes).
To accommodate stateful contracts in our model, we would need to embed 
the key-value store in contract accounts, and extend the
script language with key-value store updates.
The rest of our model (in particular, the semantics of transactions and the attacker model) 
is mostly unaffected by this extension.
Future work could also investigate declarative languages for stateful Algorand smart contracts,
and associated verification techniques.
Another research direction is the mechanization of our formal model, using a proof assistant: 
this would allow machine-checking the proofs developed by pencil-and-paper in~%
\iftoggle{arxiv}{%
  \Cref{sec:proofs}%
}{%
  \cite[\S{}D]{abs-2009-12140}%
}.
Similar work has been done \eg for Bitcoin~\cite{Rupic20fmbc}
and for Tezos~\cite{Bernardo19fm}.

\paragraph{Acknowledgements}
The authors thank the anonymous reviewers of Financial Cryptography 2021
for their insightful comments on a preliminary version of this paper.
Massimo Bartoletti is partially supported by 
Convenzione tra Fondazione di Sardegna e Atenei Sardi
project F74I19000900007 \emph{ADAM}.
Cristian Lepore is partially supported by The Data Lab, Innovation Center.
Alceste Scalas is partially supported by EU Horizon 2020 project 830929 \textit{CyberSec4Europe}, and Industriens Fonds Cyberprogram 2020-0489 \textit{Security-by-Design in Digital Denmark (Sb3D)}.
Roberto Zunino is partially supported by MIUR PON \textit{Distributed Ledgers for Secure Open Communities}.

\bibliographystyle{splncs04}
\bibliography{main}

\iftoggle{arxiv}{
  \newpage
  \appendix
  \section{The stateless ASC1 state machine}
\label{sec:app-model}

We assume the following sets:
\begin{itemize}
\item $\PmvU$, the set of all users;
\item $\ScriptU$, the set of all scripts;
\item $\AddrU = (\PmvU^* \times \Nat) \,\cup\, \ScriptU$, the set of all addresses;
\item $\TokU$, the set of all assets;
\item $\ActU$, the set of all transactions;
\item $\UIntX = \setcomp{\valV \in \Int}{0 \leq \valV < 2^{64}}$, the set of unsigned 64-bit integers.
\end{itemize}

\noindent
We define the partial operator $\circ$ on accounts as follows:
\[
\tokMapS \;\circ\; \valV:\tokT \; = \; 
\tokMapS \setenum{\bind{\tokT}{\tokMapS(\tokT) \circ \valV}} 
\qquad \text{if $\tokT \in \dom(\tokMapS)$ and $\circ \in \setenum{+,-}$}
\]
Note that the use of the $\circ$ operator
in an instance of the rules in~\Cref{fig:app-model} 
on a reachable state $\acct{\addrX}{\tokMapS} \mid \cdots$
never leads to overflows or underflows, 
since these rules correctly keep track of the number of assets.
In particular, overflow never happens since the number of each asset is
bounded ($10^{16}$ for $\ALGO$, and the value $\actAmt$ in $\actGen$
transactions for the other assets).

We define a partial function $\tokMngr[\mngrMap]{\actL}$ which gives the manager of a transaction~$\actL$:
\[
\tokMngr[\mngrMap]{\actL}{} = \begin{cases}
  \actL.\actSnd
  & \text{if } \actL.\actType \in \setenum{\actClose, \actPay, \actGen, \actOptin}
  \\
  \fst(\mngrMap(\actL.\actTok))
  & \text{if } \actL.\actType \in \setenum{\actRevoke,\actFreeze,\actUnfreeze,\actBurn,\actDelegate}
\end{cases}
\]

\noindent
Let $\denV[0], \denV[1] \in \UIntX \cup \setenum{\bot}$.
We define:
\begin{align*}
  \denV[0] \circ_\bot \denV[1] & \equiv \begin{cases}
    \denV[0] \circ \denV[1] & \text{if $\circ \in \setenum{+,-,*}$ and $\denV[0], \denV[1], \denV[0] \circ \denV[1] \in \UIntX$} 
    \\
    \lfloor \denV[0] / \denV[1] \rfloor & \text{if $\circ = /$ and $\denV[0] \in \UIntX$ and $\denV[1] \in \UIntX \setminus \setenum{0}$} 
    \\
    \denV[0] \,\mathrm{mod}\, \denV[1] & \text{if $\circ = \%$ and $\denV[0] \in \UIntX$ and $\denV[1] \in \UIntX \setminus \setenum{0}$}
    \\
    1 & \text{if $\circ \in \setenum{<, \leq, =, \geq, >}$ and $\denV[0],\denV[1] \in \UIntX$ and $\denV[0] \circ \denV[1]$}
    \\
    0 & \text{if $\circ \in \setenum{<, \leq, =, \geq, >}$ and $\denV[0],\denV[1] \in \UIntX$ and $\neg (\denV[0] \circ \denV[1])$}
    \\
    0 & \text{if $\circ = \andE\!$ and $\denV[0] = 0$}
    \\
    \denV[1] & \text{if $\circ = \andE\!$ and $\denV[0] \in \UIntX \setminus \setenum{0}$}
    \\
    1 & \text{if $\circ = \orE\!$ and $\denV[0] \in \UIntX \setminus \setenum{0}$}
    \\
    \denV[1] & \text{if $\circ = \orE\!$ and $\denV[0] = 0$}
    \\
    \bot & \text{otherwise}
  \end{cases}
\end{align*}
\[
\neg_{\bot} \denV[0] = \begin{cases}
  1 & \text{if $\denV[0] = 0$}
  \\
  0 & \text{if $\denV[0] \in \UIntX \setminus \setenum{0}$}
  \\
  \bot & \text{otherwise}
\end{cases}
\]

\noindent
A \emph{blockchain state} $\confG$ is a term with the following syntax:
\[
\confG \; \bnfdef \; 
\acct{\addrX}{\tokMapS} 
\;\;\bnfmid\;\;
\timeT 
\;\;\bnfmid\;\;
\actDupSet
\;\;\bnfmid\;\;
\mngrMap
\;\;\bnfmid\;\;
\leaseMap
\;\;\bnfmid\;\;
\freezeMap
\;\;\bnfmid\;\;
\confG \mid \confGi
\]
and subject to the following conditions:
\begin{itemize}
\item all the terms except $\acct{\addrX}{\tokMapS}$ occur exactly once in a configuration;
\item $\timeT \in \Nat$ is the \emph{current round};
\item $\actDupSet \subseteq \ActU$ is the set of performed transactions whose ``last valid'' time $\actTl$ has not expired;
\item $\mngrMap \in \TokU \rightarrow \AddrU \times \AddrU$ is the \emph{asset map}; 
\item $\leaseMap \in \AddrU \times \Nat \rightarrow  \Nat$ is the \emph{lease map};
\item $\freezeMap \in \AddrU \rightarrow \powset{\TokU}$ is the \emph{freeze map}; 
\item if $\acct{\addrX}{\tokMapS}$ and $\acct{\addrY}{\tokMapSi}$ occur in a configuration, 
  then $\addrX \neq \addrY$.
\item configurations form a commutative monoid with respect to the 
  composition operator $\mid$ (with identity $\confNil$).
\end{itemize}

We define in~\Cref{fig:app-model} a labelled transition relation $\stepL{}$ between blockchain states,
where labels are the following:
\begin{itemize}
\item $\actLL$ performs a sequence of transactions;
\item $\labTime$ advances one round.
\end{itemize}

\begin{figure}[t]
  \resizebox{1.1\textwidth}{!}{%
    \(
    \begin{array}{c}
      \irule
      {\begin{array}{l}
         \actL.\actType=\actPay \quad \actL.\actSnd=\addrX \quad \actL.\actTok = \ALGO \quad \actL.\actRcv=\addrY \quad
         \timeOk{\leaseMap,\timeT}{\actL} \quad
         \actL \not\in \actDupSet
         \\[4pt]
         \acctOk{\tokMapS - \actL.\actAmt:\ALGO} \quad
         \tokMapSi = \setenum{\bind{\ALGO}{\actL.\actAmt}} \quad
         \acctOk{\tokMapSi} \quad
         \acct{\addrY}{\cdots} \text{ not in } \confG
       \end{array}
      }
      {\begin{array}{l}
         \acct{\addrX}{\tokMapS} \mid \confG \mid \timeT \mid \actDupSet \mid \mngrMap \mid \leaseMap  \mid \freezeMap
         \; \step{\actL} \;
         \\[4pt]
         \acct{\addrX}{\tokMapS - \actL.\actAmt:\ALGO} \mid \acct{\addrY}{\tokMapSi} \mid \confG \mid \timeT \mid \actDupSet \cup \setenum{\actL} \mid \mngrMap \mid \leaseUpdate{\leaseMap}{\actL}{\timeT} \mid \freezeMap
       \end{array}
      }
      \; \nrule{[Pay-Open]}
      \\[40pt]
      \irule
      {\begin{array}{l}
         \actL.\actType=\actClose \quad \actL.\actSnd=\addrX \quad \actL.\actRcv=\addrY \quad
         \actL.\actTok=\ALGO \quad 
         \timeOk{\leaseMap,\timeT}{\actL} \quad
         \actL \not\in \actDupSet
         \\[4pt]
         \dom(\tokMapS) = \setenum{\ALGO} \quad
         \acct{\addrY}{\cdots} \text{ not in } \acct{\addrX}{\tokMapS} \mid \confG \quad
         % \tokMapSi = \setenum{\bind{\ALGO}{\tokMapS(\ALGO)}} \quad
         % \acctOk{\tokMapSi}
       \end{array}
      }
      {\begin{array}{l}
         \acct{\addrX}{\tokMapS} \mid \confG \mid \timeT \mid \actDupSet \mid \mngrMap \mid \leaseMap \mid \freezeMap
         \; \step{\actL} \;
         \\[4pt]
         \acct{\addrY}{\tokMapS} \mid \confG \mid \timeT \mid \actDupSet \cup \setenum{\actL} \mid \mngrMap \mid \leaseUpdate{\leaseMap}{\actL}{\timeT} \mid \freezeMap
       \end{array}
      }
      \; \nrule{[Close-Open]}
      \\[40pt]
      \irule
      {\begin{array}{l}
         \actL.\actType=\actClose \quad \actL.\actSnd=\addrX \quad \actL.\actRcv=\addrY \quad
         \actL.\actTok=\ALGO \quad 
         \timeOk{\leaseMap,\timeT}{\actL} \quad
         \actL \not\in \actDupSet
         \\[4pt]
         \dom(\tokMapS) = \setenum{\ALGO}
       \end{array}
      }
      {\begin{array}{l}
         \acct{\addrX}{\tokMapS} \mid \acct{\addrY}{\tokMapSi} \mid \confG \mid \timeT \mid \actDupSet \mid \mngrMap \mid \leaseMap \mid \freezeMap
         \; \step{\actL} \;
         \\[4pt]
         \acct{\addrY}{\tokMapSi + \tokMapS(\ALGO):\ALGO} \mid \confG \mid \timeT \mid \actDupSet \cup \setenum{\actL} \mid \mngrMap \mid \leaseUpdate{\leaseMap}{\actL}{\timeT} \mid \freezeMap
       \end{array}
      }
      \; \nrule{[Close-Pay]}
      \\[40pt]
      \irule
      {\begin{array}{l}
         \actL.\actType = \actClose \quad
         \actL.\actSnd = \addrX \quad
         \actL.\actRcv = \addrY \quad
         \actL.\actTok = \tokT \quad 
         \timeOk{\leaseMap,\timeT}{\actL} \quad
         \actL \not\in \actDupSet
         \\[4pt]
         \tokT \neq \ALGO \quad
         \tokMapS(\tokT) = \valV \quad
         % \acctOk{\tokMapS\setenum{\bind{\tokT}{\bot}}} \quad    % always true
         \tokT \in \dom(\tokMapSi) \quad
         \acctOk{\tokMapSi + \valV:\tokT} \quad
         \tokT \not\in \freezeMap(\addrX) \cup \freezeMap(\addrY)
       \end{array}
      }
      {\begin{array}{l}
         \acct{\addrX}{\tokMapS} \mid \acct{\addrY}{\tokMapSi} \mid \confG \mid \timeT \mid \actDupSet \mid \mngrMap \mid \leaseMap \mid \freezeMap
         \; \step{\actL} \;
         \\[4pt]
         \acct{\addrX}{\tokMapS\setenum{\bind{\tokT}{\bot}}} \mid \acct{\addrY}{\tokMapSi + \valV:\tokT} \mid \confG  \mid \timeT \mid \actDupSet \cup \setenum{\actL} \mid \mngrMap \mid \leaseUpdate{\leaseMap}{\actL}{\timeT} \mid \freezeMap
       \end{array}
      }
      \; \nrule{[Close-Asst]}
      \\[40pt]
      \irule
      {\begin{array}{l}
         \actL.\actType = \actPay \quad
         \actL.\actSnd = \addrX \quad
         \actL.\actRcv = \addrY \quad
         \actL.\actAmt = \valV > 0 \quad
         \actL.\actTok = \tokT \quad 
         \timeOk{\leaseMap,\timeT}{\actL} \quad
         \actL \not\in \actDupSet
         \\[4pt]
         \acctOk{\tokMapS - \valV:\tokT} \quad
         \tokT \in \dom(\tokMapS) \cap \dom(\tokMapSi) \quad
         \tokT \not\in \freezeMap(\addrX) \cup \freezeMap(\addrY)
       \end{array}
      }
      {\begin{array}{l}
         \acct{\addrX}{\tokMapS} \mid \acct{\addrY}{\tokMapSi} \mid \confG \mid \timeT \mid \actDupSet \mid \mngrMap \mid \leaseMap \mid \freezeMap
         \; \step{\actL} \;
         \\[4pt]
         \acct{\addrX}{\tokMapS - \valV:\tokT} \mid \acct{\addrY}{\tokMapSi + \valV:\tokT} \mid \confG \mid \timeT \mid \actDupSet \cup \setenum{\actL} \mid \mngrMap \mid \leaseUpdate{\leaseMap}{\actL}{\timeT} \mid \freezeMap
       \end{array}
      }
      \; \nrule{[Pay]}
      \\[40pt]
      \irule
      {\begin{array}{l}
         \actL.\actType = \actPay \quad
         \actL.\actSnd = \addrX \quad
         \actL.\actRcv = \addrY \neq \addrX \quad
         \actL.\actAmt = 0 \quad
         \timeOk{\leaseMap,\timeT}{\actL} \quad
         \actL \not\in \actDupSet
       \end{array}
      }
      {\begin{array}{l}
         \acct{\addrX}{\tokMapS} \mid \acct{\addrY}{\tokMapSi} \mid \confG \mid \timeT \mid \actDupSet \mid \mngrMap \mid \leaseMap \mid \freezeMap
         \; \step{\actL} \;
         \\[4pt]
         \acct{\addrX}{\tokMapS} \mid \acct{\addrY}{\tokMapSi} \mid \confG \mid \timeT \mid \actDupSet \cup \setenum{\actL} \mid \mngrMap \mid \leaseUpdate{\leaseMap}{\actL}{\timeT} \mid \freezeMap
       \end{array}
      }
      \; \nrule{[Pay-Zero]}
      \\[40pt]
      \irule
      {\begin{array}{l}
         \actL.\actType = \actGen \quad
         \actL.\actSnd = \addrX \quad
         \actL.\actRcv = \addrY \quad
         \actAmt = \valV \quad
         \timeOk{\leaseMap,\timeT}{\actL} \quad
         \actL \not\in \actDupSet
         \\[4pt]
         \acctOk{\tokMapS \setenum{\bind{\tokT}{\valV}}} \quad
         \tokT \text{ next fresh asset identifier}
       \end{array}
      }
      {\begin{array}{l}
         \acct{\addrX}{\tokMapS} \mid \confG \mid \timeT \mid \actDupSet \mid \mngrMap \mid \leaseMap \mid \freezeMap
         \; \step{\actL} \;
         \\[4pt]
         \acct{\addrX}{\tokMapS \setenum{\bind{\tokT}{\valV}}} \mid \confG \mid \timeT \mid \actDupSet \cup \setenum{\actL} \mid \mngrMap\setenum{\bind{\tokT}{(\addrY,\addrX)}} \mid \leaseUpdate{\leaseMap}{\actL}{\timeT} \mid \freezeMap
       \end{array}
      }
      \; \nrule{[Gen]}
      \\[40pt]
      \irule
      {\begin{array}{l}
         \actL.\actType = \actOptin \quad
         \actL.\actSnd = \addrX \quad
         \actL.\actTok = \tokT \quad
         \timeOk{\leaseMap,\timeT}{\actL} \quad
         \actL \not\in \actDupSet
         \\[4pt]
         \tokT \text{ occurs in $\acct{\addrX}{\tokMapS} \mid \confG$} \quad
         \tokMapSi = \begin{cases}
           \tokMapS \setenum{\bind{\tokT}{0}} & \text{if $\tokT \not\in \dom(\tokMapS)$} \\
           \tokMapS & \text{otherwise}
         \end{cases} \quad
                      \acctOk{\tokMapSi}
       \end{array}
                      }
                      {\begin{array}{l}
                         \acct{\addrX}{\tokMapS} \mid \confG \mid \timeT \mid \actDupSet \mid \mngrMap \mid \leaseMap \mid \freezeMap
                         \; \step{\actL} \;
                         \\[4pt]
                         \acct{\addrX}{\tokMapSi} \mid \confG \mid \timeT \mid \actDupSet \cup \setenum{\actL} \mid \mngrMap \mid \leaseUpdate{\leaseMap}{\actL}{\timeT} \mid \freezeMap
                       \end{array}
      }
      \; \nrule{[Optin]}
    \end{array}
    \)
  } % resizebox
\end{figure}

\begin{figure}[t]
  % \ContinuedFloat
  \centering
  \(
  \begin{array}{c}
    \irule
    {\begin{array}{l}
       \actL.\actType = \actBurn \quad
       \actL.\actTok = \tokT \quad 
       \snd(\mngrMap(\tokT)) = \addrX \quad
       \timeOk{\leaseMap,\timeT}{\actL} \quad
       \actL \not\in \actDupSet
       \\[4pt]
       \tokT \in \dom(\tokMapS) \quad
       \text{for all $\acct{\addrY}{\tokMapSi}$ in $\confG$}: \tokT \not\in \dom(\tokMapSi)
       % \acctOk{\tokMapSi + \valV:\tokT}     % bart: not needed
     \end{array}
    }
    {\begin{array}{l}
       \acct{\addrX}{\tokMapS} \mid \confG \mid \timeT \mid \actDupSet \mid \mngrMap \mid \leaseMap \mid \freezeMap 
       \; \step{\actL} \; 
       \\[4pt]
       \acct{\addrX}{\tokMapS\setenum{\bind{\tokT}{\bot}}} \mid \confG \mid \timeT \mid \actDupSet \cup \setenum{\actL}  \mid \mngrMap\setenum{\bind{\tokT}{\bot}} \mid \leaseUpdate{\leaseMap}{\actL}{\timeT} \mid \freezeMap
     \end{array}
    }
    \; \nrule{[Burn]}
    \\[40pt]
    \irule
    {\begin{array}{l}
       \actL.\actType = \actRevoke \quad
       \actL.\actSnd = \addrX \quad
       \actL.\actRcv = \addrY \quad
       \actL.\actAmt = \valV \quad
       \actL.\actTok = \tokT \quad 
       \timeOk{\leaseMap,\timeT}{\actL} \quad
       \actL \not\in \actDupSet
       \\[4pt]
       \tokT \in \dom(\tokMapS) \cap \dom(\tokMapSi) \quad
       \acctOk{\tokMapS - \valV:\tokT} \quad
       \tokT \not\in \freezeMap(\addrX) \cup \freezeMap(\addrY)
       % \acctOk{\tokMapSi + \valV:\tokT}     % bart: not needed
     \end{array}
    }
    {\begin{array}{l}
       \acct{\addrX}{\tokMapS} \mid \acct{\addrY}{\tokMapSi} \mid \confG \mid \timeT \mid \actDupSet \mid \mngrMap \mid \leaseMap \mid \freezeMap 
       \; \step{\actL} \; 
       \\[4pt]
       \acct{\addrX}{\tokMapS - \valV:\tokT} \mid \acct{\addrY}{\tokMapSi + \valV:\tokT} \mid \confG \mid \timeT \mid \actDupSet \cup \setenum{\actL}  \mid \mngrMap \mid \leaseUpdate{\leaseMap}{\actL}{\timeT} \mid \freezeMap
     \end{array}
    }
    \; \nrule{[Revoke]}
    \\[40pt]
    \irule
    {\begin{array}{l}
       \actL.\actType = \actFreeze \quad
       \actL.\actSnd = \addrX \quad
       \actL.\actTok = \tokT \quad 
       \timeOk{\leaseMap,\timeT}{\actL} \quad
       \actL \not\in \actDupSet
       \\[4pt]
       \tokT \in \dom(\tokMapS) \quad
       \freezeMapi = \freezeMap \setenum{\bind{\addrX}{\freezeMap(\addrX) \cup \setenum{\tokT}}}
       % \acctOk{\tokMapSi + \valV:\tokT}     % bart: not needed
     \end{array}
    }
    {\begin{array}{l}
       \acct{\addrX}{\tokMapS} \mid \confG \mid \timeT \mid \actDupSet \mid \mngrMap \mid \leaseMap \mid \freezeMap
       \; \step{\actL} \; 
       \\[4pt]
       \acct{\addrX}{\tokMapS} \mid \confG \mid \timeT \mid \actDupSet \cup \setenum{\actL} \mid \mngrMap \mid \leaseUpdate{\leaseMap}{\actL}{\timeT} \mid \freezeMapi
     \end{array}
    }
    \; \nrule{[Freeze]}
    \\[40pt]
    \irule
    {\begin{array}{l}
       \actL.\actType = \actUnfreeze \quad
       \actL.\actSnd = \addrX \quad
       \actL.\actTok = \tokT \quad 
       \timeOk{\leaseMap,\timeT}{\actL} \quad
       \actL \not\in \actDupSet
       \\[4pt]
       \tokT \in \dom(\tokMapS) \quad
       \freezeMapi = \freezeMap \setenum{\bind{\addrX}{\freezeMap(\addrX) \setminus \setenum{\tokT}}}
       % \acctOk{\tokMapSi + \valV:\tokT}     % bart: not needed
     \end{array}
    }
    {\begin{array}{l}
       \acct{\addrX}{\tokMapS} \mid \confG \mid \timeT \mid \actDupSet \mid \mngrMap \mid \leaseMap \mid \freezeMap
       \; \step{\actL} \;
       \\[4pt]
       \acct{\addrX}{\tokMapS} \mid \confG \mid \timeT \mid \actDupSet \cup \setenum{\actL} \mid \mngrMap \mid \leaseUpdate{\leaseMap}{\actL}{\timeT} \mid \freezeMapi
     \end{array}
    }
    \; \nrule{[Unfreeze]}
    \\[40pt]
    \irule
    {\begin{array}{l}
       \actL.\actType = \actDelegate \quad
       \actL.\actSnd = \addrX \quad
       \actL.\actRcv = \addrY \quad 
       \actL.\actTok = \tokT \quad 
       \timeOk{\leaseMap,\timeT}{\actL} \quad
       \actL \not\in \actDupSet
       \\[4pt]
       \fst(\mngrMap(\tokT)) = \addrX
     \end{array}
    }
    {\begin{array}{l}
       \confG \mid \timeT \mid \actDupSet \mid \mngrMap \mid \leaseMap \mid \freezeMap
       \; \step{\actL} \;
       \\[4pt]
       \confG \mid \timeT \mid \actDupSet \cup \setenum{\actL} \mid \mngrMap\setenum{\bind{\tokT}{(\addrY,\snd(\mngrMap(\tokT)))}} \mid \leaseUpdate{\leaseMap}{\actL}{\timeT} \mid \freezeMap
     \end{array}
    }
    \; \nrule{[Delegate]}
    \\[40pt]
    \irule
    {}
    {\confG \mid \actDupSet \mid \timeT 
    \; \step{\labTime} \;
    \confG \mid \setcomp{\actL \in \actDupSet}{\actL.\actTl > \timeT} \mid \timeT+1}
    \; \nrule{[Round]}
    \\[20pt]
    \irule
    {
    \actLL = \actL[1] \cdots \actL[n] \quad
    \confG \step{\actL[1]} \confG[1]
    % \auth{\pmvAS[2]}{\actLL,2} \mid \confG[1] \step{\actL[2]} \confG[2]
    \;\; \cdots \;\;
    \confG[n-1] \step{\actL[n]} \confG[n]
    }
    {\confG
    \stepL{\actLL}
    \confG[n]
    }
    \; \nrule{[TxG]}
    \\[20pt]
    \irule{}
    {(\confG,\confK) \;\stepTL{\witW}\; (\confG,\confK \cup \setenum{\witW})}
    \; \nrule{[Net-Wit]}
    \\[20pt]
    \irule
    {\confG \stepL{\labTime} \confGi}
    {(\confG,\confK) \stepTL{\labTime} (\confGi,\confK)}
    \; \nrule{[Net-Round]}
    \\[20pt]
    \irule
    {\confG \stepL{\actLL} \confGi \quad \setofseq{\witWLL} \subseteq \confK \quad \authver{\witWLL}{\actLL}}
    {(\confG,\confK) \stepTL{\auth{\witWLL}{\actLL}} (\confGi,\confK)}
    \; \nrule{[Net-TxG]}
  \end{array}
  \)
  \caption{The stateless ASC1 state machine.}
  \label{fig:app-model}
\end{figure}

  \section{Additional smart contracts}
\label{sec:app-examples}

\subsection{Mutual HTLC}
\label{sec:smart-contracts:mutual-htlc}

The \keyterm{mutual HTLC}~\cite{Andrychowicz14sp} 
is a variant of HTLC presented in~\Cref{sec:smart-contracts}:
two users $\pmvA,\pmvB$ choose their own secrets, pay a deposit,
and the contract ensures that either
\begin{enumerate*}[\emph{(\alph*)}]
\item \emph{both} users reveal their secret and get their deposits back, or
\item whoever does not reveal the secret loses the deposit (in favour of the other user).
\end{enumerate*}
% Importante: entrambi i contratti devono essere stipulati, o nessuno dei due!!!

Assuming that $\hash{\pmvA}{}$ and $\hash{\pmvB}{}$ are, respectively, 
the hashes of the secrets $\secret{\pmvA}{}$ and $\secret{\pmvB}{}$,
we can implement the mutual HTLC by creating two HTLC contracts
within an atomic group of transactions:

\smallskip\centerline{\(%
  \{
  % \actType:\actOpen, 
  \actSnd:\pmvA,\, \actRcv:\HTLC[\pmvA,\pmvB,\hash{\pmvA}{}],\, \ldots \}
  \;\;
  \{ 
  % \actType:\actOpen, 
  \actSnd:\pmvB,\, \actRcv:\HTLC[\pmvB,\pmvA,\hash{\pmvB}{}],\, \ldots \}
\)}\smallskip%

The mutual HTLC contract guarantees that once $\pmvA$ has stipulated it,  
then she will either learn $\pmvB$'s secret, or receive the compensation.
Instead, if we create two instances of HTLC in a non-atomic way,
this property is not guaranteed, since $\pmvB$ could refuse to stipulate its part of the contract.

\Cref{th:mutual-htlc} states the correctness of the mutual HTLC.
For $\pmv{p},\pmv{q} \in \setenum{\pmvA,\pmvB}$, let:

\smallskip\centerline{\(%
\begin{array}{r@{\;\,}c@{\;\,}l}
  \actLS[\pmv{p},\pmv{q}]
  &=& \setcomp{\actL\;}{\;\actL.\actType = \actClose,\, \actL.\actSnd = \HTLC[\pmv{p},\pmv{q},\hash{\pmv{p}}{}],\, \actL.\actRcv = \pmv{p},\, \actL.\actTok = \ALGO}
  \\
  \actLiS[\pmv{p},\,\pmv{q}]
  &=& \setcomp{\actL\;}{\;\actL.\actType = \actClose,\, \actL.\actSnd = \HTLC[\pmv{p},\pmv{q},\hash{\pmv{p}}{}],\, \actL.\actRcv = \pmv{q},\, \actL.\actTok = \ALGO}
\end{array}
\)}\smallskip%

\noindent%
\alcwarning{Some intuition about the two sets of transactions would help}%
and consider the following strategy for distinct users $\pmv{p} \neq \pmv{q} \in \setenum{\pmvA,\pmvB}$:
\begin{itemize}
\item $\stratS{\pmv{p}}$: at a round $\timeT < \timeTmax$, 
  send a transaction $\actL \in \actLS[\pmv{p},\pmv{q}]$ with $\actL.\actTf = \timeT$ 
  and witness~$\secret{\pmv{p}}{}$.
  Then, at round $\timeTmax$ check whether any transaction in $\actLS[\pmv{q},\pmv{p}]$ occurs in $\runS$:
  if not, immediately send a transaction $\actL \!\in\! \actLiS[\pmv{q},\pmv{p}]$ with $\actL.\actTf = \timeTmax$.
\end{itemize}

\begin{theorem}
  \label{th:mutual-htlc}
  Let $\runS$ be a run conforming to some set of strategies $\stratSet$, such that:
  \begin{inlinelist}
  \item $\runS$ reaches, before $\timeTmax$, a state 
    $\acct{\script{\HTLC[\pmvA,\pmvB,\hash{\pmvA}{}]}}{\tokMapS} \mid \acct{\script{\HTLC[\pmvB,\pmvA,\hash{\pmvB}{}]}}{\tokMapSi} \mid \cdots$;
  \item $\runS$ reaches $\timeTmax + 1$.
  \end{inlinelist}
  Let $\pmv{p} \neq \pmv{q} \in \setenum{\pmvA,\pmvB}$.
  If $\stratS{\pmv{p}} \in \stratSet$, then with overwhelming probability:
  \begin{inlinelist}[(1)]
  \item \label{th:mutual-htlc:reveal}
    $\runS$ contains a transaction in $\actLS[\pmv{p},\pmv{q}]$;
  \item \label{th:mutual-htlc:timeout}
    if $\runS$ does not contain the secret $\secret{\pmv{q}}{}$ before round $\timeTmax+1$,
    then $\runS$ contains a transaction in $\actLiS[\pmv{q},\pmv{p}]$.
  \end{inlinelist}
\end{theorem}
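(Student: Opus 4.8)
The plan is to reduce the statement to two essentially independent instances of the liveness/safety argument already used for the plain HTLC (\Cref{th:htlc}), one for each of the two contract accounts $\HTLC[\pmvA,\pmvB,\hash{\pmvA}{}]$ and $\HTLC[\pmvB,\pmvA,\hash{\pmvB}{}]$ funded by the atomic group. Two structural facts about these accounts will be used throughout. First, the atomic group funds them with $\ALGO$s only, and every clause of an $\HTLC$ script requires $\actE{\actType} = \actClose$; hence no $\actOptin$ (nor any other transaction that would add a fresh asset entry) can ever be authorized on such an account, so its balance keeps domain $\setenum{\ALGO}$ until the account is closed, at which point the $\actClose$ transfers the whole balance. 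Second, by \Cref{th:model:no-double-spending} each account is closed at most once, and the only transactions able to close $\HTLC[\pmv{p},\pmv{q},\hash{\pmv{p}}{}]$ are those in $\actLS[\pmv{p},\pmv{q}]$ (the reveal clause, which pins $\actE{\actRcv} = \pmv{p}$ and requires $\hashE{\argE{0}} = \hash{\pmv{p}}{}$) and those in $\actLiS[\pmv{p},\pmv{q}]$ (the timeout clause, which pins $\actE{\actRcv} = \pmv{q}$ and requires $\actE{\actTf} \ge \timeTmax$). Recall also that $\hash{\pmv{p}}{} = \hashSem{\secret{\pmv{p}}{}}$.

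For item~\ref{th:mutual-htlc:reveal}: since $\stratS{\pmv{p}} \in \stratSet$, at some round $\timeT < \timeTmax$ user $\pmv{p}$ submits a transaction $\actL \in \actLS[\pmv{p},\pmv{q}]$ with $\actL.\actTf = \timeT$ and witness $\secret{\pmv{p}}{}$. By hypothesis~(i) the account $\HTLC[\pmv{p},\pmv{q},\hash{\pmv{p}}{}]$ is present before $\timeTmax$, and it is still present at round $\timeT$ unless it has already been closed; but, strictly before $\timeTmax$, a closing transaction cannot use the timeout clause (which needs $\actE{\actTf} \ge \timeTmax > \timeT$), hence it lies in $\actLS[\pmv{p},\pmv{q}]$, which already yields the conclusion. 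Otherwise the account is present and $\actL$ is authorized by the reveal clause, since $\argE{0} = \secret{\pmv{p}}{}$ gives $\hashE{\argE{0}} = \hash{\pmv{p}}{}$. By the scheduling argument of the proof of \Cref{th:htlc} --- using $\delta_{\Adv} = 0$ (so $\Adv$ cannot advance a round while $\pmv{p}$ insists on $\actL$), the non-Zeno condition, and hypothesis~(ii) that $\runS$ reaches round $\timeTmax + 1$ --- the transaction $\actL$, or a copy of it (still in $\actLS[\pmv{p},\pmv{q}]$, since the reveal clause fixes $\actE{\actRcv} = \pmv{p}$), occurs in $\runS$.

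For item~\ref{th:mutual-htlc:timeout}: assume $\secret{\pmv{q}}{}$ does not occur in $\runS$ before round $\timeTmax + 1$. Any transaction in $\actLS[\pmv{q},\pmv{p}]$ is authorized only through the reveal clause of $\HTLC[\pmv{q},\pmv{p},\hash{\pmv{q}}{}]$, which requires a witness $\argE{0}$ with $\hashE{\argE{0}} = \hash{\pmv{q}}{}$; by collision resistance, with overwhelming probability such a witness must equal $\secret{\pmv{q}}{}$, and then $\secret{\pmv{q}}{}$ would appear in a label of the form $\auth{\witWLL}{\actLL}$ in $\runS$ --- contradiction. Hence no transaction in $\actLS[\pmv{q},\pmv{p}]$ occurs before round $\timeTmax + 1$; combined with the fact that the timeout clause is unusable strictly before round $\timeTmax$, the account $\HTLC[\pmv{q},\pmv{p},\hash{\pmv{q}}{}]$ is still present at round $\timeTmax$. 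At that round $\stratS{\pmv{p}}$ submits $\actL \in \actLiS[\pmv{q},\pmv{p}]$ with $\actL.\actTf = \timeTmax$, which satisfies the timeout clause ($\actE{\actRcv} = \pmv{p}$ and $\actE{\actTf} = \timeTmax$, so $\actE{\actTf} \ge \timeTmax$); by the same scheduling argument, a transaction in $\actLiS[\pmv{q},\pmv{p}]$ occurs in $\runS$.

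The routine part is the clause-by-clause case analysis of the $\HTLC$ script and of the enabling conditions of $\actClose$ on a contract account. The main obstacle --- inherited from \Cref{th:htlc} --- is the scheduling reasoning: one must argue both that $\Adv$ cannot suppress $\pmv{p}$'s transaction indefinitely (forced by $\delta_{\Adv} = 0$, the non-Zeno condition, and the hypothesis that $\runS$ reaches round $\timeTmax + 1$) and that $\Adv$ cannot ``front-run'' $\pmv{p}$ by closing the relevant $\HTLC$ account through some other enabled transaction before $\pmv{p}$ acts; the latter is handled by the observation that, before round $\timeTmax$, the only account-closing transactions the script enables already belong to the target sets, so front-running is harmless, and at round $\timeTmax$ a front-running timeout still has $\actE{\actRcv} = \pmv{p}$ and hence stays in $\actLiS[\pmv{q},\pmv{p}]$. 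A minor but necessary bookkeeping point is to make precise that a witness occurring in a label $\auth{\witWLL}{\actLL}$ of $\runS$ counts as ``occurring in $\runS$'', which is what links the hash-preimage clause to the no-secret hypothesis.
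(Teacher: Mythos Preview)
Your proposal is correct and takes essentially the same approach as the paper: the paper's own proof of \Cref{th:mutual-htlc} consists of a single sentence --- ``The proof is analogous to the one for HTLC in~\Cref{th:htlc}'' --- and what you have written is precisely that analogy spelled out, reducing each of the two clauses to an instance of the case analysis in the proof of \Cref{th:htlc} (your ``two structural facts'' are the content of the paper's \Cref{lem:close}). If anything, you have been more explicit than the paper about the scheduling and front-running points.
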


\subsection{Zero-collateral lottery}
\label{sec:smart-contracts:zero-collateral-lottery}

We show a variant of the two-players lottery in~\Cref{sec:smart-contracts}
which requires no collateral, similarly to~\cite{BZ17bw,Miller16zerocollateral}.
The preconditions just require the $1 \ALGO$ bets and the secrets,
while the contract is the following:
% reveal a . 
%     reveal a,b if |a|=|b|  . withdraw A
% reveal a,b if |a|/=|b| . withdraw B
% after d : withdraw A
% + after d : withdraw B

\smallskip\centerline{\(%
\begin{array}{r@{\;\;}c@{\;\;}l}
  \script{ZDL} &\eqdef
  & \actE{\actType} = \actClose \, \andE \, \actE{\actTok} = \ALGO \, \andE
  \\
  && \hspace{15pt} 
     \big( (\actE{\actRcv} = \script{ZDL2} \andE \hashE{\argE{0}} = \hash{\pmvA}{}) \orE
  \\[0mm]
  && \hspace{15pt} 
    (\actE{\actRcv} = \pmvB \andE \actE{\actTf} \geq \timeT[0])
    \big)
  \\[4pt]
  \script{ZDL2} &\eqdef
  & \actE{\actType} = \actClose \, \andE \, \actE{\actTok} = \ALGO \, \andE \, \hashE{\argE{0}} = \hash{\pmvA}{} \, \andE \,
  \\
  && \big( ( \hashE{\argE{1}} = \hash{\pmvB}{} \, \andE \, 
  \\
  && \hspace{10pt} \ifE{(\argE{0}+\argE{1}) \% 2 = 0}{\actE{\actRcv} = \pmvA}{\actE{\actRcv} = \pmvB} ) \orE
  \\
  && \hspace{5pt} (\actE{\actRcv} = \pmvA \andE \actE{\actTf} \geq \timeT[0]+\timeT[1])
     \big)
\end{array}
\)}%

\smallskip
Here, $\pmvB$ must reveal first.
If $\pmvB$ does not reveal his secret by the deadline $\timeT[0]$, 
then $\pmvA$ can redeem the $2 \ALGO$s stored in the contract.
Otherwise, $\pmvA$ in turn must reveal by the deadline $\timeT[0]+\timeT[1]$,
or let $\pmvB$ redeem $2 \ALGO$s.
If both $\pmvA$ and $\pmvB$ reveal, then the winner is determined as a function of their secrets.
As before, the rational strategy for each player is to reveal the secret.
This makes the lottery fair, even in the absence of a collateral.

\subsection{Escrow}
\label{sec:smart-contracts:escrow}

User $\pmvA$ wants to buy from seller $\pmvB$ an item that costs $\valV \, \ALGO$s. 
We want to guarantee that
\begin{inlinelist}[\emph{(\arabic*)}]
\item%
  $\pmvB$ will get paid if $\pmvA$ authorizes the payment;
\item%
  $\pmvA$ will be refunded if $\pmvB$ authorizes it;
\item%
  if neither $\pmvA$ nor $\pmvB$ give their authorization, 
  an escrow service $\pmvC$ will resolve the dispute,
  by either fully refunding $\pmvA$, 
  or partially refunding $\pmvA$ with $\valV[\pmvA]\, \ALGO$s 
  and giving the remaining $(\valV-\valV[\pmvA])\, \ALGO$s to $\pmvB$.
\end{inlinelist}

\smallskip\centerline{\(%
\begin{array}{r@{\;\;}c@{\;\;}l}
  \Escrow &\eqdef
  & \actE{\actType} = \actClose \, \andE \, \actE{\actTok} = \ALGO \, \andE \, \big(
  \\[0mm]
  && \qquad \versig{\actId}{\argE{1}}{\pmvA} \, \andE (\actE{\actRcv} = \pmvB \orE \actE{\actRcv} = \script{Resolve})
  \\[0mm]
  && \orE \;\; \versig{\actId}{\argE{1}}{\pmvB} \, \andE (\actE{\actRcv} = \pmvA \orE \actE{\actRcv} = \script{Resolve})
    \big)
  \\[2mm]
  \script{Resolve}  &\eqdef
  & \actE{\actType} = \actPay \, \andE \, \actE{\actTok} = \ALGO \, \andE \, \versig{\argE{0}}{\argE{1}}{\pmvC} \, \andE
  \\[0mm]
  && \big( 
     (\actE{\actRcv} = \pmvA \andE \actE{\actAmt} = \argE{0})
     \orE
  \\[0mm]
  && \hspace{4pt}
     (\actE{\actRcv} = \pmvB \andE \actE{\actAmt} = \valV - \argE{0})
     \big)
\end{array}
\)}%

\subsection{Limit order}
\label{sec:smart-contracts:limit-order}

The \keyterm{limit order} contract~\cite{AlgorandLimitOrder} allows 
a user $\pmvA$ to exchange her $\ALGO$s for units of a certain asset $\tokT$, provided by any user. 
The contract imposes a lower bound $\rho_{\it min}$ to the exchange rate $\tokT$/$\ALGO$,
and guarantees its operation as long as it has enough funds, 
and a deadline $\timeTmax$ is not reached (after then, $\pmvA$ can close it).
To this purpose, the contract accepts two kinds of actions:
\begin{itemize}

\item an atomic group of two $\actPay$ transactions: the first one transfers $\valV[0]$ $\ALGO$s
from the contract to the sender of the second one;
the second transaction transfers $\valV[1]$ units of $\tokT$ to user $\pmvA$.
The contract ensures that:
\begin{inlinelist}
\item the ratio between $\valV[1]$ and $\valV[0]$ is greater then a given constant $\rho_{\it min}$;
\item $\valV[0]$ is greater than a given constant $\valV[{\it min}]$.
\end{inlinelist}
Note that such a group can be issued by \emph{any} user owning enough units of the asset $\tokT$,
without requiring any interaction from $\pmvA$.

\item a single transaction where $\pmvA$ closes the contract after the deadline $\timeTmax$.

\end{itemize}

\noindent
The following script implements this specification:

\medskip\centerline{\(%
\begin{array}{r@{\;\;}c@{\;\;}l}
  &
  & \big(
    \actLen=2 \andE \actPos=0 \andE
  \\[0mm]
  && \hspace{5pt} \actE[0]{\actType} = \actPay \andE \actE[0]{\actTok} = \ALGO \andE \actE[0]{\actRcv} = \actE[1]{\actSnd} \andE
  \\[0mm]
  && \hspace{5pt} \actE[1]{\actType} = \actPay \andE \actE[1]{\actTok} = \tokT \andE \actE[1]{\actRcv} = \pmvA \andE
  \\[0mm]
  && \hspace{5pt} \actE[1]{\actAmt} / \actE[0]{\actAmt} \geq \rho_{\it min} \andE \actE[0]{\actAmt} \geq \valV[\it min]
     \big) 
  \\[0mm]
  && \orE
  \\[0mm]
  && \big( \actLen=1 \andE \actE{\actTf} > \timeTmax \andE
  \\[0mm]
  && \hspace{5pt} \actE{\actType} = \actClose \andE \actE{\actTok} = \ALGO \andE \actE[0]{\actRcv} = \pmvA
  \big)
\end{array}
\)}%

\subsection{Split}
\label{sec:smart-contracts:split}

The \keyterm{split} contract~\cite{AlgorandSplit} is created by a user $\pmvA$,
who want to transfer its funds to users $\pmvB[0]$ and $\pmvB[1]$ in a fixed ratio $\rho$.
The contract is initially funded with some $\ALGO$s from $\pmvA$, and once started it accepts two 
kinds of actions:
\begin{itemize}

\item an atomic group of two $\actPay$ transactions whose sender is the \emph{split} contract: 
the first transaction transfers $\valV[0]$ $\ALGO$s to $\pmvB[0]$,
while the second one transfers $\valV[1]$ $\ALGO$s to $\pmvB[1]$.
The contract ensures that:
\begin{inlinelist}
\item the ratio between $\valV[1]$ and $\valV[0]$ is equal to a given constant $\rho$;
\item $\valV[0]$ is greater than a given constant $\valV[{\it min}]$.
\end{inlinelist}

\item a single transaction where $\pmvA$ closes the contract after a deadline $\timeTmax$.

\end{itemize}

\noindent
The following script implements this specification:

\medskip\centerline{\(%
\begin{array}{r@{\;\;}c@{\;\;}l}
  &
  & \big(
    \actLen=2 \andE \actE[0]{\actSnd} = \actE[1]{\actSnd}  \andE
  \\[0mm]
  && \hspace{5pt} \actE[0]{\actType} = \actPay \andE \actE[0]{\actTok} = \ALGO \andE \actE[0]{\actRcv} = \pmvB[0] \andE
  \\[0mm]
  && \hspace{5pt} \actE[1]{\actType} = \actPay \andE \actE[1]{\actTok} = \ALGO \andE \actE[1]{\actRcv} = \pmvB[1] \andE
  \\[0mm]
  && \hspace{5pt} \actE[1]{\actAmt} = \rho * \actE[0]{\actAmt} \andE \actE[0]{\actAmt} \geq \valV[\it min]
     \big) 
  \\[0mm]
  && \orE
  \\[0mm]
  && \big( \actLen=1 \andE \actE{\actTf} > \timeTmax \andE
  \\[0mm]
  && \hspace{5pt} \actE{\actType} = \actClose \andE \actE{\actTok} = \ALGO \andE \actE[0]{\actRcv} = \pmvA
  \big)
\end{array}
\)}%

  \section{Compiling scripts from our model to concrete ASC1}
\label{sec:app-tool}

\Cref{fig:app-tool} summarizes the main checks performed by the script, 
and the values stored in the scratch space.
For brevity, we assume a sugared syntax of TEAL (\eg, we use comparison operators
in infix notation), 
and we do not detail the checks needed to ensure that certain fields 
are not set.
Notice that some of the TEAL operators used in \Cref{fig:app-tool},
like \eg, \code{ConfigAsset}, require $\code{LogicSigVer} \geq 2$
(see \url{https://developer.algorand.org/docs/reference/teal/opcodes/}).
When none of the conditions in the left column of~\Cref{fig:app-tool}
is satisfied, we make the script generate an error,
which corresponds to requiring $\sem{}{}{\actE[n]{\actf}}{}{} = \bot$.

\begin{table}[htbp!]
  \centering
  \resizebox{\textwidth}{!}{
  \begin{tabular}{|l|l|}
    \hline
    \multicolumn{1}{|c|}{\textbf{TEAL checks}} 
    & 
    \multicolumn{1}{|c|}{\textbf{Values stored in scratch space}}
    \\
    \hline
    \begin{tabular}{l}
      \code{gtxn n TypeEnum == pay}
      \\
      \code{gtxn n Amount == 0}
      \\
      \code{gtxn n CloseRemainderTo != 0}
      \\
      $\cdots$
    \end{tabular}
    & 
    \begin{tabular}{l}
      $\actE[n]{\actType} := \actClose$ 
      \\
      $\actE[n]{\actTok} := \ALGO$
      \\
      $\actE[n]{\actSnd} := $ \code{gtxn n Sender}
      \\
      $\actE[n]{\actRcv} := $ \code{gtxn n CloseRemainderTo}
    \end{tabular}
    \\
    \hline
    \begin{tabular}{l}
      \code{gtxn n TypeEnum == pay} 
      \\
      \code{gtxn n CloseRemainderTo == 0}
      \\
      $\cdots$
    \end{tabular}
    & 
    \begin{tabular}{l}
      $\actE[n]{\actType} := \actPay$
      \\
      $\actE[n]{\actTok} := \ALGO$
      \\
      $\actE[n]{\actSnd} := $ \code{gtxn n Sender} 
      \\
      $\actE[n]{\actRcv} := $ \code{gtxn n Receiver}
      \\
      $\actE[n]{\actAmt} := $ \code{gtxn n Amount}
    \end{tabular}
    \\
    \hline
    \begin{tabular}{l}
      \code{gtxn n TypeEnum == axfer} 
      \\
      \code{gtxn n AssetSender != 0}
      \\
      \code{gtxn n AssetAmount == 0}
      \\
      \code{gtxn n AssetCloseTo != 0}
      \\
      $\cdots$
    \end{tabular}
    &
    \begin{tabular}{l}
      $\actE[n]{\actType} := \actClose$
      \\
      $\actE[n]{\actTok} := $ \code{gtxn n XferAsset} 
      \\
      $\actE[n]{\actSnd} := $ \code{gtxn n AssetSender} 
      \\
      $\actE[n]{\actRcv} := $ \code{gtxn n AssetCloseTo}
    \end{tabular}
    \\
    \hline
    \begin{tabular}{l}
      \code{gtxn n TypeEnum == axfer} 
      \\
      \code{gtxn n Sender != 0}
      \\
      \code{gtxn n AssetSender == 0}
      \\
      \code{gtxn n AssetAmount != 0}
      \\
      \code{gtxn n AssetCloseTo == 0}
      \\
      $\cdots$
    \end{tabular}
    &
    \begin{tabular}{l}
      $\actE[n]{\actType} := \actPay$
      \\
      $\actE[n]{\actTok} := $ \code{gtxn n XferAsset} 
      \\
      $\actE[n]{\actSnd} := $ \code{gtxn n Sender} 
      \\
      $\actE[n]{\actRcv} := $ \code{gtxn n AssetReceiver}
      \\
      $\actE[n]{\actAmt} := $ \code{gtxn n AssetAmount}
    \end{tabular}
    \\
    \hline
    \begin{tabular}{l}
      \code{gtxn n TypeEnum == acfg}
      \\
      \code{gtxn n ConfigAsset == 0}
      \\
      \code{(gtxn n ConfigAssetManager ==}
      \\ 
      \code{ gtxn n ConfigAssetFreeze == }
      \\
      \code{ gtxn n ConfigAssetReserve == }
      \\ 
      \code{ gtxn n ConfigAssetClawback) }
      \\
      $\cdots$
    \end{tabular}
    &
    \begin{tabular}{l}
      $\actE[n]{\actType} := \actGen$ 
      \\
      $\actE[n]{\actSnd} := $ \code{gtxn n Sender}
      \\
      $\actE[n]{\actRcv} := $ \code{gtxn n ConfigAssetManager}
      \\
      $\actE[n]{\actAmt} := $ \code{gtxn n AssetAmount}
      \\
    \end{tabular}
    \\
    \hline
    \begin{tabular}{l}
      \code{gtxn n TypeEnum == axfer}
      \\
      \code{gtxn n Sender == gtxn n AssetReceiver}
      \\
      \code{gtxn n AssetSender == 0}
      \\ 
      \code{gtxn n AssetAmount == 0}
      \\
      $\cdots$
    \end{tabular}
    &
    \begin{tabular}{l}
      $\actE[n]{\actType} := \actOptin$ 
      \\
      $\actE[n]{\actTok} := $ \code{gtxn n XferAsset}
      \\
      $\actE[n]{\actSnd} := $ \code{gtxn n Sender} 
    \end{tabular}
    \\
    \hline
    \begin{tabular}{l}
      \code{gtxn n TypeEnum == acfg}
      \\
      \code{len (gtxn n ConfigAssetManager) == 0}
      \\ 
      \code{len (gtxn n ConfigAssetFreeze) == 0}
      \\
      \code{len (gtxn n ConfigAssetReserve) == 0}
      \\ 
      \code{len (gtxn n ConfigAssetClawback) == 0}
      \\
      $\cdots$
      \\
    \end{tabular}
    &
    \begin{tabular}{l}
      $\actE[n]{\actType} := \actBurn$ 
      \\
      $\actE[n]{\actTok} := $ \code{gtxn n ConfigAsset}
    \end{tabular}
    \\
    \hline
    \begin{tabular}{l}
      \code{gtxn n TypeEnum == axfer}
      \\
      \code{gtxn n Sender != 0}
      \\ 
      \code{gtxn n AssetSender != 0}
      \\
      $\cdots$
    \end{tabular}
    &
    \begin{tabular}{l}
      $\actE[n]{\actType} := \actRevoke$ 
      \\
      $\actE[n]{\actTok} := $ \code{gtxn n XferAsset}
      \\
      $\actE[n]{\actSnd} := $ \code{gtxn n AssetSender} 
      \\
      $\actE[n]{\actRcv} := $ \code{gtxn n AssetReceiver} 
      \\
      $\actE[n]{\actAmt} := $ \code{gtxn n AssetAmount}
    \end{tabular}
    \\
    \hline
    \begin{tabular}{l}
      \code{gtxn n TypeEnum == afrz}
      \\
      \code{gtxn n AssetFrozen == true}
      \\
      $\cdots$
    \end{tabular}
    &
    \begin{tabular}{l}
      $\actE[n]{\actType} := \actFreeze$
      \\
      $\actE[n]{\actTok} := $ \code{gtxn n FreezeAsset}
      \\
      $\actE[n]{\actSnd} := $ \code{gtxn n FreezeAccount}
    \end{tabular}
    \\
    \hline
    \begin{tabular}{l}
      \code{gtxn n TypeEnum == afrz}
      \\
      \code{gtxn n AssetFrozen == false}
      \\
      $\cdots$
    \end{tabular}
    &
    \begin{tabular}{l}
      $\actE[n]{\actType} := \actUnfreeze$
      \\
      $\actE[n]{\actTok} := $ \code{gtxn n FreezeAsset}
      \\
      $\actE[n]{\actSnd} := $ \code{gtxn n FreezeAccount}
    \end{tabular}
    \\
    \hline
    \begin{tabular}{l}
      \code{gtxn n TypeEnum == acfg}
      \\
      \code{gtxn n ConfigAsset != 0}
      \\
      \code{len (gtxn n ConfigAssetManager) != 0}
      \\ 
      \code{(gtxn n ConfigAssetManager ==}
      \\ 
      \code{ gtxn n ConfigAssetFreeze == }
      \\
      \code{ gtxn n ConfigAssetReserve == }
      \\ 
      \code{ gtxn n ConfigAssetClawback) }
      \\
      $\cdots$
      \\
    \end{tabular}
    &
    \begin{tabular}{l}
      $\actE[n]{\actType} := \actDelegate$ 
      \\
      $\actE[n]{\actSnd} := $ \code{gtxn n Sender} 
      \\
      $\actE[n]{\actRcv} := $ \code{gtxn n ConfigAssetManager} 
      \\
      $\actE[n]{\actTok} := $ \code{gtxn n ConfigAsset}
      \\
    \end{tabular}
    \\
    \hline
  \end{tabular}
  }
  \caption{Compilation of the script $\actE[n]{\actf}$ in TEAL (sketch).}
  \label{fig:app-tool}
\end{table}

  \section{Proofs}
\label{sec:proofs}

\begin{proofof}{th:model:no-double-spending}
  By induction on the length of the run.

  Conditions~\ref{item:open:isDup} and~\ref{item:open:timeOk} 
  are required at each step.

  Condition~\ref{item:open:isDup} prevents the same transaction $\actL$
  to appear again for the next $\MaxTxnLife$ rounds
  ($\actL \notin \actDupSet$).
  Condition~\ref{item:open:timeOk} requires $\actL$
  to be valid only in a time window of at most $\MaxTxnLife$ rounds
  ($\actL.\actTl - \actL.\actTf \leq \MaxTxnLife$).
  Therefore, it is impossible for $\actL$ to appear more than once.
  \qed
\end{proofof}

\begin{proofof}{th:model:value-preservation}
  By induction on the length of the run, and by cases on the rules
  generating each step.

  By inspecting the rules, we can see that they all preserve the value
  for all assets, except for $\nrule{[Burn]}$ which completely
  destroys the asset, and $\nrule{[Gen]}$ which can create a fresh asset.  
  The rules prevent a burnt asset to be re-created later on.

  The two special cases are taken into account in the definition of
  $\tokval{\tokT}{\confG}$, so the theorem holds.
  \qed
\end{proofof}

\begin{proofof}{th:model:deterministic}
  By induction on the length of the run, and by cases on the rules
  generating each step.

  Inspecting each pair of (different) rules, we can observe that they
  either have distinct labels or they have conflicting side
  conditions, so that at most one of them can apply in each case.
  Further, we observe that in each rule the final state is a function
  of the label and of the initial state.

  For part 3 of the theorem involving different labels
  $\auth{\witWLL}{\actLL}$ and $\auth{\witWLLi}{\actLL}$, we further
  add that
  ${(\confG,\confK) \stepTL{\auth{\witWLL}{\actLL}} (\confGi,\confK)}$
  is possible only when $\confG \stepL{\actLL} \confGi$ holds, and the
  same holds for $\auth{\witWLLi}{\actLL}$.  We conclude that the
  difference between $\witWLL$ and $\witWLLi$ is immaterial, since the
  determinism of $\stepTL{}$ is derived from the determinism of
  $\stepL{}$ which does not involve witnesses.
  \qed
\end{proofof}

\begin{lemma}
  \label{lem:close}
  Let $\scriptE = (\actE{\actType} = \actClose \andE \scriptEi)$,
  and let $\runS$ be a run passing through a state
  $(\confG,\confK)$ with 
  $\confG = \acct{\scriptE}{\tokMapS} \mid \cdots$
  and leading to $(\confGi,\confKi)$ with 
  $\confGi = \acct{\scriptE}{\tokMapSi} \mid \actDupSeti \mid \timeTi \mid \leaseMapi \cdots$, 
  such that all intermediate states contain the account 
  $\acct{\scriptE}{\cdots}$.
  Then:
  \begin{enumerate}
  \item \label{lem:close:domAlgo}
    $\dom{\tokMapSi} = \setenum{\ALGO}$;
  \item \label{lem:close:monotonic}
    $\tokMapSi(\ALGO) \geq \tokMapS(\ALGO)$;
  \item \label{lem:close:valid}
    if $\actL$ is a transaction such that 
    % $\actL.\actType = \actClose$,
    $\actL.\actSnd = \scriptE$, $\actL.\actTok = \ALGO$, 
    $\actL \not\in \actDupSeti$,
    and $\timeOk{\leaseMapi,\timeTi}{\actL}$,
    and $\sem{\actL}{\witWL}{\scriptE} = \true$ with $\witWL$ in $\confKi$,
    then $(\confGi,\confKi) \stepTL{\auth{\witWL}{\actL}}$.
  \end{enumerate}
\end{lemma}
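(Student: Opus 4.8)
The plan is to prove the three claims in order, as part~\ref{lem:close:domAlgo} underpins the other two. For part~\ref{lem:close:domAlgo}, rather than reasoning only inside the given sub-run I would establish a \emph{global} invariant on $\runS$, by induction on its length as a sequence of $\stepTL{}$-steps: in every reachable state, (a) $\ALGO \notin \dom{\mngrMap}$, and (b) every account of the form $\acct{\scriptE}{\tokMapSii}$ satisfies $\dom{\tokMapSii} = \setenum{\ALGO}$. Claim (a) is routine: $\mngrMap$ is initially empty, \nrule{[Gen]} only binds a \emph{fresh} identifier (necessarily $\neq \ALGO$, which is present already in $\confG[0]$), and \nrule{[Delegate]}/\nrule{[Burn]} never bind $\ALGO$. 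For (b), I would inspect the rules of \Cref{fig:app-model}: the account $\acct{\scriptE}{\cdots}$ is only ever (re)created by \nrule{[Pay-Open]} or \nrule{[Close-Open]}, both of which produce an $\ALGO$-only balance; its domain can be \emph{enlarged} only by \nrule{[Optin]} or \nrule{[Gen]} with sender $\scriptE$, but then \nrule{[Net-TxG]} requires the group authorization predicate to hold, hence $\scriptE$ (the authorizer of such transactions) to evaluate to $\true$, which by the short-circuit semantics of $\andE$ (\Cref{fig:model:scripts-sem}) forces the transaction type to be $\actClose$ --- contradicting its being $\actOptin$ or $\actGen$; and $\ALGO$ is never dropped from the domain, because the only domain-shrinking rules, \nrule{[Close-Asst]} and \nrule{[Burn]}, remove respectively a non-$\ALGO$ asset and an asset in $\dom{\mngrMap}$ (hence $\neq \ALGO$ by (a)). The remaining rules that move an asset require the receiver to have already opted in, so they leave domains unchanged. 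Instantiating (b) at $\confG$ and at $\confGi$ gives part~\ref{lem:close:domAlgo}.

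For part~\ref{lem:close:monotonic} I would induct on the sub-run from $(\confG,\confK)$ to $(\confGi,\confKi)$, tracking the $\ALGO$-balance of the account $\acct{\scriptE}{\cdots}$ (which persists by hypothesis). Steps labelled $\labTime$ or $\witW$, and transaction groups not mentioning $\scriptE$, leave that balance unchanged; any transaction in the group with receiver $\scriptE$ and token $\ALGO$ can only increase it (\nrule{[Pay]}, \nrule{[Pay-Zero]}, \nrule{[Close-Pay]}; \nrule{[Revoke]} cannot apply to $\ALGO$ since $\mngrMap(\ALGO)$ is undefined by (a)). The only way a step could \emph{decrease} it is via a transaction with sender $\scriptE$: if it has a sender-authorized type ($\actPay$, $\actClose$, $\actGen$, $\actOptin$) then, as above, \nrule{[Net-TxG]} forces its type to be $\actClose$, but a $\actClose$ with sender $\scriptE$ and token $\ALGO$ deletes the account $\scriptE$ via \nrule{[Close-Open]}/\nrule{[Close-Pay]}, contradicting persistence, while a $\actClose$ with token $\neq \ALGO$ is impossible because $\scriptE$ owns no such asset (part~\ref{lem:close:domAlgo}); for the manager-authorized types (\nrule{[Revoke]}, \nrule{[Burn]}, \nrule{[Freeze]}, \nrule{[Unfreeze]}, \nrule{[Delegate]}) with sender $\scriptE$, either no balance changes at all, or the asset involved is $\neq \ALGO$ and not owned by $\scriptE$. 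Hence the $\ALGO$-balance is non-decreasing, which is part~\ref{lem:close:monotonic}.

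For part~\ref{lem:close:valid}, regard $\actL$ as a singleton atomic group. By part~\ref{lem:close:domAlgo}, $\dom{\tokMapSi} = \setenum{\ALGO}$, so the appropriate close rule fires in $\confGi$ --- \nrule{[Close-Pay]} if $\actL.\actRcv$ occurs in $\confGi$, \nrule{[Close-Open]} otherwise --- since all its remaining side conditions ($\actL.\actType = \actClose$, $\actL.\actSnd = \scriptE$, $\actL.\actTok = \ALGO$, $\timeOk{\leaseMapi,\timeTi}{\actL}$, $\actL \not\in \actDupSeti$) are exactly the hypotheses; thus $\confGi \step{\actL} \confGii$ for some $\confGii$, and \nrule{[TxG]} lifts this to $\confGi \stepL{\actL} \confGii$. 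Finally \nrule{[Net-TxG]} delivers the claimed step: the containment $\setofseq{\witWL} \subseteq \confKi$ is the hypothesis ``$\witWL$ in $\confKi$'', and the group authorization predicate for $\actL$ reduces to $\authver{\witWL}{\actL,0}$, which holds because the authorizer of $\actL$ is its sender $\scriptE$, a script, and $\sem{\actL}{\witWL}{\scriptE} = \true$ by hypothesis. Hence $(\confGi,\confKi) \stepTL{\auth{\witWL}{\actL}} (\confGii,\confKi)$.

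The crux, and the only place requiring real care, is part~\ref{lem:close:domAlgo}: it rests on an exhaustive rule-by-rule inspection of \Cref{fig:app-model} in which the key interaction is that of the authorization check of \nrule{[Net-TxG]} with the short-circuit semantics of $\andE$ (so that a contract whose first conjunct is $\actE{\actType} = \actClose$ can only ever authorize $\actClose$ transactions), together with the auxiliary bookkeeping invariant $\ALGO \notin \dom{\mngrMap}$. Once this is in place, part~\ref{lem:close:monotonic} and part~\ref{lem:close:valid} follow by the straightforward arguments above.
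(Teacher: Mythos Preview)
Your proposal is correct and follows essentially the same approach as the paper's proof, which is a brief sketch: the paper simply observes that since $\scriptE$ only authorizes $\actClose$ transactions it rejects $\actOptin$ and $\actGen$ (hence $\dom{\tokMapSi} = \{\ALGO\}$ and the balance is monotone), and for part~\ref{lem:close:valid} it only checks that the authorizer is $\scriptE$ and the authorization predicate holds. Your version is a considerably more rigorous elaboration of the same idea --- the rule-by-rule case analysis, the auxiliary invariant $\ALGO \notin \dom{\mngrMap}$ (needed to rule out $\actRevoke$/$\actBurn$ on $\ALGO$), and the explicit verification that a $\step{}$-step exists in part~\ref{lem:close:valid} all fill in details the paper leaves implicit.
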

\begin{proof}
  By construction,
  the script $\scriptE$ only allows $\actClose$ transactions,
  and in particular it does not accept $\actOptin$ or $\actGen$ transactions.
  Therefore, it must be $\dom{\tokMapSi} = \setenum{\ALGO}$,
  proving~\cref{lem:close:domAlgo}.
  Further, the value $\tokMapSi(\ALGO)$ can only increase
  until $\scriptE$ is closed, proving~\cref{lem:close:monotonic}.
  For~\cref{lem:close:valid},
  since by hypothesis $\scriptE$ evaluates to $\true$,
  then it must be $\actL.\actType = \actClose$.
  Then, by definition of $\tokMngr{}$, it must be
  $\tokMngr[\cdots]{\actL} = \scriptE$. 
  Therefore, $\authver{\witWL}{\actL}$ holds,
  since $\sem{\actL}{\witWL}{\scriptE} = \true$.
\end{proof}

Note that, if the script $\scriptE$ does not constrain the $\actLease$ field,
the hypothesis $\actL \not\in \actDupSeti$ in~\cref{lem:close:valid}
can always be satisfied 
by setting a value of $\actL.\actLease$ not used in $\actDupSeti$.
Further, if $\scriptE$ does not check signatures on $\actL$, doing this 
does not invalidate the possibility of appending $\actL$.
Therefore, in the following proofs we will omit checking that
the hypothesis $\actL \not\in \actDupSeti$ holds.

\begin{proofof}{th:oracle}

For item~\ref{th:oracle:a}, 
assume that $\stratS{\pmvA},\stratS{\pmv{o}} \in \stratSet$.
By hypothesis, 
$\pmv{o}$ has not sent a signature $\sig{\pmv{o}}{\script{Oracle},1}$.
By~\Cref{lem:close},
in any reachable state $\acct{\script{Oracle}}{\tokMapSi} \mid \cdots$
it must be $\dom{\tokMapSi} = \setenum{\ALGO}$,
and the value $\tokMapSi(\ALGO)$ can only increase
until $\script{Oracle}$ is closed.
Consequently, to prove item~\ref{th:oracle:a} it enough to show that 
there exists some transaction $\actL$ in $\runS$ such that 
$\actL \in \actLS[\pmvA]$, and 
$\actL$ is the first transaction in $\runS$ whose sender is $\script{Oracle}$.
By the monotonicity of the balance of $\script{Oracle}$ discussed before,
this implies the thesis, 
\ie that $\actL$ transfers at least $\tokMapS(\ALGO)$ to $\pmvA$.
% By contradiction, assume that $\runS$ does not contain 
% such a transaction $\actL$.
We have the following two cases:
\begin{enumerate}[(a)]

\item $\pmv{o}$ has sent a signature $s$ on $(\script{Oracle},0)$ at round $\timeT \leq \timeTmax$.
By contradiction, assume that $\script{Oracle}$ has not been closed in $\runS$
within round $\timeT$.
Since $\runS$ conforms to $\stratS{\pmvA}$, 
at round $\timeT$, $\pmvA$ has sent a transaction $\actLi \in \actLS[\pmvA]$
with $\actLi.\actTf = \timeT$ and witness $0 \, s$.
Since no transactions have closed $\script{Oracle}$,
the contract is still open, so let $\tokMapSi$ be its balance.
As observed above, $\dom{\tokMapSi} = \setenum{\ALGO}$,
and so $\actLi$ satisfies the premises of either rule 
\nrule{[Close-Open]} or \nrule{[Close-Pay]}.
Further, $\actLi$ satisfies the script $\script{Oracle}$, so 
by~\cref{lem:close:valid} of \Cref{lem:close}
it closes $\script{Oracle}$ at round $\timeT$ --- contradiction.
Therefore, $\runS$ contains at least a transaction 
which closes $\script{Oracle}$ within round $\timeT$: 
let $\actL$ be the first one of these transactions.
Since $\actL$ must satisfy the script $\script{Oracle}$, 
one of the following cases must apply:
\begin{enumerate}[(i)]

\item $\actL.\actTf > \timeTmax$: impossible, because $\actL$ has closed
$\script{Oracle}$ within $\timeT \leq \timeTmax$;

\item $\actL$ is validated by $\pmv{o}$'s signature on $(\script{Oracle},1)$:
impossible, since it would contradict the hypothesis of item~\ref{th:oracle:a};

\item $\actL$ is validated by $\pmv{o}$'s signature on $(\script{Oracle},0)$:
this implies that $\actL \in \actLS[\pmvA]$ --- which proves the thesis.

\end{enumerate}

\item \label{proof:th:oracle:1b}
  $\pmv{o}$ has \emph{not} sent a signature on $(\script{Oracle},0)$ at any round $\timeT \leq \timeTmax$.
By contradiction, assume that $\script{Oracle}$ has not been closed in $\runS$
within round $\timeTmax+1$.
Since $\runS$ conforms to $\stratS{\pmvA}$, at round $\timeTmax+1$, 
$\pmvA$ has sent a transaction $\actLi \in \actLS[\pmvA]$
with $\actLi.\actTf = \timeTmax+1$.
Since no transactions have closed $\script{Oracle}$,
the contract is still open, so let $\tokMapSi$ be its balance.
As observed above, $\dom{\tokMapSi} = \setenum{\ALGO}$,
and so $\actLi$ satisfies the premises of either rule 
\nrule{[Close-Open]} or \nrule{[Close-Pay]}.
Further, $\actLi$ satisfies the script $\script{Oracle}$, so 
by~\cref{lem:close:valid} of \Cref{lem:close}
it closes $\script{Oracle}$ at round $\timeTmax+1$ --- contradiction.
Therefore, $\runS$ contains at least a transaction 
which closes $\script{Oracle}$ within round $\timeTmax+1$: 
let $\actL$ be the first one of these transactions.
Since $\actL$ must satisfy the script $\script{Oracle}$, 
one of the following cases must apply:
\begin{enumerate}[(i)]

\item $\actL.\actTf > \timeTmax$ and $\actL.\actRcv = \pmvA$: 
  this implies that $\actL \in \actLS[\pmvA]$ --- which proves the thesis.

\item $\actL$ is validated by $\pmv{o}$'s signature on $(\script{Oracle},1)$:
impossible, since it would contradict the hypothesis of item~\ref{th:oracle:a};

\item $\actL$ is validated by $\pmv{o}$'s signature on $(\script{Oracle},0)$:
impossible, since it would contradict the hypothesis 
of case~\ref{proof:th:oracle:1b} of item~\ref{th:oracle:a}.

\end{enumerate}

\end{enumerate}

\medskip\noindent
For item~\ref{th:oracle:b}, assume that
$\stratS{\pmvB},\stratS{\pmv{o}} \in \stratSet$.
By~\Cref{lem:close},
in any reachable state $\acct{\script{Oracle}}{\tokMapSi} \mid \cdots$
it must be $\dom{\tokMapSi} = \setenum{\ALGO}$,
and the value $\tokMapSi(\ALGO)$ can only increase
until $\script{Oracle}$ is closed.
Consequently, to prove item~\ref{th:oracle:b} it enough to show that 
there exists some transaction $\actL$ in $\runS$ such that 
$\actL \in \actLS[\pmvB]$, and 
$\actL$ is the first transaction in $\runS$ whose sender is $\script{Oracle}$.
By the monotonicity of the balance of $\script{Oracle}$ discussed before,
this implies the thesis, 
\ie that $\actL$ transfers at least $\tokMapS(\ALGO)$ to $\pmvB$.

By hypothesis, $\pmv{o}$ has sent a signature $s'$ on
$(\script{Oracle},1)$ at round $\timeT \leq \timeTmax$.
By contradiction, assume that $\script{Oracle}$ has not been closed in $\runS$
within round $\timeT$.
Since $\runS$ conforms to $\stratS{\pmvB}$, 
at round $\timeT$, $\pmvB$ has sent a transaction $\actLi \in \actLS[\pmvB]$
with $\actLi.\actTf = \timeT$ and witness $1 \, s'$.
Since no transactions have closed $\script{Oracle}$,
the contract is still open, so let $\tokMapSi$ be its balance.
As observed above, $\dom{\tokMapSi} = \setenum{\ALGO}$,
and so $\actLi$ satisfies the premises of either rule 
\nrule{[Close-Open]} or \nrule{[Close-Pay]}.
Further, $\actLi$ satisfies the script $\script{Oracle}$, so 
by~\cref{lem:close:valid} of \Cref{lem:close}
it closes $\script{Oracle}$ at round $\timeT$ --- contradiction.
Therefore, $\runS$ contains at least a transaction 
which closes $\script{Oracle}$ within round $\timeT$: 
let $\actL$ be the first one of these transactions.
Since $\actL$ must satisfy the script $\script{Oracle}$, 
one of the following cases must apply:
\begin{enumerate}[(i)]

\item $\actL.\actTf > \timeTmax$: impossible, because
  $\actL$ has closed $\script{Oracle}$ within round $\timeT \leq \timeTmax$;

\item $\actL$ is validated by $\pmv{o}$'s signature on $(\script{Oracle},1)$:
  this implies that $\actL \in \actLS[\pmvB]$ 
  --- which proves the thesis.

\item $\actL$ is validated by $\pmv{o}$'s signature on
  $(\script{Oracle},0)$: impossible, since $\runS$ would not conform to 
  $\stratS{\pmv{o}}$, which requires that the oracle never signs
  both $0$ and $1$.
  \qed

\end{enumerate}
\end{proofof}

\begin{proofof}{th:htlc}

For item~\ref{th:htlc:a}, assume that
$\stratS{\pmvA} \in \stratSet$.
By contradiction, assume that $\runS$ does not contain a transaction
in $\actLS[\pmvA]$.
Since $\runS$ conforms to $\stratS{\pmvA}$, at round $\timeT$, user
$\pmvA$ has sent a transaction $\actL \in \actLS[\pmvA]$ with
$\actL.\actTf = \timeT$ and witness $\secret{\pmvA}{}$.
By construction, $\actL$ satisfies the script $\script{HTLC}$.
Hence, by~\Cref{lem:close}, 
to prove the thesis we show that $\script{HTLC}$ was not
closed earlier in $\runS$.
By inspecting the script $\script{HTLC}$, there are only two
conditions under which a previous transaction $\actLi$ can close the
contract:
\begin{enumerate}

\item $\actLi.\actTf > \timeTmax$: impossible, because
  $\actL.\actTf = \timeT \leq \timeTmax$;

\item $\actLi$ is validated using a preimage of $\hash{\pmvA}{}$: this
  implies that $\actLi \in \actLS[\pmvA]$ --- contradiction.

\end{enumerate}

\noindent
For item~\ref{th:htlc:b}, assume that
$\stratS{\pmvB} \in \stratSet$, and that $\secret{\pmvA}{}$
was not revealed before round $\timeTmax+1$.
By contradiction, assume that $\runS$ does not contain a transaction
in $\actLS[\pmvB]$.
Since $\runS$ conforms to $\stratS{\pmvB}$, at round $\timeTmax$, user
$\pmvB$ has sent a transaction $\actL \in \actLS[\pmvB]$ with
$\actL.\actTf = \timeTmax$.
By construction, $\actL$ satisfies the script $\script{HTLC}$.
Hence, by~\Cref{lem:close},
to prove the thesis we show that $\script{HTLC}$ was not
closed earlier in $\runS$.
By inspecting the script $\script{HTLC}$, there are only two
conditions under which a previous transaction $\actLi$ can close the
contract:

\begin{enumerate}

\item $\actLi.\actTf > \timeTmax$: this implies that
  $\actLi \in \actLS[\pmvB]$ --- contradiction.

\item $\actLi$ is validated using a preimage of $\hash{\pmvA}{}$: this
  implies, with overwhelming probability, that $\secret{\pmvA}{}$ was
  used as the preimage. In such case, it was revealed and occurs in
  $\runS$ at round $\timeTmax$ --- contradiction.

\end{enumerate}

\end{proofof}

\begin{proofof}{th:lottery}

\noindent
The proof for the fairness of the lottery protocol when implemented on
Bitcoin appeared in~\cite{Andrychowicz14sp,Andrychowicz16cacm}.
\zunnote{check cite, ma penso sia OK}%
This protocol relies on running, simultaneously, two timed commitments
(for which we use $\HTLC$) and a contract that transfers the bets to
the winner, which can be computed after the secrets have been revealed
(for which we use $\Lottery$).

We already proved the security of $\HTLC$ in~\Cref{th:htlc}, which
implies part 1 
\zunnote{la label non funziona}%
of the thesis: a honest participant $\pmv{p}$ will reveal her secret
in time with a transaction in $\actLSdec[\pmv{p},\pmv{q}]{secr}$.

For part 2, 
the argument is similar to the one for the original protocol.
Briefly put, if the other participant $\pmv{q}$ does
not to reveal $\secret{\pmv{q}}{}$ in time,
$\pmv{p}$ wins the lottery by timeout using a transaction
in $\actLSdec[\pmv{q},\pmv{p}]{tout}$.
If instead $\secret{\pmv{q}}{}$ is revealed in time, since
$\secret{\pmv{p}}{}$ was chosen in a uniformly random way,
independently from $\secret{\pmv{q}}{}$ (which has a different hash),
we have that the parity of $\secret{\pmv{p}}{} + \secret{\pmv{q}}{}$
is a random bit, uniformly distributed.
Hence, if $\pmv{p}$ won, her strategy makes her send a transaction in
$\actLSdec[\pmv{p}]{lott}$ and claim the pot.

\noindent%
In the argument above we neglected the case where $\pmv{q}$ reveals
another preimage than $\secret{\pmv{q}}{}$ as his secret, since that
can only happen with negligible probability.

\noindent%
Summing up, an honest $\pmv{p}$ wins with at least $1/2$ probability
(up-to a  negligible quantity).
\end{proofof}

\begin{proofof}{th:mutual-htlc}

  The proof is analogous to the one for HTLC in~\Cref{th:htlc}.
  
\end{proofof}

%%% Local Variables:
%%% mode: latex
%%% TeX-master: "main"
%%% End:

}
{}

\end{document}